\renewcommand{\paragraph}{%
  \@startsection{paragraph}{4}%
  {\z@}{\medskipamount}{-1em}%
  {\noindent\bfseries}%
}
\newcommand{\look}{\texttt{Look}\xspace}
\newcommand{\compute}{\texttt{Compute}\xspace}
\newcommand{\move}{\texttt{Move}\xspace}
\newcommand{\LCM}{{\tt LCM}\xspace}
\newcommand{\fsync}{\textsc{FSync}\xspace}
\newcommand{\ssync}{\textsc{SSync}\xspace}
\newcommand{\async}{\textsc{Async}\xspace}
\newcommand{\nesta}{\textsc{NestA}\xspace}
\newtheorem{obs}{Observation}
\newcommand{\bW}{\mathbb{W}} 
\newcommand{\bX}{\mathbb{X}} 
\newcommand{\bY}{\mathbb{Y}} 
\newcommand{\bZ}{\mathbb{Z}} 
\newcommand{\eps}{\varepsilon}
\newcommand{\tab}{\hspace{10pt}}
\title{Separating Bounded and Unbounded Asynchrony for  Autonomous Robots: Point Convergence with Limited Visibility
}
\titlerunning{Point Convergence with Limited Visibility}
\keywords{Mobile robots, Distributed geometric algorithms, Asynchronous computation}
\author{David Kirkpatrick}{Dept. of Computer Science, University of British Columbia, Vancouver, Canada}
{kirk@cs.ubc.ca} {}{}
\author{Irina Kostitsyna}{Dept. of Mathematics and Computer Science, TU Eindhoven, Eindhoven, the Netherlands}{i.kostitsyna@tue.nl}{}{}
\author{Alfredo Navarra}{Dept. of Mathematics and Computer Science, University of Perugia, Perugia, Italy}{alfredo.navarra@unipg.it}{}{}
\author{Giuseppe Prencipe}{Dipartimento di Informatica, University of Pisa, Pisa, Italy}{giuseppe.prencipe@unipi.it}{}{}
\author{Nicola Santoro}{School of Computer Science, Carleton University, Ottawa, Canada}{santoro@scs.carleton.ca}{}{}
\authorrunning{D. Kirkpatrick, I. Kostitsyna, A. Navarra, G. Prencipe, and N. Santoro}
\begin{document}

\maketitle

\begin{abstract}

Among fundamental problems in the context of distributed computing by autonomous mobile entities, one of the most representative 
and well studied
is {\sc Point Convergence}: given an arbitrary initial configuration of identical entities, disposed in the Euclidean plane, 
move in such a way that, for all $\eps>0$, a configuration in which the separation between all entities is at most $\eps$ is eventually 
reached and maintained.

The problem has been previously studied in a variety of settings, including full visibility, exact measurements (like distances and angles), 
and synchronous activation of entities. Our study concerns the minimal assumptions under which entities, moving asynchronously 
with limited and unknown visibility range and subject to limited imprecision in measurements,  can be guaranteed to converge in this way.
We present an algorithm that solves {\sc Point Convergence}, for entities in the plane, in such a setting, provided the degree of asynchrony is bounded: while any one entity 
is active, any other entity can be activated at most $k$ times, for some arbitrarily large but fixed $k$. This provides a strong positive answer to 
a decade old open question posed by Katreniak.

We also prove that in a comparable setting that permits unbounded asynchrony, {\sc Point Convergence} in the plane is impossible, contingent on the natural 
assumption that algorithms maintain the (visible) connectivity among entities present in the initial configuration. This variant, 
that we call {\sc Cohesive Convergence}, serves to distinguish the power of bounded and unbounded asynchrony in the control of autonomous 
mobile entities, settling at the same time a long-standing question whether in the Euclidean plane synchronously scheduled entities are 
more powerful than asynchronously scheduled entities.
\end{abstract}

\Copyright{D. Kirkpatrick, I. Kostitsyna, A. Navarra, G. Prencipe, N. Santoro}

\section{Introduction}
\label{intro}

\subsection{Framework and background}
The notion of {\em distributed computational system},
initially restricted to systems of
processing units interacting by
exchanging  messages through  a
fixed communication network,
over time 
has been broadened  to include 
not only other forms of communication 
(e.g., via shared memory), but also very
different types of computational entities
and operational environments, from
mobile agents to programmable particles,
from mobile wireless ad-hoc networks to robot swarms.

The main theoretical
questions in all these environments are the same: what are
the minimal conditions that allow the entities to solve certain fundamental
problems?  What is the computational relationship between 
different conditions?
In this paper, we address these questions for the 
standard ${\mathcal OBLOT}$ model
of systems of autonomous mobile entities. 

In this model, an arbitrarily large collection of identical computational entities, called robots, operate in 
one or higher dimensional
Euclidean space within which they can move without constraint. 
Robots have no memory of past configurations or actions, no global knowledge about the system beyond what they are able to sense within some fixed but possibly unknown visibility range,
and no means of direct communication with other robots. 
In a single active cycle, a robot plans, following a fixed built-in algorithm informed by what it currently senses alone, and then initiates a change of position, that may stop short of being fully realized.

Since the only input
available to and the only output produced by  individual active cycles 
are geometric positions, computations  in this model are very different from those in
other distributed universes (e.g., message-passing networks).
Informally then, 
computations in this model are just movements of the robots from 
their initial configuration toward  other goal configurations,
and (computational) problems are typically expressed as  
temporal geometric 
predicates on goal and intermediate configurations.



For example,  simple {\em pattern formation} problems  require
the robots  to  reach,
through a sequence of intermediate configurations that maintain some geometric invariant 
(such as collision avoidance or the preservation of visibility),
 a configuration where their
 positions satisfy the geometric predicate (the {\em pattern}) of that 
 problem 
 and remain stationary thereafter.
In contrast, for  simple {\em pattern convergence} problems,  the robots are required
  to form the pattern in the limit.

 
An algorithm {\em solves} a problem
  if, under its control, {\em starting from any valid initial configuration}, the robots reach configurations that satisfy its temporal geometric predicate
 {\em in all valid activation sequences}.
The validity of activation sequences is determined by
the level of {\em synchronization}
of the system;
as in other distributed environments 
(e.g., message-passing networks),
the primary distinction is between 
synchronous and  asynchronous scheduling models.

In {\em synchronous} systems (\ssync), activity is divided into discrete 
intervals, called rounds; 
in each round, activated robots
perform their activity
cycle 
in perfect synchronization. 
The only constraint is {\em activation fairness}: every robot is activated infinitely often.
This very general
model, first studied in
\cite{SY99},  is also sometimes called {\em semi-synchronous}
to distinguish it from the {\em fully-synchronous}
model
(\fsync),
the special case where all 
robots
are active in every round.

In contrast, in {\em asynchronous} systems (\async), first studied in~\cite{FPSW99},
robots are activated 
at arbitrary times (constrained again by activation fairness), independently
of the other robots,  and the duration of individual active cycles, though finite,
is unpredictable.

There are many studies on the feasibility of problems 
and the complexity of algorithms
in both the synchronous and asynchronous settings,
e.g., see~\cite{Ando1999,CDN18,CDN19,CDN18a,CFPS12,CP05,DDFN18,FPSV17,FPSW08,FPSW05,CDNx20,SugS96,SY99,YaS10}.
In spite of the long and intensive research efforts, 
a fundamental question  is still outstanding:
{\em Is there any difference in computational scope between  algorithms
working within the \ssync and \async scheduling models?}
More precisely:  Does the assumption of synchronous scheduling of activity make possible the 
solution of some problem that is not solvable in the asynchronous setting?

 Any problem  providing  a positive  answer would 
computationally {\em separate} the two settings.
The quest for a separator has motivated a large number of studies,
but so far no  computational difference
between \ssync and \async has been
determined. 
In this paper, we provide an affirmative answer to the question. 
In fact, our results demonstrate an even stronger separation: between scheduling models with unbounded asynchrony (\async) and bounded asynchrony (what we call $k$-\async).


\subsection{Related work} 

Previous work on the computational power of collections of autonomous mobile robots has concentrated on the study of pattern formation and pattern convergence problems.\footnote{Many other problems can be reduced to pattern formation ones (e.g., leader election, gathering) or require pattern formation as an integral step (e.g. flocking).}
The vast majority of works assume that the visibility range of
the robots is {\em unlimited}.

\subsubsection{Pattern formation}


The first question studied has been what patterns can be formed 
from a given initial configuration~\cite{SY99}. 
Assuming that robots share a common handedness (also called chirality),
a sequence of results  has shown that the set of formable patterns is determined solely
by the  symmetry of the initial configurations, 
\cite{CDNx20,SY99,YaS10}.
In other words, with chirality, there is no distinction between the patterns  formable
from an initial configuration in  the \ssync and \async models.
No  similar general results are known without chirality.

Two specific pattern formation problems exhibit the level of symmetry 
necessary 
for their solution  starting from all initial configurations,
namely {\sc Point Formation} 
and {\sc Uniform Circle Formation}. 
{\sc Point Formation} corresponds to a basic coordination task in swarm robotics, called  {\sc Gathering} or {\sc Rendezvous}, where all robots are required to meet at a common point (not fixed in advance).

Remarkably, these two pattern formation problems 
have been solved in the most general \async without chirality in~\cite{CFPS12} and~\cite{FPSV17}, respectively; hence they cannot serve to separate the \async and \ssync models. 
This, in part, stimulated the study of the 
less stringent, but still well-motivated, pattern convergence variant of {\sc Point Formation},
namely {\sc Point Convergence}.  
The {\sc Point Convergence} problem 
requires robots to move in such a way that, for all $\varepsilon > 0$, a configuration in which the separation between all robots is at most $\varepsilon$.  
{\sc Point Convergence} is discussed in more detail below.
 
The converse question of determining from which initial configurations 
it is possible to form any possible pattern,  called {\sc Arbitrary Pattern Formation}, was
 posed and studied in~\cite{FPSW08} for \async robots without chirality; 
the answer to this question has been provided 
in~\cite{CDN19},
 where it has been shown that the set of 
possible initial configurations coincides with those from which it is possible to solve 
another basic problem,  namely {\sc Leader Election}. 

 Related investigations include  the {\sc Embedded Pattern Formation} problem, where the pattern is given to the robots as visible points in the plain, solved by \async robots without chirality~\cite{CDN18c};
and   the problem  {\sc  Pattern Sequence} of forming   a
 sequence of patterns in spite of  obliviousness, solved  by  \ssync robots with chirality~\cite{DaFSY15}.

When the visibility range of the robots is {\em limited},
 the setting considered in this paper, relatively little is known.
 It  has been recently shown that 
 limited visibility drastically  reduces  the set of formable patterns 
by  \async robots~\cite{YuY13}. 

\subsubsection{{\sc Point Convergence} under unlimited visibility}

Without further restrictions to the capabilities of the robots, 
{\sc Point Convergence} has been solved,
in~\cite{CP05}, by means of the so-called \texttt{Go\_To\_The\_Centre\_Of\_Gravity} (\texttt{CoG}), algorithm. 
In this algorithm robots move toward the center of gravity of the configuration, the unique point where the weighted position vectors of all the robots' locations sum up to zero. 
Note that the center of gravity may change as robots move. 
The convergence rate of this algorithm is $O(n^2)$, measured in terms of the number of rounds requited to halve  
the diameter of the convex-hull of the $n$ robots' positions (where a {\em round} corresponds to the minimal time during which all robots have completed at least one activation cycle). A lower bound of $\Omega(n)$ on the convergence required to solve {\sc Convergence} was also provided.

In~\cite{CDFHKK11}, a new algorithm was demonstrated that solves {\sc Point Convergence} in an asymptotically optimal $\Theta(n)$ 
number of rounds.
It applies the \texttt{Go\_To\_The\_Center\_Of\_Minbox} (GCM) strategy, where the \emph{minbox} is the minimal axes aligned box containing all current robots' positions. 
In fact, when robots share the notion about the orientation of the coordinate system axes, the algorithm requires constant time for halving the diameter of the convex hull. 

In~\cite{PMMS17,PMSS21}, two algorithms 
were developed, taking into consideration 
further relaxations of the underlying model. 
First, 
occlusion is taken into account,
that is, a robot $\bX$ looking at a robot $\bY$ cannot see anything behind $\bY$ (along the ray from $\bX$ that passes through $\bY$). Second, the proposed algorithms do not depend on the actual distances of the robots but two main variants are considered: (1) a robot can only deduce whether another robot is closer than a given constant distance $\delta$ or not; (2) an orthogonal line agreement model is assumed in which robots only agree on a pair of orthogonal lines. For (1) the proposed algorithm makes a robot moving toward other robots that are at distance greater than $\delta$ from itself. In (2), instead, still an algorithm based on the minbox strategy is applied.

Other interesting works on {\sc Point Convergence} concern the possible inaccuracy of the measurements (in terms of distances or angles) of the robots. 
The first investigation 
taking this kind of error into account can be found in~\cite{CP08}. In particular, the authors show that still the \texttt{CoG} algorithm can solve {\sc Convergence} but in the \fsync setting, whereas it fails in the \async setting. The problem remains open for the \ssync setting. Furthermore, they introduce a modification of the algorithm (called \texttt{Restricted\_CoG}) that ensures convergence in the \ssync setting and in the 1-dimensional \async setting. The problem was left open for the 2-dimensional case of \async robots.


An extension of~\cite{CP08} in terms of solvability under 
imprecise
measurements can be found in~\cite{YIKIW12}. 
The considered errors are restricted to uniform behaviours, that is, the same percentage of error with respect to the measurements is assumed for all the distances and angles perceived during an observation by a robot. In such a uniform case, it is shown that the convergence problem can be solved in a significantly wider set of cases.  \\

\subsubsection{{\sc Point Convergence} under limited visibility}
Very few investigations have focused on the {\em limited visibility} setting,
and among these none considers measurement errors (see~\cite{BaM19}, for a recent survey).

A central issue that arises when dealing with limited visibility concerns \emph{connectivity}. 
When two robots 
are within distance $V$, the visibility range which may or may not be known to the robots, they are assumed to be visible to each other.
Any instantaneous configuration of robots induces a \emph{visibility graph} that is usually (and naturally) assumed to be connected. In that case the configuration itself is said to be {\em connected}.



In~\cite{FPSW05} it has been shown that
{\sc Point Formation}
is solvable in \async if the robots are endowed with global consistency of the local coordinate system.
A variant of {\sc Point Convergence} is the {\sc Collisionless Convergence} problem,
which requires the robots to  converge without two (or more) robots ever occupying the same point at the same time. This has been studied in~\cite{PPV15},
showing it can be solved in \async when all the robots are assumed to have a compass (hence they agree on the ``North'' direction), but they do not necessarily have the same handedness (hence they may disagree on the ``West'').


The pioneering work on {\sc Point Convergence} under limited visibility is
by Ando et al.~\cite{Ando1999}, in the \ssync scheduling model.
The authors propose and analyze an algorithm that leads to convergence, by means of \emph{cautious} moves, starting from any connected configuration of robots.
In their \texttt{Go\_To\_The\_Centre\_Of\_The\_SEC} algorithm,
each active robot takes a snapshot of its surrounding (up to a known limited visibility range $V$), computes the center of the \emph{Smallest Enclosing Circle} (SEC) of the perceived robots,\footnote{The smallest circle that encloses all the robots perceived during the \look phase at distance not greater than $V$.} and then makes a move toward the center of the SEC for a distance that guarantees to preserve the visibility of these perceived robots, regardless of the movements of other robots following the same protocol. 
(Quite recently, Braun et al.~\cite{BCF20} considered
an extension of the \texttt{Go\_To\_The\_Centre\_Of\_The\_SEC} algorithm 
applied in the three dimensional Euclidean space, assuming a 
\fsync and continuous time scheduling model.)

In subsequent work~\cite{K11}, Katreniak introduced the
more powerful $k$-\async  
scheduling model, that permits robot activations with bounded asynchrony,
and presents a solution algorithm for {\sc Point Convergence} in the 1-\async model (a solution under 
a more restricted version of the 1-\async scheduler was presented earlier in~\cite{LMA07}).
The resolution of {\sc Convergence} in the $k$-\async model, for $k>1$, was left as an open problem.


The algorithms of Ando et al. and Katreniak,
are reviewed in more detail in Section~\ref{sec:NewAlgorithm}.
It is worth pointing out here that these, like all known algorithms that deal with limited visibility, 
have the property that
if two robots are initially within visibility range of each other,
they remain so at all times. 
Thus they solve a more restricted variant of  {\sc Point Convergence}, that we  call 
 {\sc Cohesive Convergence},  that includes this property as a invariant.  
It follows that, when solving  {\sc Cohesive Convergence}, an initially connected robot configuration will remain connected indefinitely. 
%

%

\subsection{Main contributions}

The focus of this paper,
as in  those of Ando et al.~\cite{Ando1999} and Katreniak~\cite{K11}, concerns the {\sc Cohesive Convergence} problem for identical autonomous robots with bounded visibility.
This requires any collection of such robots, starting from an arbitrary connected configuration, to move, under the control of a possibly adversarial scheduler, in such a way that, for all $\eps>0$, the robots are guaranteed to reach and maintain a configuration in which all robots lie within a circle of radius at most $\eps$. 
Our primary contributions are threefold: (i) a novel moderately error-tolerant algorithm that succeeds in the $k$-\async scheduling model for any fixed $k$, an environment that permits scheduling of robot activity with an arbitrarily large but bounded degree of asynchrony, (ii) an analysis that allows us to go well beyond what is possible with previous techniques that depend on the assumption of a significantly more restricted scheduling environment (\ssync or $1$-\async), shedding new light on the power of bounded asynchrony, and (iii) a family of robot configurations that demonstrates the impossibility of solving {\sc Cohesive Convergence} in \async, the fully asynchronous scheduling environment, provided algorithms are required to tolerate a modest amount of imprecision in perception (significantly less than what can be tolerated by our $k$-\async algorithm).

Even in the same scheduling environment, our algorithm (described in its basic form in Section~\ref{sec:NewAlgorithm}, with extensions, including error-tolerance, and higher-dimensional generalizations, outlined in Section~\ref{sec:Extensions}) is slightly simpler in formulation and provides a more general solution than its predecessors developed in~\cite{Ando1999,K11,LMA07}. 
In particular, the motion function of a robot depends only on the directions to the pair of visible neighbours, among those that are relatively distant, that define the maximum angular range. Furthermore,  distance and angle measurements to visible neighbours are assumed to be accurate only to within some limited imprecision, and robot motion is subject to some limited error in both distance and angle. 
Finally, our algorithm exhibits a degree of uniformity not evident in its predecessors: specifically, (a) dependence  on the visibility range $V$ is not built into the algorithm, and (b) our algorithm is simply formulated to work in an environment with a base level of asynchrony; to function in an environment that permits a higher degree of asynchrony, captured by a parameter $k$, one can simply scale the motion function used by our algorithm in its base formulation by a factor of $1/k$.

As with the analysis of algorithms in~\cite{Ando1999,K11,LMA07}, our proof of convergence involves (i) an argument that the connectivity of the initial configuration is preserved, and (ii) an argument that congregation, i.e. convergence to an arbitrarily small region, eventually takes place.  
Our connectivity preservation analysis (presented in Section~\ref{sec:VisibilityPreservation}) is complicated by the fact that, harnessed by an adversarial scheduler, even the most basic level of asynchrony makes it impossible for a robot $\bX$ to accurately discern the influence of its own location on the motion of an active neighbour $\bY$. 
This is further compounded by the fact that in environments with a higher degree of asynchrony $\bY$ might have seen $\bX$ many times during its current motion. 
In order to help isolate the most salient features of our analysis in asynchronous environments, we first introduce a restriction on the $k$-\async environment, requiring intersecting activity intervals to nest (one entirely within the other). By staging our analysis in this way, we provide a clearer picture of the extent to which the difficulties associated with asynchrony in our setting arise from (possibly lengthy) chained, as opposed to nested, activations. 
To demonstrate the preservation of visibility under arbitrarily long chains of activations we introduce a novel
backward reachability analysis that fully exploits the relative simplicity of our algorithm’s motion function. 
For our congregation argument (presented in Section~\ref{sec:IncrementalConvergence}), we start by observing, as was the case in~\cite{Ando1999,K11,LMA07}, that convergence to a convex configuration with bounded diameter follows directly from the fact that the convex hull of successive configurations are properly nested. 
Our argument that the limiting diameter is zero differs from, and is arguably simpler than, that used in these earlier works. 
The argument exploits a kind of hereditary property that ensures that, after some point in time, robot locations become relatively stable in the sense that once a robot has vacated a small neighbourhood of an extreme point of the configuration it must remain outside.

Demonstrating the impossibility of solving {\sc Cohesive Convergence} in any environment that assumes exact measurement and control seems daunting, since such assumptions open up the possibility of encoding information, perhaps history, in the location of robots. 
However, we show (in Section~\ref{sec:ImpossibilityA}) that algorithms that control more realistic robots that operate with even a very modest amount of imprecision in measurements in a scheduling environment with no bound on asynchrony (other that a fairness condition that ensures that no robot is permanently locked out of activity), cannot solve {\sc Cohesive Convergence}. 
This is demonstrated by a family of configurations with the property that for any algorithm there is a configuration in the family that can be forced to become disconnected into two linearly separable components. 
Interestingly, the adversarial scheduler that achieves this uses only nested activations (of necessarily unbounded depth). 

In summary, we prove that, considering algorithms that tolerate a very modest amount of measurement imprecision,  
 {\sc Cohesive Convergence} is {\em solvable}
in the $k$-\async scheduling model,  for any fixed $k$,  providing a strong positive answer to a question left open in~\cite{K11}. 
However, under even slightly weaker assumptions, the same problem is shown to be 
 {\em unsolvable} in the \async scheduling model.
In this sense,  {\sc Cohesive Convergence} provides a separation between scheduling models with bounded and unbounded asynchrony,
and ipso facto between the \ssync and \async scheduling models.
Our work raises several interesting questions, suitable for future research. Several of these are highlighted in our concluding section (Section~\ref{sec:Conclusion}). 


\section{Model}
\label{sec:model}

We consider the standard  ${\mathcal OBLOT}$ model of 
distributed systems of  mobile  entities (e.g.,  see~\cite{FPS12}). 
The system is composed of a set  
$\mathcal{R}=\{ \bX^1,\ldots ,\bX^n \}$ of $n \geq 1$  
oblivious computational  entities, called {\em robots}, 
 that reside  in  $d$-dimensional  Euclidean space
${\mathbb R}^d$,  $d\geq 1$, in which they can move,
 and operate in \look-\compute-\move activity cycles.\footnote{If $P$ and $Q$ are points in 
$\mathbb{R}^d$, we denote by $\overline{PQ}$ the line segment joining $P$ and $Q$, by $|PQ|$ the length of this segment, and by $\overrightarrow{PQ}$ the ray from $P$ to $Q$.}

\subsection{Robots}

We treat robots as dimensionless 
entities in , 
with $\mathbb{R}^2$; a discussion of extensions to higher dimensional spaces is left to Section~\ref{sec:Extensions}.  We use $X$ to denote the generic location of robot $\bX$; when needed $X(t)$ is used to denote the location of $\bX$ at a specific time $t$.
The 
multiset 
${\mathcal C}(t) = \{X(t): X\in {\mathcal R}\}$ specifying the 
locations of the robots at time $t$
is called the {\em configuration} of the system at time $t$.

Robots are equipped with sensorial devices that allow them
to observe the positions
of the other robots within 
a  fixed, but possibly 
unknown 
range $V$. 
More precisely, the  {\em visible region} of robot
$\bX$ at time $t$ is the  closed point set
{\em Vis}$(\bX,t)=\{Y : |X(t) Y| \le V\}$.
We say that a robot $\bY$  is 
a \emph{neighbour} of robot $\bX$ at time $t$ 
if $Y(t)$  belongs to {\em Vis}$(\bX,t)$.
%
The  associated 
{\em visibility graph}
 at time $t$ is
the undirected graph $G(t) = ({\mathcal R}, E(t))$, 
where $(\bX,\bY) \in E(t)$ if and only if 
$|X(t) Y(t)| \le V$.
A configuration of robots is said to be {\em connected} when its associated visibility graph is connected.



Robots 
are provided with a persistent read-only memory containing
a control algorithm 
as well as
a local volatile memory  for its computations, that could involve exact real arithmetic.
Robots are endowed with unrestricted motorial capabilities; 
that is, they are free to move in any direction.

Robots are  (i) {\em  autonomous}: that is,  they operate  without a central control or external supervision, 
(ii) {\em identical}:
that is,   they are  indistinguishable by their appearance, 
they do not have  
distinct identities that can be used during the computation, and
they all have and execute the same  
algorithm; and
(iii) {\em silent}: they have no means of direct
communication of information to other robots,
so any communication occurs in a totally implicit manner, by
moving and by observing the  positions of the robots within visibility range.

\subsection{Activity cycles and obliviousness}


The {behaviour} of each robot can be described as the alternation  
of finite length \emph{activity} intervals with  finite length \emph{inactivity} intervals. 
Each active interval consists of a sequence of three phases: \look, \compute, and \move, 
called  a \look-\compute-\move (\LCM) activity cycle. 
The operations performed by each robot $\bX$ in each phase are
as follows.

\begin{enumerate}
\item  \look. At the start of an activity interval, 
the robot observes the world within its visibility range.
The result is  an instantaneous {\em snapshot} 
of the positions\footnote{If the robots are endowed 
with multiplicity detection,  the snapshot  indicates also
whether there are two or more robots co-located at a point; otherwise such multiplicities are perceived as a single robot.}  
occupied by visible robots at that time;
these positions are expressed within a
local (i.e., private) 
coordinate system. 
The private coordinate systems 
of different robots at the same time, or the same robot at different times, need not be consistent; hence from a global point of view, 
the robots are {\em disoriented}.

\item  \compute. Using the snapshot as an input, 
the robot  executes  its built-in algorithm  ${\mathcal A}$,
the same for all robots, which determines
 an intended {\em destination} point.

\item  \move. If the destination point coincides with its current location,
$\bX$ is said to perform the \emph{nil} movement;
otherwise it moves toward the computed destination along a straight trajectory. 
The \move phase (and with it, the activity cycle) of a robot might end before it reaches its
destination.
\end{enumerate}

\noindent The duration of each \compute and \move phase
is assumed to be finite,\footnote{Note that 
also the \compute phase could be assumed to be instantaneous, without any loss of generality.} while the \look phase is assumed to be instantaneous. During a \move phase a robot is said to be {\em motile} (not necessarily moving, but capable of moving), otherwise it is {\em immotile}.



The local memory of the robots is volatile: at the end of a cycle, all contents are erased.
In other words, 
the robots are {\em oblivious}: at the beginning of each activity cycle, 
a robot has no memory of past actions and computations, 
and the computation is based solely on what determined in the current cycle.

\subsection{Adversarial control and conditions}

In the environment in which the robots operate,
there are several factors and conditions, in addition to the initial configuration, that are beyond the control of their algorithm but nevertheless 
impact their actions and, ultimately, their worst-case behaviour.  
These factors, which are assumed to be  under  adversarial control,
 are discussed in the following.

\subsubsection{Activation and Synchronization}

The mapping of robot activities to time, including activation/deactivation times, as well as the duration of \compute and \move phases within each activity cycle, and the rate of motion within each \move phase,
is determined by 
a
{\em scheduler} that is constrained only by the understood 
level (or model) of system synchronization.
In all models, the scheduler
must respect {\em activation fairness}: 
for each robot $\bX$ and each time $t$, there exists a time $t'>t$ where $\bX$ will be active.

 The main models are the synchronous   and the asynchronous ones: 

\begin{itemize}
\item
In the {\em synchronous} (also called {\em semi-synchronous}) model (\ssync)
time is logically divided into a sequence of   \emph{rounds}. 
In each round,  a {\em subset} of the robots is activated, and they 
perform each phase of their  \LCM cycle simultaneously, 
terminating their activity interval by 
the end of the round. 
The choice of which robots are activated in a round is made by the
scheduler.
A special case of  \ssync is the {\em fully-synchronous} model (\fsync)
where   all robots are activated in every round.

\item In the {\em asynchronous} model (\async), each robot is activated at arbitrary times, independently of the others;  in each activation interval, the duration of each \compute and \move phase is finite but unpredictable, and might be different in
different cycles.
The duration of the  \compute and \move phases of each cycle as well as the decision of when to
 activate a robot is controlled by the 
 scheduler.

\end{itemize}

\noindent An  illustration of the difference between the activations and durations
in \fsync, \ssync, and \async is shown in  Figure~\ref{newfig:models-a}.

\begin{figure}[t]
\centering
\includegraphics[page=1]{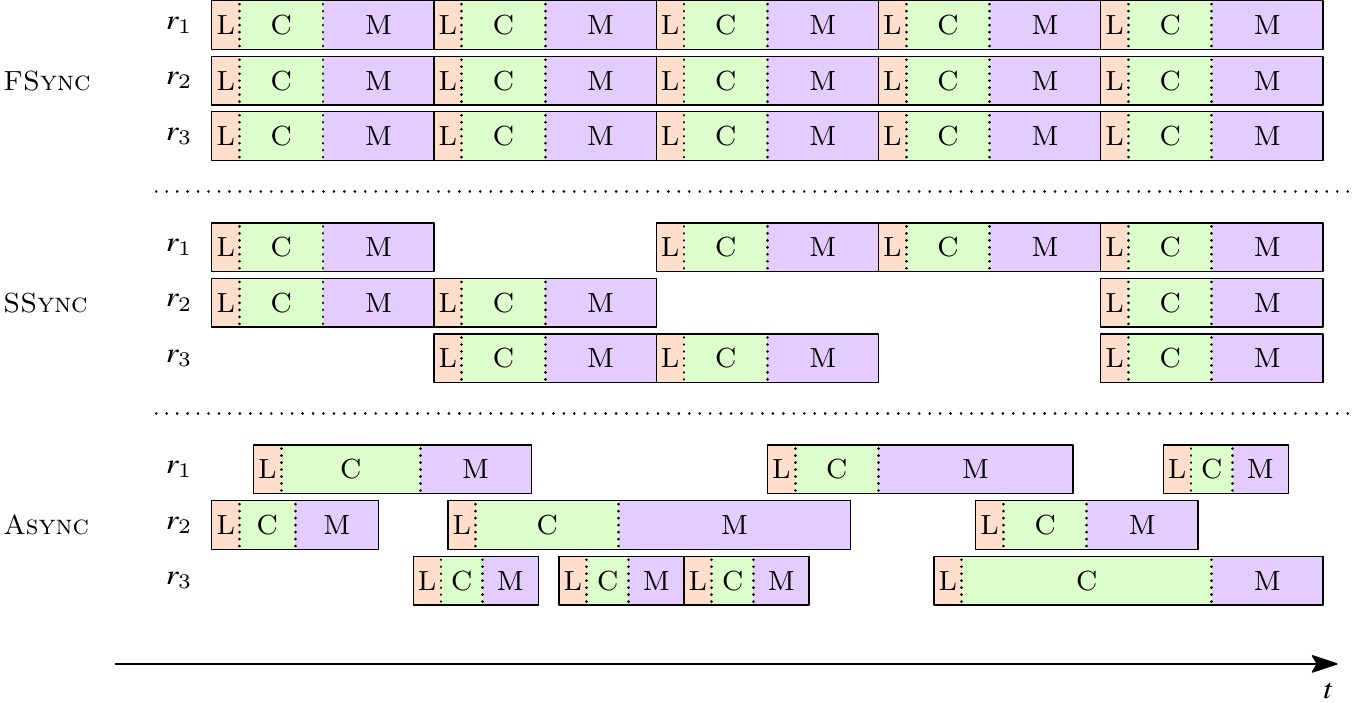}
\caption{ The execution model of active intervals for each of \fsync, \ssync, and \async  robots. Inactive intervals are implicitly represented by empty time periods.}\label{newfig:models-a}
\end{figure}

\begin{figure}[htbp]
\centering
\includegraphics[page=2]{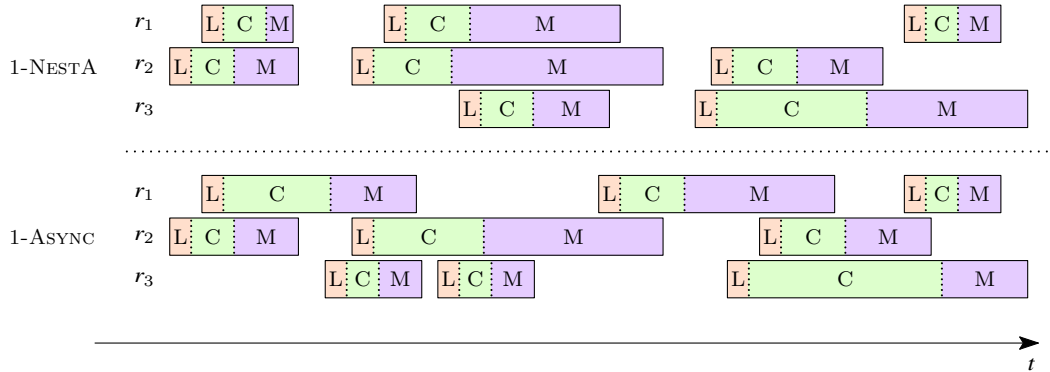}
\caption{The execution model of active intervals for $1$-\nesta and $1$-\async robots. }
\label{newfig:NewModels}
\end{figure}

The asynchronous model licenses an adversarial scheduler to depart from the synchronous model in two ways: (i) activation intervals of different robots can overlap arbitrarily; and (ii) one robot can be activated arbitrarily many times during a single activation of another robot. 
With this in mind, we first define the 
{\em nested-activation} model (\nesta) that restricts the activity intervals of all pairs of robots to be either 
disjoint or nested (no active interval  contains exactly one endpoint of another active interval).
We  will focus more precisely on  two natural restrictions of the \nesta and \async models, that lie between the \ssync and \async models (see Figure \ref{newfig:NewModels}):

\begin{itemize}
\item \emph{$k$-Nested-Activations} ($k$-\nesta): 
Activity intervals of any pair of robots are either disjoint or nested, and in addition, at most $k$ activity intervals of one robot can be nested within a single activity interval of another.
\item \emph{$k$-Asynchronous} ($k$-\async): As in \async, robots are activated independently and the duration of each activity interval is arbitrary (but finite). However, at most $k$ activations  of one robot can occur within a single active interval of another.
\end{itemize}



\subsubsection{Rigidity of motion}

In the \move phase, there is a constant $\xi \in (0, 1]$ (unknown to the robots, and possibly related to the visibility range $V$) such that robots are guaranteed to move at least fraction $\xi$ of the way toward their intended destination. The associated motion is said to be {\em $\xi$-rigid}. (If $\xi = 1$, motion is simply said to be {\em rigid}).%
\footnote{In the literature, non-rigid motion is usually qualified by an assumption that at least some non-trivial distance is traversed. In our setting, where it makes no sense for an algorithm to specify a motion of length that significantly exceeds the visibility range, this assumption can be subsumed by the $\xi$-rigid assumption, for suitable $\xi$.}



\subsubsection{Measurement imprecision and coordinate system distortion}

In error-tolerant settings,  
the accuracy of the measurements (distances and angles) made by a robot during its \compute phase, as well as
that of 
a robot's ability to realize its intended trajectory 
during its \move phase,
are considered to be subject to adversarial control. 

One natural way in which relative error in angle measurements could arise is from a small symmetric distortion of a robot's local coordinate system: a continuous bijection $\mu: [0, 2\pi]) \rightarrow [0, 2\pi)$ with $\mu(\theta + \pi) = \mu(\theta) + \pi$, for all $\theta \in [0, \pi)$.  
    Such a distortion would result in a robot's local coordinate system deviating slightly from any rigid transformation of an unknown global coordinate system, yet still permit consistent orientation with respect to neighbours.


\subsection{Problems and solutions}

 Since  robots 
 can only observe the positions of others and  move,
 in the ${\mathcal OBLOT}$ model  a {\em problem} to be solved (or, a task
 to be performed) is expressed in terms of  a   {\em temporal geometric predicate},
 which  the configurations formed by the robots' positions
from some time on  must satisfy.

The {\sc Point Convergence} problem 
requires the robots, starting from an arbitrary connected configuration
\footnote{It is possible to define Convergence in a meaningful way that applies to unconnected initial configurations as well. We will discuss this further in Section~\ref{sec:Extensions}.}
where they are all inactive,
 to become arbitrarily close. 
 %
 More precisely, {\sc Point Convergence} is the problem
 defined by the temporal geometric predicate: 
 \[
 {\tt Convergence} \equiv
\forall \eps\in {\mathbb R}^+, 
\exists t:  \forall t'\geq t, \forall \bX,\bY\in\mathcal{R},  
|X(t') Y(t')|\leq\eps\,.
\]

A more constrained version is the
the {\sc Cohesive Convergence} problem that additionally requires
that, 
if two robots initially are within the  visibility region
of each other, they remain so at all times;
 i.e.  it is defined by the temporal geometric predicate:
\[
 {\tt CohesiveConvergence} \equiv
{\tt Convergence} \wedge  (\forall t\geq 0,  E(0)\subseteq E(t))\,.
\]
Note that even though {\sc Cohesive Convergence} is strictly more constrained than {\sc Point Convergence}: all known algorithms for {\sc Point Convergence} also solve {\sc Cohesive Convergence}.


 An algorithm 
 ${\mathcal A}$  is said to {\em solve} the {\sc Point Convergence} (or {\sc Cohesive Convergence}) problem if it satisfies the corresponding predicate, starting from \emph{any} valid initial configuration of robots, under \emph{any} 
 (possibly adversarial) scheduler that respects the associated synchronization model.  
 If every \move is chosen in such a way that it cannot transform a static connected configuration into a disconnected configuration, the algorithm is said to be \emph{coherent}.

\section{A new algorithm for convergence}\label{sec:NewAlgorithm}


We begin this section with a review of the algorithms of~\cite{Ando1999} and~\cite{K11}, emphasizing the features common to our algorithm, and features on which our algorithm differs. This is followed by a detailed overview of our new approach.

\subsection{The convergence algorithms of Ando et al. and Katreniak.}
%
%
The algorithm of Ando et al.~\cite{Ando1999} is the following: upon activation, each robot $\bX$
\begin{itemize}
   \item locates, within its local coordinate system, all of the other robots within its visibility range;
    \item computes a \emph{safe region} for motion with respect to each of its  neighbours, and
     \item computes 
     the center of a minimum enclosing ball  of all the robots within its visibility range, and
    moves as far as possible towards 
    this center
    while remaining inside all of the individual safe regions.
\end{itemize}

The algorithm of Ando et al. is formulated in the \ssync scheduling model. It assumes:
\begin{itemize}
    \item robots are points that execute their full planned motions instantaneously, i.e. rigid collision-free motion;%
    \footnote{The assumption of rigid motion, though standard at the time of~\cite{Ando1999}, could be relaxed without impacting the correctness of the algorithm.}
    \item knowledge of the common visibility radius $V$ is built into the algorithm;%
    \footnote{Again, this assumption is stronger than necessary: it would suffice to replace $V$ by the distance to the furthest visible neighbour.}
    \item there is no error in perception or in the realization of planned motion.
\end{itemize}

The algorithm of Katreniak~\cite{K11}  has many similarities to that of~\cite{Ando1999}: upon activation, each robot $\bX$
\begin{itemize}
   \item locates, within its local coordinate system, all of the other robots within its visibility range, 
    \item computes a \emph{safe region} for motion with respect to each of its  neighbours, 
    and
     \item moves as far as possible while remaining inside a composite safe region that respects all of the individual safe regions.
\end{itemize}

Katreniak's algorithm is formulated in the $1$-\async model. It assumes:
\begin{itemize}
    \item knowledge of a common visibility radius $V$ is not assumed; instead each robot $\mathbb Z$ works with a (changing) lower bound $V_{\mathbb Z}$ determined by the distance to its furthest visible neighbour;
    \item there is no error in perception;
    \item the planned motion might not be fully realized, subject to a natural {\em progress} condition, that ensures eventual convergence.
\end{itemize}

The most apparent difference between the algorithms of Ando et al. and Katreniak lies in the specification of their safe regions. 
For a robot $\mathbb Y$ located at $Y_0$ viewing a robot $\mathbb X$ located at $X_0$,
Ando et al. specify a safe region as a disk with radius $V/2$ centred at the midpoint between $X_0$ and $Y_0$.
Katreniak's safe region is formed by the union of two disks, one with radius $|X_0Y_0|/4$ centred at the point $(X_0 + 3 Y_0)/4$, and the other with radius $(V_{\mathbb Y}-|X_0Y_0|)/4$ centred at $Y_0$ 
(see Figure~\ref{fig:SafeRegionA+K+New}).

\begin{figure}[t]
\centering
\includegraphics[page=1]{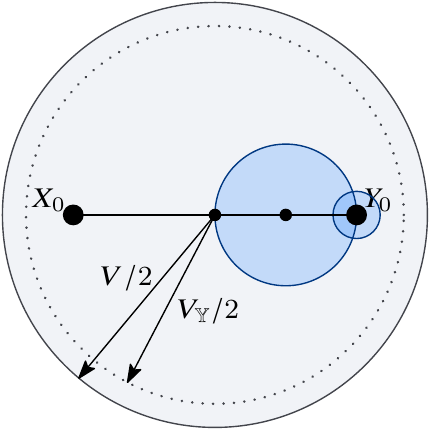}
\hfill
\includegraphics[page=2]{newfig/safe-region.pdf}
\hfill
\includegraphics[page=3]{newfig/safe-region.pdf}
\caption{Basic safe region for motion of robot $\bY$ at location $Y_0$ (with respect to visible robot $\bX$ at location $X_0$), as specified by Ando et al. (grey), by Katreniak (blue), when $\bX$ is a distant neighbour of $\bY$ (left) and when it is a close neighbour (center); basic safe region in our scheme (green), assuming $\bX$ is a distant neighbour of $\bY$ (right).}
\label{fig:SafeRegionA+K+New}
\end{figure}

The correctness of 
the convergence algorithms in both 
\cite{Ando1999} and~\cite{K11} 
rest on two observations: 
(i)~\emph{(initial) visibility preservation}: the choice of safe region guarantees that all robot pairs that are mutually visible in the initial configuration will remain mutually visible thereafter,
and (ii)~\emph{incremental congregation}: the trajectories of robots following the 
algorithm
exhibit a notion of ``progress'' towards convergence (measured in terms of the shrinking  of the convex hull of the robot locations). In both cases, the argument is complicated by the need to demonstrate that the limit of convergence is a single point.

\begin{figure}[tbp]
\centering
\includegraphics{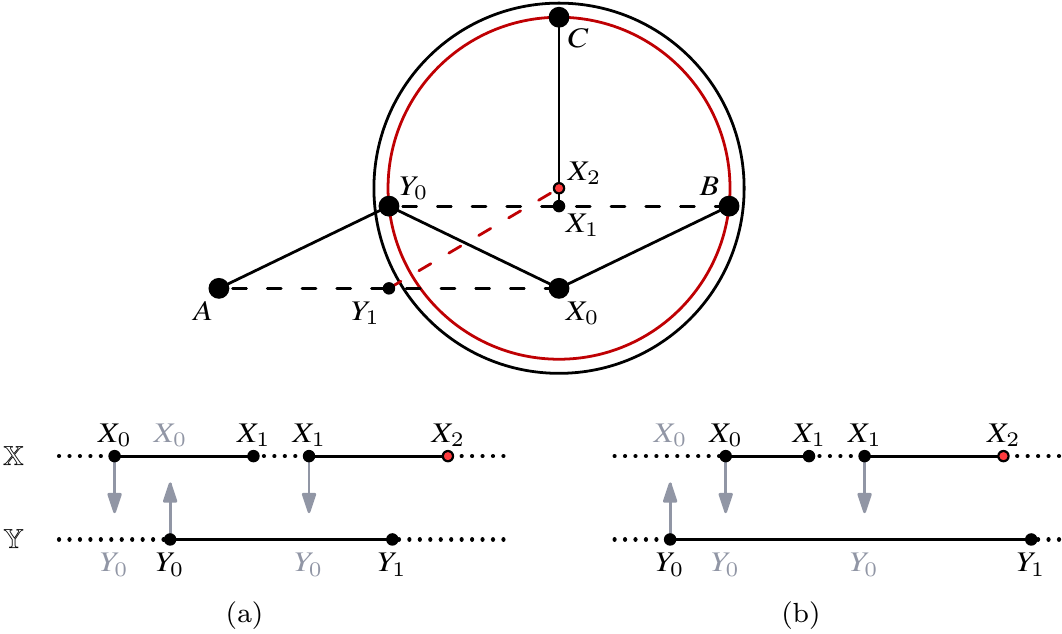}
\caption{An example where the (unmodified) Ando algorithm leads to separation in the (a)~1\nobreakdash-\async model and (b)~2-\nesta model. The example consists of five robots: $\mathbb{A}$, $\mathbb{B}$, and $\mathbb{C}$ being stationary through the example (the scheduler does not activate these robots); and $\bX$ and $\bY$ being activated according to the two activation timelines on the bottom. Horizontal axis represents the time, grey arrows designate the snapshots, and the corresponding positions of the robots $\bX$ and $\bY$ are also shown in grey. The red circle, centered at $X_2$, is the smallest enclosing circle of the points $X_0$, $Y_0$, $B$, and $C$. The black circle is centered at $X_2$ and has radius $V$. In both cases, robot $\bX$ moves from location $X_0$ to location $X_2$, and robot $\bY$ moves from $Y_0$ to $Y_1$, leading to the distance between the two robots being strictly greater than $V$.}
\label{newfig:AndoFailure}
\end{figure}

Note that, even when $k=1$, both the  $k$-\nesta and the $k$-\async models provide modest generalizations of the \ssync model. 
Specifically,
by definition, any algorithm that guarantees convergence in the $1$-\nesta (or $1$-\async) model will do the same in the \ssync model.
However,
\begin{enumerate}[(i)]
\item In general, Ando's algorithm (unmodified) does not succeed in the $1$-\async model, even if it is able to assume that moves are rigid and instantaneous 
(so the scheduler cannot pause or stop robots at intermediate points in their planned trajectory);\footnote{Note that Katreniak makes a similar observation, using an example that exploits non-rigid motion.}
see Figure~\ref{newfig:AndoFailure} (a);
\item The same construction, with a slightly modified timeline (Figure~\ref{newfig:AndoFailure} (b)), shows that Ando's algorithm (unmodified) does not succeed in the $2$-\nesta model;
\item It is also not difficult to show that Katreniak's algorithm (unmodified) does not succeed in the $k$-\async scheduling model, when $k$ is sufficiently large.
\end{enumerate}


\subsection{Overview of our algorithm.}
%
In our algorithm, as in earlier schemes, each robot $\bY$ starts an active interval by 
locating, within its local coordinate system, all of the other robots within its visibility range, referred to as \emph{neighbours}. 
It is not assumed that the visibility radius $V$ is known. Instead, in each active interval, robot $\bY$ computes a (tentative) lower bound $V_{\bY}$ on  $V$, provided by the distance to its current furthest neighbour. Neighbours whose distance is greater than $V_{\bY}/2$ are referred to as \emph{distant} neighbours of $\bY$; others are  \emph{close} neighbours. Note that close and distant are relative and robot-specific notions; in particular, $\bY$, by definition, always has at least one distant neighbour. Furthermore, $\bX$ could be a close neighbour of $\bY$ while $\bY$ is a distant neighbour of $\bX$.


Robot $\bY$ continues by 
determining a safe region for motion with respect to each of its neighbours $\bX$.
Safe regions are designed to ensure that, despite the fact that $V$ is unknown, the connectivity of the 
initial robot configuration
will not be lost; in particular, if robots $\bX$ and $\bY$ are mutually visible in their initial configuration (i.e. $|X(0) Y(0)| \le V$), 
when both are assumed to be immotile then, provided both robots 
continue to confine their movement to the safe regions for motion with respect to the other,
their mutual visibility will be maintained thereafter. 
Even though acquired visibility might be subsequently lost, unlike the schemes of~\cite{Ando1999} and~\cite{K11}, a form of acquired visibility preservation can be shown to hold, which suffices to ensure that there is a point in time after which  if robots $\bX$ and $\bY$ are even momentarily not mutually visible, then $\bX$ and $\bY$ remain separated by distance at least $V/4$. 
The specification of safe regions, together with their properties that are exploited in subsequent visibility preservation and congregation arguments, is provided in the next subsection; the details of our visibility preservation arguments in models with bounded asynchrony, are developed in Section~\ref{sec:VisibilityPreservation}.

Finally, $\bY$ plans a motion to a target destination contained within the intersection of the safe regions associated with all of its neighbours. 
The target destination in turn is designed to ensure that the convex hull of the robot locations shrinks monotonically, and converges to a point. The specification of the target destination, together with details of this congregation argument, are presented in Section~\ref{sec:IncrementalConvergence}.


\subsubsection{Safe regions: specification and properties.}
The \emph{basic safe region} for motion of robot $\bY$, located at $Y_0$, with respect to a distant neighbour $\bX$ at location $X_0$, denoted $S^{V_{\bY}/8}_{Y_0}(X_0)$, is a disk,
with radius $V_{\bY}/8$, centered at the point at distance $V_{\bY}/8$ from $Y_0$ in the direction of $X_0$
(cf. Figure~\ref{fig:SafeRegionA+K+New}).

Note that (i)~in order to respect the basic safe region of any distant neighbour, of which there must be at least one, a robot can plan a motion of length at most $V_{\bY}/4$. 
(In fact, as will become clear, when the actual destination point, respecting the safe regions of all visible neighbours, is specified, a robot will never plan a motion of length greater than $V/8$.)
Hence, the planned motions of close neighbours cannot possibly lead directly to a separation exceeding $V$, 
(ii)~unlike the algorithms in~\cite{Ando1999} and~\cite{K11}, our basic safe regions, defined for distant neighbours only, depend only on the direction of that neighbour,
and (iii) if a robot $\bX$ is located in the convex hull of its distant neighbours then the intersection of the safe regions with respect to the distant neighbours contains its current location only, so to respect all of these safe regions it must remain stationary.\footnote{It is worth noting that the choice of $V_{\bY}/2$ as the definition of ``close'' is somewhat arbitrary. 
Similarly, the size of the safe region associated with distant neighbours has been chosen to have radius $V_{\bY}/8$ mostly for convenience. Anything less than this would certainly work as long as it is at least some positive constant. (Furthermore, choosing a smaller radius would allow us to choose a larger radius for ``close'').}
%
%
In the $1$-\nesta and $1$-\async models robots are constrained to choose a target location that lies within the basic safe region with respect to each of their neighbours. In the $k$-\nesta and $k$-\async models, with 
$k >1$, the only change is to simply scale the basic safe regions by a factor of $\alpha = 1/k$. 
Thus for all $k$,  robot $\bY$ is constrained to choose a target location that lies within the $\alpha$-scaled safe region with respect to each distant neighbour robot $\bX$, denoted $S_{Y_0}^{\alpha V_{\bY}/8}(X_0)$, which is
the disk, with
radius $\alpha V_{\bY}/8$, centered at the point at distance $\alpha V_{\bY}/8$ from $Y_0$ in the direction of $X_0$.
It is straightforward to confirm that if point $P$ lies within the basic safe region $S^{V_{\bY}/8}_{Y_0}(X_0)$, then the point $P^\alpha$ at distance $\alpha |P Y_0|$ from $Y_0$ in the direction of $P$, lies in $S_{Y_0}^{\alpha V_{\bY}/8}(X_0)$.

The prospect of one robot, $\bY$, at location $Y_0$, making up to $k$ successive moves while another distant neighbour, $\bX$, is in the process of moving from location $X_0$ to location $X_1$, invites the question: how can we characterize the  set of points that can be reached by robot $\bY$ in this situation, provided that each of its moves is confined to the current $1/k$-scaled safe region with respect to the current location of $\bX$? As it happens the exact description of this set is somewhat complicated, but a more simply described superset suffices for our purposes. 

\begin{figure}[htbp]
	\centering
	\includegraphics[page=1]{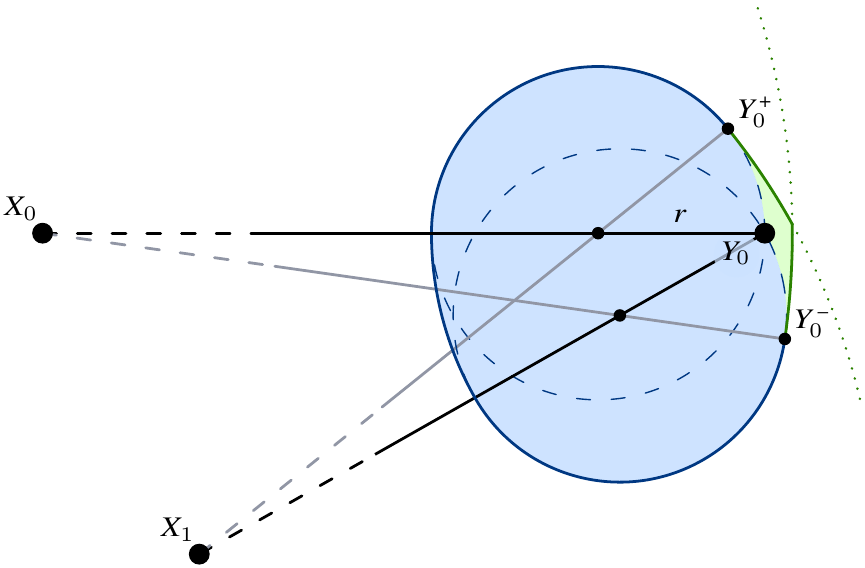}
	\caption{Region $R^r_{Y_0}(X_0, X_1)$ consisting core (blue) and bulge (green) subregions.}
	\label{newfig:QuarterNewProofBaseRegion}
\end{figure}

We define the region $R^r_{Y_0}(X_0, X_1)$ 
(see Figure~\ref{newfig:QuarterNewProofBaseRegion}) as the union of
\begin{enumerate}[(i)]
    \item its \emph{core}: the set of disks of radius $r$ whose centers lie at distance $r$ from $Y_0$ in the direction of some point on the line segment $\overline{X_0X_1}$; and
    \item its \emph{bulge}: the intersection of 
(a) the set of points at distance at most $|X_1 Y_0^+|$ from $X_1$, and at most $|Y_0 Y_0^+|$ from $Y_0$, where $Y_0^+$ is the point on the disk centered at distance $r$ from $Y_0$ in the direction of $X_0$ that has maximum distance from $X_1$; and
(b) the set of points at distance at most $|X_0 Y_0^-|$ from $X_0$, and at most $|Y_0 Y_0^-|$ from $Y_0$, where $Y_0^-$ is the point on the disk centered at distance $r$ from $Y_0$ in the direction of $X_1$ that has maximum distance from $X_0$.
\end{enumerate}

%

We begin by making the following observations that follow directly from the definitions above:
\begin{obs}~
\begin{enumerate}[(i)]
\item $R^{V_{\bY}/8}_{Y_0}(X_0, X_0)$ coincides with $S^{V_{\bY}/8}_{Y_0}(X_0)$ 
and, more generally, for any $\alpha \in (0,1]$, \\
$R^{\alpha V_{\bY}/8}_{Y_0}(X_0, X_0)$ coincides with $S^{\alpha V_{\bY}/8}_{Y_0}(X_0)$; and
\item the core of $R^{V_{\bY}/8}_{Y_0}(X_0, X_1)$ coincides with the union of $S^{V_{\bY}/8}_{Y_0}(X_*)$, over all  $X_* \in \overline{X_0 X_1}$,
and, more generally, for any $\alpha \in (0,1]$,
the core of $R^{\alpha V_{\bY}/8}_{Y_0}(X_0, X_1)$ coincides with the union of $S^{\alpha V_{\bY}/8}_{Y_0}(X_*)$, over all  $X_* \in \overline{X_0 X_1}$.
\end{enumerate}
\end{obs}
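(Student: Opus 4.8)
The plan is to prove both parts by directly unwinding the definitions of the basic safe region $S$, of the \emph{core}, and of the \emph{bulge}. There is no genuinely hard step here; the only point requiring a short verification is that the bulge degenerates to a single boundary point in the case $X_1 = X_0$ relevant to part (i).

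For part (ii), I would argue as follows. By definition the core of $R^{r}_{Y_0}(X_0,X_1)$ is the union of those disks of radius $r$ whose centre lies at distance $r$ from $Y_0$ in the direction of some point $X_* \in \overline{X_0X_1}$. For a fixed such $X_*$, that disk is, verbatim from the definition of the basic safe region, exactly $S^{r}_{Y_0}(X_*)$. Hence the core of $R^{r}_{Y_0}(X_0,X_1)$ equals $\bigcup_{X_* \in \overline{X_0X_1}} S^{r}_{Y_0}(X_*)$; instantiating $r = V_{\bY}/8$ gives the first statement of (ii) and $r = \alpha V_{\bY}/8$ the general one.

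For part (i), I would specialise the construction of $R^{r}_{Y_0}(X_0,X_1)$ to $X_1 = X_0$, so that $\overline{X_0X_1}$ collapses to the single point $X_0$; by part (ii) the core is then precisely $S^{r}_{Y_0}(X_0)$, and it remains only to see that the bulge adds nothing. Since $\bX$ is a distant neighbour of $\bY$ we have $|X_0Y_0| > V_{\bY}/2$, which exceeds $2r$ for every $\alpha \in (0,1]$; thus $X_0$ lies strictly beyond the far endpoint of $S^{r}_{Y_0}(X_0)$ along the ray $\overrightarrow{Y_0X_0}$, so the point of this disk farthest from $X_0$ is $Y_0$ itself. Hence in this degenerate case $Y_0^+ = Y_0$, and symmetrically (exchanging the roles of $X_0$ and $X_1$) $Y_0^- = Y_0$. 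The bulge's defining constraint ``at distance at most $|Y_0 Y_0^+| = 0$ from $Y_0$'' then forces subregion (a) to be $\{Y_0\}$, and likewise subregion (b) equals $\{Y_0\}$, so the bulge is $\{Y_0\} \subseteq S^{r}_{Y_0}(X_0)$. Therefore $R^{r}_{Y_0}(X_0,X_0)$, being the union of its core and bulge, equals $S^{r}_{Y_0}(X_0)$, first for $r = V_{\bY}/8$ and then, repeating the argument word for word, for $r = \alpha V_{\bY}/8$ with any $\alpha \in (0,1]$.

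The only obstacle, and it is minor, is confirming that the auxiliary points $Y_0^{\pm}$ collapse onto $Y_0$ when $X_0 = X_1$ so that the bulge does not protrude past the single safe-region disk; everything else is a restatement of the definitions.
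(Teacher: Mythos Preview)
Your proposal is correct and follows the same approach as the paper, which simply states that the observations ``follow directly from the definitions above'' without spelling out any details. Your careful verification that the bulge collapses to $\{Y_0\}$ when $X_1 = X_0$ (using the fact that a distant neighbour satisfies $|X_0Y_0| > V_{\bY}/2 > 2r$, so that $Y_0$ is the point of the safe-region disk farthest from $X_0 = X_1$) correctly fills in the one non-trivial detail the paper leaves implicit.
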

%
\noindent Observation \emph{(i)} establishes the basis of a proof, by induction on $j$, of the following:

\begin{lemma}\label{lem:1-nest-R}
$R^{jV_{\bY}/(8k)}_{Y_0}(X_0, X_0)$ contains the set of all points that can be reached by robot $\bY$ at location $Y_0$, making $j \le k$ successive moves while another distant neighbour, $\bX$, remains stationary at location $X_0$, provided that each move of $\bY$ is confined to the current $1/k$-scaled safe region with respect to the location of $\bX$.
\end{lemma}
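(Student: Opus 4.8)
The plan is to prove the statement by induction on $j$, with the base case $j = 0$ (or $j=1$) supplied directly by Observation \emph{(i)}. For $j = 0$ there is nothing to move and $Y_0$ itself is the only reachable point, which lies in $R^{0}_{Y_0}(X_0,X_0) = \{Y_0\}$; for $j = 1$, a single move of $\bY$ is by hypothesis confined to the $1/k$-scaled safe region $S^{V_{\bY}/(8k)}_{Y_0}(X_0)$, which by Observation \emph{(i)} (with $\alpha = 1/k$) equals $R^{V_{\bY}/(8k)}_{Y_0}(X_0,X_0)$, establishing the base case.

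For the inductive step, suppose the claim holds for $j-1$, i.e. after $j-1$ moves robot $\bY$ lies at some point $Y' \in R^{(j-1)V_{\bY}/(8k)}_{Y_0}(X_0,X_0)$. I would first observe that the region $R^{(j-1)V_{\bY}/(8k)}_{Y_0}(X_0,X_0) = S^{(j-1)V_{\bY}/(8k)}_{Y_0}(X_0)$ is the disk of radius $(j-1)V_{\bY}/(8k)$ centered at the point $C_{j-1}$ at distance $(j-1)V_{\bY}/(8k)$ from $Y_0$ toward $X_0$; in particular every such $Y'$ satisfies $|Y_0 Y'| \le 2(j-1)V_{\bY}/(8k)$ and lies ``toward'' $X_0$. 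The $j$-th move of $\bY$, starting from $Y'$, is confined to the $1/k$-scaled safe region with respect to $X_0$ as viewed from $Y'$, namely the disk of radius $V_{\bY'}/(8k)$ centered at distance $V_{\bY'}/(8k)$ from $Y'$ in the direction of $X_0$. The key quantitative point to nail down is that $V_{\bY'} \le V$ always (it is a lower bound on the true visibility range), and more importantly that $V_{\bY'}$ can be bounded in terms of $V_{\bY}$ — here one uses that $\bX$ is a \emph{distant} neighbour, so $|X_0 Y_0| > V_{\bY}/2$, and the triangle inequality together with the bound on $|Y_0 Y'|$ controls $|X_0 Y'|$ and hence $V_{\bY'}$; since the safe region radius we are \emph{allowed} in the statement scales with $V_{\bY}$ (not $V_{\bY'}$) the cleanest route is to note that the statement as phrased uses the fixed quantity $V_{\bY}$, so it suffices that each single move has length at most $2 V_{\bY}/(8k)$ in a direction making a small enough angle with $\overrightarrow{Y' X_0}$.

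The geometric heart of the step is then: adding a disk of radius $\rho := V_{\bY}/(8k)$, centered at distance $\rho$ from $Y'$ toward $X_0$, onto the disk $S^{(j-1)\rho}_{Y_0}(X_0)$ in which $Y'$ sits, yields a set contained in $S^{j\rho}_{Y_0}(X_0)$. I would argue this by a direct reachability/Minkowski-type computation: any point $P$ reachable by the $j$-th move satisfies $|Y' P| \le 2\rho$ and the angle $\angle(P - Y', X_0 - Y')$ is at most $\pi/6$ (the half-angle subtended by a safe-region disk at its apex, $\arcsin(1/2)$), while $Y'$ itself satisfies the analogous bound relative to $Y_0$ with the accumulated radius. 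Composing these two cone-and-length constraints and using that both cones point ``toward $X_0$'' shows $|Y_0 P| \le 2j\rho$ and $\angle(P - Y_0, X_0 - Y_0) \le \pi/6$, i.e. $P \in S^{j\rho}_{Y_0}(X_0) = R^{j V_{\bY}/(8k)}_{Y_0}(X_0,X_0)$, completing the induction. The main obstacle I anticipate is precisely this composition of successive conical constraints: one must verify that the worst-case accumulated displacement after $j$ scaled moves never escapes the target disk, which is exactly the reason the safe region is defined so that its apex angle is $\pi/6$ and its diameter equals $2\rho$ — these constants conspire so that $j$ disks of radius $\rho$ chained apex-to-apex toward a fixed far point stay inside the single disk of radius $j\rho$ with the same orientation. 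A clean way to sidestep delicate angle bookkeeping is to instead prove the stronger, fully-geometric inclusion $S^{(j-1)\rho}_{Y_0}(X_0) \oplus (\text{move set}) \subseteq S^{j\rho}_{Y_0}(X_0)$ by showing both sets are characterized as intersections of half-planes / balls through $Y_0$ and toward $X_0$, reducing everything to an elementary planar inequality.
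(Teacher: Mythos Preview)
Your overall plan (induction on $j$, with Observation (i) supplying the base) matches the paper. The genuine gap is in the inductive step.

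First, your angle bound is wrong. The safe region $S^{\rho}_{Y'}(X_0)$ is a disk of radius $\rho$ whose centre is at distance $\rho$ from $Y'$, so $Y'$ lies \emph{on} the boundary of the disk, not outside it. Hence the rays from $Y'$ to points of the disk sweep out a half-plane: the half-angle at the apex is $\pi/2$, not $\arcsin(1/2)=\pi/6$. (Concretely, with $Y'=(0,0)$ and centre $(\rho,0)$, the boundary point $(\rho(1+\cos\theta),\rho\sin\theta)$ makes angle $\theta/2$ with the $x$-axis, and the correct constraint on a point $P$ in the disk is $|Y'P|\le 2\rho\cos\angle(P-Y',X_0-Y')$.) With half-angle $\pi/2$, your ``compose two cone-and-length constraints'' argument collapses: there is no uniform angular wedge to iterate.

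Second, even with the correct characterisation $|Y'P|\le 2\rho\cos\beta$, the composition is not straightforward because the direction $\overrightarrow{Y'X_0}$ is \emph{not} the direction $\overrightarrow{Y_0X_0}$; these differ by an angle depending on where $Y'$ sits in the first disk and on the finite distance $|X_0Y_0|$. This tilt is exactly what has to be controlled, and it is where the hypothesis that $\bX$ is a distant neighbour (so $|X_0Y_0|\ge 2r_1$) enters. Your proposal never uses that hypothesis in the geometry.

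The paper's inductive step sidesteps both difficulties with one observation: the disk $S^{r_2}_{P}(X_0)$ coincides, as a point set, with $S^{r_2}_{\widehat P}(X_\infty)$, where $X_\infty$ is the point at infinity in the direction $\overrightarrow{Y_0X_0}$ and $\widehat P$ is the point of $S^{r_2}_{P}(X_0)$ extreme in the opposite direction. One then checks (this is where $|X_0Y_0|\ge 2r_1$ is used) that $\widehat P$ still lies in $S^{r_1}_{Y_0}(X_0)$. Since all the replaced safe regions now point in the \emph{same} fixed direction, their union over $\widehat P\in S^{r_1}_{Y_0}(X_0)$ is a genuine Minkowski sum of two disks tangent at $Y_0$ along the same ray, which is exactly $S^{r_1+r_2}_{Y_0}(X_0)$. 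Your fallback paragraph gestures at a Minkowski-style inclusion, but without the $X_\infty$ trick you are taking a union of disks whose axes all point in different directions, and the claimed containment is not ``an elementary planar inequality'' in that form.

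(Your worry about $V_{\bY'}$ versus $V_{\bY}$ is legitimate but orthogonal; the paper tacitly bounds each step's radius by the fixed $V_{\bY}/(8k)$, and that reading is what the lemma asserts.)
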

\begin{proof}
For the inductive step, we note that
if $|X_0Y_0| \ge 2 r_1$  and $r_1 \ge r_2$, then
$R^{r_1 + r_2}_{Y_0}(X_0, X_0)$ contains the
union of 
$S^{r_2}_{Y_*}(X_0)$   over all   $Y_* \in R^{r_1}_{Y_0}(X_0, X_0)$. 
This follows from a simple geometric observation captured in Figure~\ref{newfig:LemmaSafe}. Suppose that point $P$ lies in 
$R^{r_1}_{B}(A, A)$. Then the safe region $S^{r_2}_P(A)$ coincides with the safe region $S^{r_2}_{\widehat{P}}(A_{\infty})$, where $\widehat{P}$ denotes the point on $S^{r_2}_P(A)$ that is extreme in the direction 
$\overrightarrow{AB}$, and $A_{\infty}$ denotes the point at infinity in the direction $\overrightarrow{BA}$ 
(see Figure~\ref{newfig:LemmaSafe}.left). 
Assuming $|AB| \ge 2 r_1$, $\angle(A,P,B') < \angle(A,P,\widehat{P}) < \angle(A,P,B)$. 
It follows that $\widehat{P} \in R^{r_1}_{A}(B, B)$.
But the union of $S^{r_2}_{\widehat{P}}(A_{\infty})$, over all points $\widehat{P} \in R^{r_1}_{A}(B, B)$ is exactly 
$R^{r_1 + r_2}_{B}(A, A)$ (see Figure~\ref{newfig:LemmaSafe}.right).
\end{proof}

\begin{figure}[htbp]
\centering
\includegraphics[page=1]{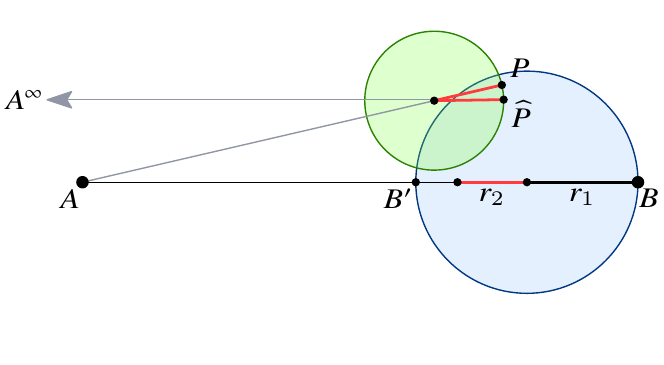}
\hfill
\includegraphics[page=2]{newfig/lemma-safe.pdf}
\caption{Circle expansion lemma. }
\label{newfig:LemmaSafe}
\end{figure}

\noindent Observation \emph{(ii)} establishes the basis of a proof, by induction on $j$, of the following more general lemma:

\begin{lemma}[Base region extension]\label{lem:base-region-extension}
$R^{jV_{\bY}/(8k)}_{Y_0}(X_0, X_1)$ contains the set of all points that can be reached by robot $\bY$ at location $Y_0$, making $j \le k$ successive moves while another distant neighbour, $\bX$, is in the process of moving from location $X_0$ to location $X_1$, provided that each move of $\bY$ is confined to the current $1/k$-scaled safe region with respect to the current location of $\bX$.
\end{lemma}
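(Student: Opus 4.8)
The plan is to prove Lemma~\ref{lem:base-region-extension} by induction on $j$, handling the base case $j=1$ directly via Observation~\emph{(ii)} and reducing the inductive step to a single planar-geometry containment that generalises the circle expansion lemma established inside the proof of Lemma~\ref{lem:1-nest-R}. For the base case: during its one move robot $\bY$ records $\bX$ at some point $X_*$ of $\overline{X_0X_1}$ (wherever $\bX$ lies at $\bY$'s \look instant), the move is then confined to the $1/k$-scaled safe region $S^{V_{\bY}/(8k)}_{Y_0}(X_*)$, and since $X_*$ may be any point of the segment the reachable set is contained in $\bigcup_{X_*\in\overline{X_0X_1}}S^{V_{\bY}/(8k)}_{Y_0}(X_*)$, which by Observation~\emph{(ii)} with $\alpha=1/k$ is precisely the core of $R^{V_{\bY}/(8k)}_{Y_0}(X_0,X_1)$ and hence a subset of that region.

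For the inductive step, fix $2\le j\le k$ and write $r=V_{\bY}/(8k)$. By the inductive hypothesis, after its first $j-1$ moves $\bY$ occupies some $P\in R^{(j-1)r}_{Y_0}(X_0,X_1)$; this hypothesis is position-agnostic, so the fact that the position of $\bX$ recorded by $\bY$ advances monotonically along $\overline{X_0X_1}$ between successive moves of $\bY$ causes no difficulty. The $j$-th move of $\bY$ is confined to $S^{r}_{P}(X_*)$ for some $X_*\in\overline{X_0X_1}$, so the induction closes once I prove the following \emph{base-region expansion step}: if $r_1\ge r_2>0$ and every point of $\overline{X_0X_1}$ lies at distance at least $2r_1$ from $Y_0$, then $S^{r_2}_{P}(X_*)\subseteq R^{r_1+r_2}_{Y_0}(X_0,X_1)$ for every $P\in R^{r_1}_{Y_0}(X_0,X_1)$ and every $X_*\in\overline{X_0X_1}$ (apply it with $r_1=(j-1)r$, $r_2=r$). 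The two hypotheses hold here: $r_1=(j-1)r\ge r=r_2$ since $j\ge2$, and the distance bound holds because $\bX$ is a distant neighbour while $\bY$'s cumulative displacement from $Y_0$ after fewer than $k$ moves, each of length at most $2r=V_{\bY}/(4k)$, stays below $V_{\bY}/4$, so $|X_*Y_0|>V_{\bY}/2-V_{\bY}/4=V_{\bY}/4>2r_1$.

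The main obstacle is the base-region expansion step. It is the analogue of the argument behind Figure~\ref{newfig:LemmaSafe}, complicated by two features absent there: the target $X_*$ now ranges over a whole segment rather than a single point, and $R^{r_1}_{Y_0}(X_0,X_1)$ carries a bulge as well as a core. I would split on whether $P$ lies in the core or the bulge of $R^{r_1}_{Y_0}(X_0,X_1)$. If $P$ lies in the core, say $P\in S^{r_1}_{Y_0}(X_\dagger)$ with $X_\dagger\in\overline{X_0X_1}$, the circle-expansion argument of Lemma~\ref{lem:1-nest-R} carries over: as there, the disk $S^{r_2}_P(X_*)$ coincides with a safe region anchored at a point $\widehat{P}$ lying on the boundary of $S^{r_1}_{Y_0}(X_\dagger)$, the same angle comparison places the whole disk in the core of $R^{r_1+r_2}_{Y_0}(X_0,X_1)$ when $X_*=X_\dagger$, and the directional shift to a different $X_*\in\overline{X_0X_1}$ is absorbed by the bulge of $R^{r_1+r_2}_{Y_0}(X_0,X_1)$. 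If $P$ lies in the bulge, I would instead argue from the explicit radius-and-centre description of the bulge of $R^{r_1+r_2}_{Y_0}(X_0,X_1)$: its defining distances to $X_0$, $X_1$ and $Y_0$ exceed the corresponding distances for $R^{r_1}_{Y_0}(X_0,X_1)$ by exactly the slack created by replacing $r_1$ with $r_1+r_2$, so the triangle inequalities $|QX_i|\le|QP|+|PX_i|$ ($i\in\{0,1\}$) and $|QY_0|\le|QP|+|PY_0|$ with $|QP|\le r_2$ keep every $Q\in S^{r_2}_P(X_*)$ inside the larger bulge. Two routine loose ends would remain: verifying that the core and bulge of $R^{r_1}_{Y_0}(X_0,X_1)$ together exhaust it (so the case split is exhaustive), and establishing that $R^{r}_{Y_0}(\cdot,\cdot)$ is monotone under enlargement of the segment (used implicitly above), both of which follow from the same core/bulge decomposition. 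Each of these steps is elementary planar geometry, if more intricate than the circle expansion lemma it generalises.
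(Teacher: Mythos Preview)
Your overall architecture matches the paper exactly: induction on $j$, base case via Observation~\emph{(ii)}, and an inductive step phrased as a containment
\[
\bigcup_{\substack{P\in R^{r_1}_{Y_0}(X_0,X_1)\\ X_*\in\overline{X_0X_1}}} S^{r_2}_{P}(X_*)\ \subseteq\ R^{r_1+r_2}_{Y_0}(X_0,X_1),
\]
which is precisely what the paper isolates and then verifies in five cases (Figures~\ref{fig:BaseRegionExtensionLemmaAD}--\ref{fig:BaseRegionExtensionLemmaC}).

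The genuine gap is your treatment of the bulge case. Your proposed argument applies the raw triangle inequalities $|QX_i|\le|QP|+|PX_i|$ and $|QY_0|\le|QP|+|PY_0|$, using only $|QP|\le r_2$ (which itself should be $|QP|\le 2r_2$, since $P$ lies on the boundary of the disk $S^{r_2}_P(X_*)$ of diameter $2r_2$). But this discards the one thing that makes the containment true: the direction of the disk $S^{r_2}_P(X_*)$ toward a point of $\overline{X_0X_1}$. The slack in the bulge radii when passing from $r_1$ to $r_1+r_2$ is \emph{not} uniformly $2r_2$: for instance, $|X_1Y_0^+(r)|=|X_1C_r|+r$ where $C_r$ is the centre of $S^{r}_{Y_0}(X_0)$, and since $C_{r_1+r_2}$ lies $r_2$ further from $Y_0$ toward $X_0$, the change $|X_1Y_0^+(r_1+r_2)|-|X_1Y_0^+(r_1)|$ can be anywhere in $[0,2r_2]$ depending on the angle $\angle(X_0,Y_0,X_1)$. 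When it is near $0$, a direction-blind bound of $2r_2$ on $|QP|$ cannot close the gap. So the bulge case as you sketch it does not go through.

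The paper avoids this by \emph{not} splitting on core/bulge of $P$ and \emph{not} using crude triangle inequalities. Instead it carries the same ``coincidence'' trick from Lemma~\ref{lem:1-nest-R} through all five boundary cases: for each $P\in R^{r_1}_{Y_0}(X_0,X_1)$ and each $X_*\in\overline{X_0X_1}$, the disk $S^{r_2}_{P}(X_*)$ is rewritten as $S^{r_2}_{\widehat{P}}(\widehat{X})$ for some $\widehat{P}$ on the boundary of $R^{r_1}_{Y_0}(X_0,X_1)$ in an extremal position (the five cases correspond to the five arcs making up that boundary), and one then checks pictorially that each such extremal disk sits inside $R^{r_1+r_2}_{Y_0}(X_0,X_1)$. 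Your core-case sketch is heading in this direction; to finish, you need to apply the same rewriting uniformly rather than falling back on direction-free estimates for the bulge.
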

\begin{proof}
For the induction step, we prove that
for any $\alpha \in (0,1]$, 
$R^{r + \alpha V_{\bY}/8}_{Y_0}(X_0, X_1)$ contains the
union of 
$S^{\alpha V_{\bf Y}/8}_{Y_*}(X_*)$   over all $X_* \in \overline{X_0 X_1}$  and  $Y_* \in R^{r}_{Y_0}(X_0, X_1)$.
The argument here is treated in five cases 
(see Figures~\ref{fig:BaseRegionExtensionLemmaAD}, \ref{fig:BaseRegionExtensionLemmaBE}, and \ref{fig:BaseRegionExtensionLemmaC}, where 
$R^{r}_{Y_0}(X_0, X_1)$ is depicted in blue, 
$R^{r + \alpha V_{\bY}/8}_{Y_0}(X_0, X_1)$ is depicted in red, and the small red circle depicts an extremal safe region associated with a point in $R^{r}_{Y_0}(X_0, X_1)$. 
It is essentially the same as the argument used in 
Lemma~\ref{lem:1-nest-R} above; in all cases, a safe region $S^{\alpha V_{\bf Y}/8}_{Y_*}(X_*)$, where
$X_* \in \overline{X_0 X_1}$  and  $Y_* \in R^{r}_{Y_0}(X_0, X_1)$, coincides with a safe region $S^{\alpha V_{\bf Y}/8}_{\widehat{Y}}(\widehat{X})$, in the extremal form illustrated, where $\widehat{Y} \in R^{r}_{Y_0}(X_0, X_1)$. Since the safe regions in the extreme form illustrated all lie inside  $R^{r + \alpha V_{\bY}/8}_{Y_0}(X_0, X_1)$, the result follows.
\end{proof}

\begin{figure}[htbp]
\begin{minipage}{\textwidth}
\centering
\includegraphics[page=1]{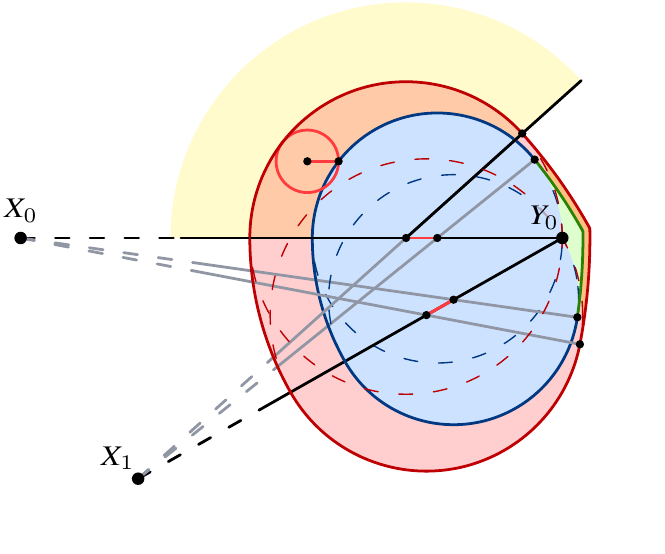}
\hfill
\includegraphics[page=2]{newfig/base-region-extension.pdf}
\caption{{\sc Base region extension} Lemma (cases A and D).}
\label{fig:BaseRegionExtensionLemmaAD}
\end{minipage}
\begin{minipage}{\textwidth}
\centering
\includegraphics[page=3]{newfig/base-region-extension.pdf}
\hfill
\includegraphics[page=4]{newfig/base-region-extension.pdf}
\caption{{\sc Base region extension} Lemma (cases B and E).}
\label{fig:BaseRegionExtensionLemmaBE}
\end{minipage}
\begin{minipage}{\textwidth}
\centering
\includegraphics[page=5]{newfig/base-region-extension.pdf}
\caption{{\sc Base region extension} Lemma (case C).}
\label{fig:BaseRegionExtensionLemmaC}
\end{minipage}
\end{figure}


\section{Visibility preservation by our algorithm, under bounded asynchrony}
\label{sec:VisibilityPreservation}


In this section we begin the analysis of our algorithm by demonstrating the fact that it guarantees that the connectivity of the initial robot configuration is preserved in both the $k$-\nesta and $k$-\async scheduling models. 
This is established by showing that all pairs of robots that are mutually visible in the initial configuration remain so. In addition, we also demonstrate a limited form of preservation of acquired visibility (mutual visibility of robot pairs that arises in subsequent configurations), critical to our congregation argument.

\subsection{Visibility preservation in the \emph{k}-\nesta model 
}\label{subsec:kNestaVisibility}


Though superficially similar, there is an important 
difference between the \ssync and $1$-\nesta models: in the latter, if an active interval of robot $\bY$ is nested within an active interval of robot $\bX$,  it is possible that the \look phase of $\bY$ takes place after the \move phase of $\bX$ has begun. 
In this situation, even if we assume that $\bY$ was seen by $\bX$ when it last looked, 
$\bX$ could be viewed by $\bY$ anywhere on its chosen trajectory.

For the $k$-\nesta model, with $k > 1$, demonstrating that mutual visibility of a pair $\bX,\bY$ of robots is preserved, is 
further complicated by the fact that repeated activations of $\bY$, nested within one activation of $\bX$, may view $\bX$ at many different positions on its determined trajectory, and even if $\bX$ is stationary $\bY$ may view it from up to $k$ different locations. 
In fact, (i) as illustrated by the bulge region, $k$ nested activations of $\bY$, each moving within its $1/k$-scaled safe region with respect to $\bX$, while preserving mutual visibility, can take $\bY$ outside of its basic (unscaled) safe region, and (ii) $k$ nested activations of $\bY$, each moving within even slightly larger safe regions could lead to a break in visibility.




In asynchronous settings, it is certainly possible that a pair of robots under the control of our algorithm could become temporarily mutually visible, and then such a visibility may become subsequently lost. However, a critical component of our congregation argument relies on the fact that, if the separation of a robot pair becomes sufficiently small, visibility will be preserved indefinitely.
We say that robot $\bX$ is  \emph{strongly visible} from robot $\bY$ at time $t$ if $|X(t) Y(t)| \le V/2$. 

With this definition in hand we can state and prove what we need in terms of visibility preservation for the $k$-\nesta scheduling model:

\begin{theorem} \label{thm:VisibilityPreservationNesta}
Provided robots $\bX$ and $\bY$ both
continue to confine their movement to the $1/k$-scaled safe regions for motion with respect to the other, then if
(i)  $\bX$ is  visible from $\bY$ at time $t=0$ (i.e. $|X(0) Y(0)| \le V$), or 
(ii) $\bX$ becomes  strongly visible from $\bY$ at some subsequent time $t>0$, 
then their mutual visibility will be maintained thereafter,
under arbitrary $k$-\nesta scheduling.
\end{theorem}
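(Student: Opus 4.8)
The plan is to reduce the theorem to a symmetric two-robot statement in which robot $\bX$ moves (possibly non-monotonically) along a segment $\overline{X_0X_1}$ while robot $\bY$ makes up to $k$ moves nested within that motion, each confined to its current $1/k$-scaled safe region with respect to the current location of $\bX$ — and then invoke the \textsc{Base region extension} lemma (Lemma~\ref{lem:base-region-extension}). Concretely, fix a maximal activity interval of one of the two robots, say $\bX$, during which $\bX$ moves from $X_0$ to $X_1$; by the $k$-\nesta constraint, $\bY$ is activated at most $k$ times within this interval, each activation being a \look-\compute-\move cycle during which $\bX$ appears stationary (since \look is instantaneous and $\bX$'s position is some point of $\overline{X_0X_1}$). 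By Lemma~\ref{lem:base-region-extension} with $j=k$, after these $k$ moves $\bY$ lies in $R^{V_{\bY}/8}_{Y_0}(X_0, X_1)$. Then, by the role-swap, during each activity interval of $\bY$ one must also bound the displacement of $\bX$: the point is that a single ``round'' of this nested interaction leaves both robots inside their respective base regions $R^{V_{\bW}/8}_{W_0}(Z_0,Z_1)$ relative to the initial and final positions of the other. I would formalize this as an invariant maintained over a sequence of top-level (outermost) activity intervals: at every moment, if $\bX$ and $\bY$ were mutually visible at the start of the current top-level interval, they remain mutually visible.

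The second step is to bound the maximal separation that can arise after one such round and show $|X_1^{\mathrm{end}}Y^{\mathrm{end}}| \le V$. For this I would establish the geometric fact that if $|X_0 Y_0| \le V$, then every point of $R^{V_{\bY}/8}_{Y_0}(X_0,X_1)$ is within distance $V$ of every point of the analogous region traced by $\bX$ — using two ingredients: (a) the core of $R^{V_{\bY}/8}_{Y_0}(X_0,X_1)$ is the union of basic safe regions $S^{V_{\bY}/8}_{Y_0}(X_*)$ over $X_*\in\overline{X_0X_1}$ (Observation~(ii)), and each such disk has the property that moving $\bY$ within it toward $X_*$ cannot increase $|X_* Y|$ when $|X_* Y_0| \ge V_{\bY}/2 \ge V_{\bY}/4$ (the ``distant neighbour'' condition, which is exactly why the safe region is defined for distant neighbours only); and (b) the bulge is, by construction, contained in the intersection of balls around $X_0$ and $X_1$ of radii $|X_0 Y_0^-|$ and $|X_1 Y_0^+|$ respectively, both of which are at most $|X_0Y_0| \le V$ (resp. $|X_1 Y_0| \le V$) provided $\bX$'s own motion is also confined symmetrically. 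Chaining: since $V_{\bY} \le V$ always, the total radial progress of $\bY$ toward $\bX$ over the $k$ nested moves is at most $V_{\bY}/8 \le V/8$, and similarly for $\bX$, so in the worst case the separation grows by at most $V/8 + V/8$ from a starting value of at most... — here I must be careful: the correct statement is not ``grows'' but that the \emph{new} separation is still $\le V$, which follows because each robot only ever moves toward (a point near) the other, never away. The case (ii) of strong visibility ($|X(t)Y(t)| \le V/2$) is then immediate: even allowing each robot the full $V/8$ worst-case radial drift \emph{away} (which our safe regions do not in fact permit, but as a crude bound), $V/2 + V/8 + V/8 < V$, and in fact one gets the stronger conclusion that once strongly visible the pair stays within $V/2 + V/4 \cdot k/k < V$ with room to spare, yielding the ``separated by at least $V/4$ if ever not visible'' phenomenon referenced earlier.

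The third step is the induction over top-level intervals. Order the outermost activity intervals of $\bX$ and $\bY$ by start time; between consecutive events only one of the two robots is the ``outer'' one. The invariant ``mutually visible at the start of, and throughout, the current outer interval'' is preserved by the single-round analysis above, and at interval boundaries the configuration is static for the pair $\{\bX,\bY\}$ in the relevant sense, so visibility carries over. Activation fairness is not needed here — only for congregation — so the argument is purely about what any $k$-\nesta scheduler can do within and across nested blocks.

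\medskip

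\noindent\textbf{Main obstacle.} The delicate point is step two: correctly tracking that the \emph{bulge} of $\bY$'s reachable region $R^{V_{\bY}/8}_{Y_0}(X_0,X_1)$ interacts safely with the \emph{bulge} of $\bX$'s reachable region when \emph{both} robots are simultaneously making nested moves relative to each other's (moving) positions. The \textsc{Base region extension} lemma bounds $\bY$'s positions assuming $\bX$'s trajectory is a fixed segment, but in a genuine $k$-\nesta schedule $\bX$'s endpoints $X_0,X_1$ are themselves moving targets determined by $\bY$'s (and other robots') positions. I expect to handle this by a careful choice of the induction unit — taking the outermost interval as the unit so that, \emph{within} it, the outer robot's trajectory genuinely is a fixed segment and the inner robot's $\le k$ moves are exactly the hypothesis of Lemma~\ref{lem:base-region-extension} — and then showing that the worst-case separation bound $V$ is achieved at an interval boundary, where the relevant segment is finalized. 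Verifying that the bulge's defining distances $|X_1 Y_0^+|$ and $|X_0 Y_0^-|$ never exceed $V$ (equivalently, that $Y_0^\pm$ as defined stay within the appropriate balls) is the one genuinely new geometric computation; everything else is bookkeeping on top of the region lemmas already proved.
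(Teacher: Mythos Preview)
Your overall architecture---decompose the $k$-\nesta schedule into activation events (one outer interval of $\bX$ containing at most $k$ nested intervals of $\bY$), apply the \textsc{Base region extension} lemma to confine $\bY$ to $R^{V/8}_{Y_0}(X_0,X_1)$, read off a distance bound from the bulge definition, and induct over events---is exactly what the paper does. Two points need correction.

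First, your ``main obstacle'' is over-stated for $k$-\nesta. Within one activation event the outer robot $\bX$ makes exactly \emph{one} move, and it looks before $\bY$ does (by nesting), so it sees $\bY$ at $Y_0$ and its trajectory $\overline{X_0X_1}$ is a fixed segment with $X_1\in S^{V_{\bX}/(8k)}_{X_0}(Y_0)$. There is no bulge-on-bulge interaction; $\bX$ ends at the single point $X_1$, not in some region $R^{\,\cdot}_{X_0}(\cdot,\cdot)$. The symmetric worry you raise only arises in the $k$-\async analysis.

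Second, your step-two inequality is backwards. Since $Y_0^+$ is by definition the point of the disk $S^r_{Y_0}(X_0)$ \emph{farthest} from $X_1$, and $Y_0$ itself lies on that disk, you have $|X_1Y_0^+|\ge|X_1Y_0|$, not $\le$. The bound that actually closes the argument (and the one the paper states) is $|X_1Y_0^+|\le|X_0Y_0|$. This uses the constraint on $\bX$: with $C_0$ the centre of $S^r_{Y_0}(X_0)$ one has $|X_1Y_0^+|=|X_1C_0|+r$ and $|X_0Y_0|=|X_0C_0|+r$, so it suffices that $|X_1C_0|\le|X_0C_0|$, which follows because every point of $S^{V_{\bX}/(8k)}_{X_0}(Y_0)$ lies at least as close to $C_0$ (a point on the ray $\overrightarrow{X_0Y_0}$) as $X_0$ does. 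This is the ``one genuinely new geometric computation'' you correctly flagged; you just need the right inequality.

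For case~(ii) the paper's route is cleaner than your direct bound: by the per-event version of~(i), if visibility were broken during an event then separation exceeded $V$ at its start; since each robot moves at most $V/4$ over one event, separation stayed above $V/2$ throughout. Contrapositively, if separation ever drops to $V/2$ the pair was visible at the start of that event, hence (by~(i) and induction) forever after. Your direct bound $V/2+V/8+V/8<V$ only gets you through one event with slack that is immediately consumed, so you would still need to invoke~(i) for subsequent events---which is fine, but then the crude bound is doing no work.
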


\noindent The proof of the two cases in this theorem is developed in the following two subsections.

\subsubsection{Preservation of initial visibilities}

We begin with the preservation of initial visibilities. 
Specifically, in the following we prove that
any pair of robots $\bX$ and $\bY$, satisfying $|X(0) Y(0)| \le V$, also satisfies $|X(t) Y(t)| \le V$, for all $t \ge 0$.


\begin{proof}

Recall that, in the $k$-\nesta model, the schedule of activations of any pair of robots $\bX$ and $\bY$ decomposes into a sequence activation events which consist of a single activation interval of one of the two robots within which are nested up to $k$ activation intervals of the other. Hence, it will suffice to prove that (i) if $\bX$ and $\bY$ are mutually visible at the start of any such 
activation event then they remain mutually visible throughout the event, and (ii) if one of the robots becomes strongly visible from the other during the course of the activation event then they will remain mutually visible thereafter. 
In fact, since the diameter of all $1/k$-scaled  safe regions is at most $V/(4k)$, the total length of moves of $\bX$ and $\bY$ in any activation event is at most $V/2$, and thus one robot can become strongly visible from the other during the course of the activation event only if they are mutually visible at both the start and end. Hence, it suffices to prove property (i) alone, knowing that at no intermediate point one of the robots becomes strongly visible from the other.

Suppose that an activation event begins with the pair of robots $\bX$ and $\bY$ at locations $X_0$ and $Y_0$ respectively, with separation at most $V$. Suppose further that this  activation event consists of a single active interval of robot $\bX$ and some $j \le k$ active intervals of robot $\bY$  all of which are nested within the active interval of $\bX$ (in particular, at both the start and end of the active interval of $\bX$, both $\bX$ and $\bY$ are immotile).
Finally, suppose that (i) the active interval of robot $\bX$ determines a target destination $X_1$ that is confined to the $1/k$-scaled  safe region $S_{X_0}^{V_{\bX}/(8k)}(Y_0)$, 
and (ii) the $i$-th active intervals of robot $\bY$ determines a target destination $Y_i$
that is confined to the $1/k$-scaled safe region $S_{Y_{i-1}}^{V_{\bY}/(8k)}(X_*)$ for some $X_* \in \overline{X_0 X_1}$ 
(recall that relocation to every point within a safe region may not be realizable in practice, since moves are ultimately dependent on the locations of one or more neighbouring robots).

Now suppose, leading to a contradiction, that at the end of the active interval of robot $\bX$, robots $\bX$ and $\bY$ are no longer mutually visible. In this case, we can assume that we are dealing with a counterexample with minimum $j$.
By definition, $\bY$, at location $Y_0$, is visible from $\bX$ at the start of its  interval. Furthermore, at the start of its first nested active interval, $\bY$, still at location $Y_0$, must view $\bX$ at some location $X_*$ on its trajectory towards $X_1$.  
Since $S_{X_0}^{V_{\bX}/(8k)}(Y_0)$ is entirely contained in the disk of radius $V$ centered at $Y_0$, the point $X_*$ is visible from robot $\bY$ at this point in time.
To establish the preservation of mutual visibility, we need to show that 
$\bX$ remains visible to $\bY$ at the time $\bY$ ends each of its successive nested active intervals. Of course, by the minimality of our supposed counterexample, we can assume that $\bX$ remains visible to $\bY$ at the time $\bY$ ends each of its first $j-1$ nested active intervals.

But we have already established, in Lemma~\ref{lem:1-nest-R}, that the set of points reachable by $\bY$ starting at $Y_0$ making $j$ successive moves each of which is confined to an $1/k$-scaled safe region with respect to some (possibly changing) location of $\bY$ within the segment $\overline{X_0X_1}$, is contained in $R^{jV/(8k)}_{Y_0}(X_0, X_1)$.  Given this, it is straightforward to confirm that the distance from $X_1$ to $Y_j$ is at most $|X_1 Y_0^+|$, which is at most $|X_0 Y_0|$. Hence $\bX$ remains visible to $\bY$ at the end of its $j$-th active interval.
\end{proof}

\subsubsection{Preservation of acquired visibilities}

To complete the proof of Theorem~\ref{thm:VisibilityPreservationNesta}
it remains to argue that if robot
$\bX$ becomes  strongly visible from $\bY$ at some time $t>0$, 
then their mutual visibility will be maintained thereafter,
under arbitrary $k$-\nesta scheduling.

\begin{proof}

We have already established that if a pair of robots $\bX$ and $\bY$ are mutually visible at the start of an active interval for $\bX$ that contains up to $k$ nested active intervals for $\bY$, then they remain mutually visible up to the end of the active interval for $\bX$. So if $\bX$ and $\bY$ have their visibility broken at some point it must be that they had a separation greater than $V$ at the start of the active interval for $\bX$. But since both $\bX$ and $\bY$ move a distance at most $V/4$ over the duration of this active interval, it follows that their separation never falls below $V/2$. Hence, if the separation of $\bX$ and $\bY$ ever falls below $V/2$ in this interval it must be that $\bX$ and $\bY$ are mutually visible at the end, and thereafter. 
\end{proof}


\subsection{Visibility preservation in the \emph{k}-\async model
}

Turning now to the more inclusive $k$-\async scheduling model, we establish the following generalization of Theorem~\ref{thm:VisibilityPreservationNesta}, exploiting the structure common to the $k$-\nesta and $k$-\async models.

\begin{theorem} \label{thm:VisibilityPreservationAsync}
Provided robots $\bX$ and $\bY$ both
continue to confine their movement to the $1/k$-scaled safe regions for motion with respect to the other, then if
(i)  $\bX$ is  visible from $\bY$ at time $t=0$ (i.e. $|X(0) Y(0)| \le V$), or 
(ii) $\bX$ becomes  strongly visible from $\bY$ at some subsequent time $t>0$, 
then their mutual visibility will be maintained thereafter,
under arbitrary $k$-\async scheduling.
\end{theorem}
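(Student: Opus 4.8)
The plan is to reduce the $k$-\async case to the $k$-\nesta case already handled in Theorem~\ref{thm:VisibilityPreservationNesta}, by isolating what is genuinely new about $k$-\async, namely \emph{chains} of overlapping (but not nested) activations. As in the \nesta proof, since every $1/k$-scaled safe region has diameter at most $V/(4k)$, the total motion of $\bX$ and $\bY$ during any stretch of the schedule in which each is activated at most once (per the chain structure) is bounded; hence, just as before, it suffices to prove part~(i) — preservation of a visibility present at the start of a relevant scheduling unit — since strong visibility cannot be acquired and then lost within such a unit without the pair being mutually visible at its endpoints. So the whole argument comes down to: if $|X Y| \le V$ at the start of a maximal chain of interleaved activations of $\bX$ and $\bY$, then $|X Y| \le V$ throughout.

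First I would make precise the chain decomposition. In the $k$-\async model, the activation schedule of the pair $\{\bX,\bY\}$ decomposes into maximal \emph{activation chains}: maximal runs of activation intervals, alternating or overlapping between the two robots, such that consecutive intervals in the run overlap, with the constraint that within any single interval of one robot at most $k$ intervals of the other begin. At the boundaries between chains both robots are immotile and at fixed locations. Within a chain, one can totally order the \look events; between the $i$-th and $(i{+}1)$-st \look event of (say) $\bY$, robot $\bX$ moves along a sub-segment of its current planned trajectory, so from $\bY$'s point of view $\bX$ is always seen somewhere on a segment $\overline{X_0 X_1}$ determined by $\bX$'s current active interval — and symmetrically for $\bX$'s view of $\bY$. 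This is exactly the hypothesis of Lemma~\ref{lem:base-region-extension} (Base region extension). The key structural point to extract is that, because the chain is built from individual intervals within each of which the partner is activated at most $k$ times, any maximal sub-run during which one robot, say $\bX$, stays in a \emph{single} active interval contains at most $k$ activations of $\bY$, so $\bY$'s position stays within $R^{kV_{\bY}/(8k)}_{Y_0}(X_0,X_1) = R^{V_{\bY}/8}_{Y_0}(X_0,X_1)$ relative to the location $Y_0$ it held at the start of that sub-run.

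The heart of the argument — and the step I expect to be the main obstacle — is the \emph{backward reachability} analysis announced in the introduction, needed to control arbitrarily long chains. The difficulty is that a chain can alternate $\bX$-interval, $\bY$-interval, $\bX$-interval, \ldots\ indefinitely, so neither robot's position is confined to a base region around a \emph{fixed} reference point; the reference point drifts with each handoff. My plan is to argue by backward induction along the chain from the putative moment of visibility loss: suppose at time $\tau$ (the end of some interval, say of $\bX$) we have $|X(\tau) Y(\tau)| > V$; let $Y_0$ be $\bY$'s location at the start of $\bX$'s current interval and $X_0, X_1$ its endpoints. By the Base region extension lemma applied within this interval, $Y(\tau) \in R^{V_{\bY}/8}_{Y_0}(X_0, X_1)$, and a direct geometric estimate (the same one used at the end of the \nesta proof: $|X_1 Y(\tau)| \le |X_1 Y_0^+| \le |X_0 Y_0|$) gives $|X(\tau) Y(\tau)| \le |X_0 Y_0|$, i.e. the separation at the \emph{start} of $\bX$'s interval was already $> V$. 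Now $Y_0 = Y(\sigma)$ for the time $\sigma$ at which $\bX$'s interval began; at $\sigma$, $\bY$ is inside its own current interval, within which $\bX$ is activated at most $k$ times, so the symmetric application of the lemma pushes the witnessed separation back to the start of $\bY$'s interval — strictly earlier in the chain. Each step moves strictly backward and, crucially, is monotone in the relevant separation, so iterating reaches the start of the chain and yields $|X Y| > V$ there, contradicting the hypothesis. The one subtlety to nail down carefully is that the geometric estimate is genuinely monotone and reference-point-independent — i.e. that ``the partner seen anywhere on the current planned segment'' is the only property used, never monotone progress along that segment — which is precisely why the simple form of our motion function (dependence on directions only) is what makes the backward induction go through.
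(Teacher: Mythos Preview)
Your chain decomposition is the right starting point and matches the paper's setup. But the backward induction step has a genuine gap that is exactly the heart of the difficulty in passing from \nesta to \async.

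The step fails here: you claim that during $\bX$'s active interval $[\sigma,\tau]$, with $Y_0 = Y(\sigma)$, Lemma~\ref{lem:base-region-extension} confines $Y(\tau)$ to $R^{V_{\bY}/8}_{Y_0}(X_0,X_1)$, and then the \nesta estimate yields $|X_1\, Y(\tau)| \le |X_0 Y_0|$. But at time $\sigma$, $\bY$ is typically \emph{mid-move} in its own active interval, executing a trajectory it planned at its own earlier \look, when it saw $\bX$ on $\bX$'s \emph{previous} segment --- not on $\overline{X_0 X_1}$. So the hypothesis of Lemma~\ref{lem:base-region-extension} (each of $\bY$'s moves from $Y_0$ confined to a $1/k$-scaled safe region with respect to some $X_* \in \overline{X_0 X_1}$) is violated by $\bY$'s first partial move after $\sigma$, and you cannot place $Y(\tau)$ in $R^{V_{\bY}/8}_{Y_0}(X_0,X_1)$. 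Without that containment, the inequality $|X_1 Y_0^+| \le |X_0 Y_0|$ from the \nesta proof --- which uses both $X_1 \in S_{X_0}(Y_0)$ \emph{and} $Y(\tau)$ in the base region --- is unavailable.

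More fundamentally, the one-step monotonicity you assert (separation at the end of one interval bounded by separation at its start) is simply false for interleaved, non-nested chains; this is precisely why \async is harder than \nesta. The paper does not establish monotonicity. Instead it walks the backward chain of edges $\overline{Y_i X_i},\, \overline{X_i Y_{i-1}},\, \overline{Y_{i-1} X_{i-1}},\ldots$ and proves a length--angle tradeoff (Lemma~\ref{lem:tradeoff}): $|e_t| > V \cos\theta_t$ with $\cos\theta_t \ge \sqrt{(2+\sqrt{3})/4}$, via a separate geometric argument (a circle of radius $V/2$ placed on $\overline{RQ}$ and the cosine law). The contradiction appears only at the very start of the chain, where $\theta_{2i}=0$ forces $|X_0 Y_0| > V$. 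The ``subtlety'' you flag --- non-monotone progress of $\bX$ along its segment --- is not the obstacle; the real obstacle is that each robot's current move is governed by where it saw its partner at \emph{its own} prior \look, which in a chain is interleaved with the partner's moves in a way that destroys any per-interval monotonicity of separation.
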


As in our proof of Theorem~\ref{thm:VisibilityPreservationNesta}, we treat the preservation of initial visibilities and acquired visibilities separately.
In each of these cases, we begin by treating the case $k=1$, which captures many of the most important features of the general argument.
Treatment of the case where $k > 1$, which builds on this simpler case, and exploits many of the same ideas that arose in Theorem~\ref{thm:VisibilityPreservationNesta}, will follow. 
%
\subsubsection{Preservation of initial visibilities}
Suppose that robots $\bX$ and $\bY$ have separation at most $V$ in their initial configuration (when both are immotile).
We need to show that, from this point onward,  provided  $\bY$ moves to a point within the safe region with respect to the observed position of $\bX$, the visibility of $\bX$ will not be lost, no matter how $\bX$ chose to move, \emph{based on its observed position of $\bY$}. Note, the argument here seems to be intrinsically more difficult than that used in the $1$-\nesta model.
In particular, it does not suffice, as it did in the $k$-\nesta model, to focus attention on just one active interval for one of the two robots whose mutual visibility is being considered. In the $1$-\async model, the validity of each step depends on the full history of transitions since the last time both robots were simultaneously immotile.\\

\noindent{\bf 1-\async Case.}\\
\indent We say that a pair of robots $\bX$ and $\bY$ is {\em engaged} at time $t$ if at least one of them is active. Suppose that $\bX$ and $\bY$ become engaged at time $t _0$,
with $|X(t_0) Y(t_0)| \le V$. We proceed to show that while $\bX$ and $\bY$ remain engaged their separation remains at most $V$.
Suppose, leading ultimately to a contradiction, that
\begin{enumerate}
\item  
at time $t_0$, $\bX$ is starting an active interval and
$\bY$ is immotile,
i.e. either idle or starting an active interval, 
\item  
each subsequent move of $\bX$ and $\bY$, the timing and extent of which is determined by an adversarial scheduler, has a target destination that respects the 
safe regions of $\bX$ and $\bY$ alone (and hence may 
form a strict superset of the realizable destinations when other robots are taken into consideration); and
\item  
at some subsequent time 
$t_*$, during the $i$-th active interval of $\bX$ and after the start of the $h$-th active interval, and before the start of the $(h+1)$-th active interval, of $\bY$, 
and prior to which $\bX$ and $\bY$ remain engaged,
$\bX$ and $\bY$ reach positions $X(t_*)$ and $Y(t_*)$ whose separation is greater than $V$.
\end{enumerate}
If the schedules of $\bX$ and $\bY$
are truncated at time $t_*$, we refer to the interval  $[t_0, t_*]$ as a {\em doomed engagement} of $\bX$ and $\bY$. There is, of course, no loss of generality in assuming that both $\bX$ and $\bY$ are active just prior to time $t_*$.
We will assume that the doomed engagement under consideration minimizes $i+h$.
For $a \ge 1$, we define $X_a$ (respectively, $Y_a$) to be the position of robot $\bX$ (respectively $\bY$) at the end of its $a$-th active interval starting at or after time $t_0$. 
For convenience in subsequent definitions, we denote $X(t_0)$ (respectively $Y(t_0)$) by $X_0$ (respectively $Y_0$), and imagine that $\bY$ has a $0$-th active interval intersecting $t_0$ during which it is stationary (and accordingly define $Y_{-1}= Y_0$).

We first note that, by minimality, 
\begin{enumerate}
\item
During the doomed engagement,
the $\bX$ and $\bY$ active intervals interleave, i.e.  (i) for all $1 \le j \le i$, the $j$-th activation interval of $\bY$ starts during the $j$-th active interval of $\bX$,
and (ii) for all $1 < j \le i$, the $j$-th
active interval of $\bX$ starts during the $(j-1)$-st active interval of $\bY$ (see Figure~\ref{newfig:InterleavedActivations}). Minimality dictates that the active intervals overlap, and by our $1$-\nesta analysis they cannot nest.

\begin{figure}[htbp]
\centering
\includegraphics{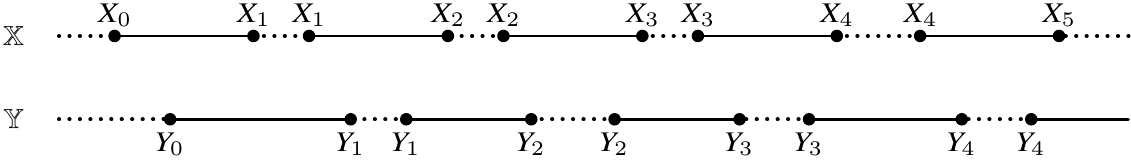}
\caption{Without loss of generality, active intervals for $\bX$ and $\bY$ interleave, up to disconnection. }
\label{newfig:InterleavedActivations}
\end{figure}

\item
The doomed engagement ends 
at the end of the $i$-th activation of both $\bX$ and $\bY$);
when $\bX$ is at location $X_{i}$ and $\bY$ is at location $Y_{i}$ (the case where 
the doomed engagement ends
at the end of the $i+1$-st activation of  $\bX$ and the $i$-th activation of $\bY$ is handled identically, with only a trivial change in indexing).

\item
For all $0 \le j \le i$, the separation between $X_j$ and $Y_{j-1}$, as well as that between $Y_{j-1}$ and $X_{j-1}$, is at most $V$.

\item 
For all $1 \le j < i$, the point $Y_{j+1}$ lies in the basic safe region 
$S^{V_{\bY}/8}_{Y_{j}}(X_*)$ 
associated with $\bY$ at location $Y_{j}$, viewing $\bX$ at some location $X_*$ between $X_{j}$ and $X_{j+1}$. 
Similarly, the point $X_{j+1}$ lies in the basic safe region $S^{V_{\bX}/8}_{X_{j}}(Y_*)$
associated with $\bX$ at location $X_{j}$, viewing $\bY$ at some location $Y_*$ between $Y_{j-1}$ and $Y_{j}$. 
We refer to these as the \emph{safety constraints} for points $Y_{j}$ and $X_{j}$, respectively.
\end{enumerate}
We will analyze the chain of edges 
$$\overline{Y_{i} X_{i}},\; \overline{X_{i} Y_{i-1}},\; \overline{Y_{i-1}X_{i-1}},\; \overline{X_{i-1}Y_{i-2}},\ldots, \overline{X_2Y_1},\; \overline{Y_1 X_1},\; \overline{X_1 Y_{0}},\; \overline{Y_0 X_0},\;
\overline{X_0 Y_{-1}}
$$ 
associated with a minimal doomed engagement for $\bX$ and $\bY$,
essentially a walk from their terminal configuration back to their initial configuration, ultimately showing that no such chain exists that satisfies all of the specified constraints, thereby demonstrating our contradiction. 

For $t \ge 0$, let $e_t$ denote the $t$-th edge on this chain; i.e., 
$e_t = \overline{Y_{i- t/2 } X_{i-t/2 }}$, for even $t$, and
$e_t = \overline{X_{i -(t-1)/2}Y_{i-1 -(t-1)/2}}$, for odd $t$,
and
$\theta_t$ --- the (clockwise) angle, in the range $(-\pi, \pi]$ between $e_{t+1}$ and $e_{t}$. 
Note that $\theta_{2i} = 0$ but, by our minimality assumption, 
$\theta_t \neq 0$, for $0 < t < 2i$.
The minimality assumption implies other structural constraints as well:


\begin{enumerate}

\item 
Observe that, in order for the distance between $Y_{i}$  and $X_{i}$  to be at least $V$, 
it must be the case that 
$\theta_0 \neq 0$ (otherwise, by minimality of our example, $Y_i$ must lie on the line through $X_i$ and $Y_{i-1}$ but outside the segment $\overline{X_i Y_{i-1}}$, and so outside of the safe region 
$S^{ V_{\bY}/8}_{Y_{i-1}}(X_*)$
the safe region for $\bY$ at location $Y_{i-1}$ viewing $\bX$ at any location $X_*$ on the segment $\overline{X_{i-1} X_i}$). Thus, in particular,  $i$ must be greater than $1$; 

\item 
Next, note that, for all $0< j< i$, $Y_{j+1}$ is \emph{not} within the safe region for $\bY$ at location $Y_j$ viewing $\bX$ at location $X_{j+1}$ (and $X_{j+1}$ is \emph{not} within the safe region for $\bX$ at location $X_j$ viewing $\bY$ at location $Y_{j}$) (otherwise, our minimal separating sequence could be shortened, by starting instead at $X_{j+1}$ and $Y_j$, or $X_j$ and $Y_j$, a contradiction).

\begin{figure}[htbp]
	\centering
	\includegraphics{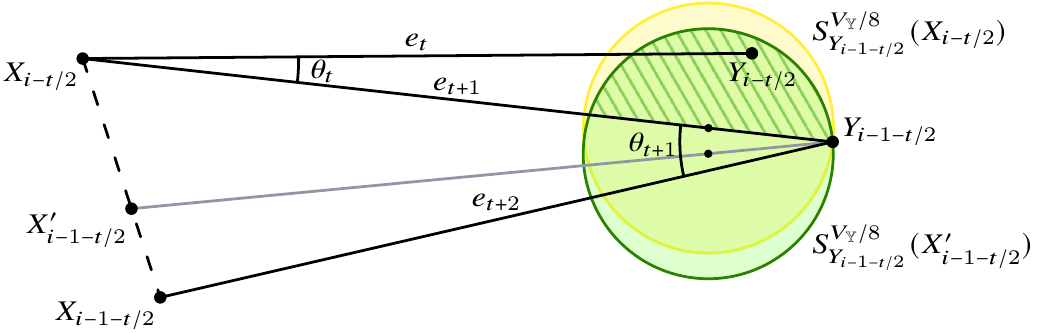}
	\caption{
		Illustration for the case when $\theta_t>0$ and $\theta_{t+1}<0$.
		Region $S^{V_\bY/8}_{Y_{i-1-t/2}}(X'_{i-1-t/2})$ (in green) is the safe region of $\bY$ at location $Y_{i-1-t/2}$ when it sees $\bX$ at some point $X'_{i-1-t/2}\in\overline{X_{i-1-t/2}X_{i-t/2}}$.
		Region $S^{V_\bY/8}_{Y_{i-1-t/2}}(X_{i-t/2})$ (in yellow) is the safe region of $\bY$ at location $Y_{i-1-t/2}$ when it sees $\bX$ at $X_{i-t/2}$.
		As the striped region is fully contained in $S^{V_\bY/8}_{Y_{i-1-t/2}}(X_{i-t/2})$, any location $Y_{i-t/2}$ that is in $S^{V_\bY/8}_{Y_{i-1-t/2}}(X'_{i-1-t/2})$ is also in $S^{V_\bY/8}_{Y_{i-1-t/2}}(X_{i-t/2})$.}
	\label{newfig:AngleConsistencyNEW}
\end{figure}

\item
It follows that 
 we can assume that 
 $\theta_t > 0$, for $0 < t < 2i$.
Certainly this holds without loss of generality for $\theta_0$ (by replacing clockwise angles with counterclockwise, if necessary). 
 Moreover, if it holds for $\theta_t$ then it must hold for $\theta_{t+1}$.
Suppose, to the contrary, that $\theta_t >0$ and 
$\theta_{t+1} < 0$, when $t$ is even (see Figure~\ref{newfig:AngleConsistencyNEW}); the argument when $t$ is odd is essentially the same. 
Then, since $Y_{i -t/2}$ belongs to the safe region for $\bY$ at location $Y_{i-1- t/2 }$ when it sees $\bX$ at some location on the segment $\overline{X_{i-1-t/2 } X_{i-t/2 }}$, $Y_{i -t/2}$ must also belong to the safe region for $\bY$ at location $Y_{i-1- t/2 }$ when it sees $\bX$ at 
location $X_{i-t/2 }$. 
(This follows because, restricted to points $Z$ for which 
$\angle(Y_{i-1- t/2}, X_{i-t/2 }, Z ) >0$,
the safe region for $\bY$ at location $Y_{i-1- t/2 }$ viewing $\bX$ at any location on $\overline{X_{i-1-t/2 } X_{i-t/2 }}$ is contained in the safe region for $\bY$ at location $Y_{i-1- t/2 }$ viewing $\bX$ at location $X_{i-t/2}$;
This, contradicts our minimality assumption, as noted previously.

\end{enumerate}

It is not hard to see that, in order to satisfy local safety constraints, there is a kind of tradeoff between 
$|e_t|$ 
and 
$\theta_t$: specifically, if $|e_t|$ is increased then the safety constraint at the common endpoint of $e_t$ and $e_{t+1}$ can be satisfied with a smaller angle $\theta_t$.
We make this tradeoff explicit 
in the following:

\begin{lemma}\label{lem:tradeoff}
$|e_t| > V \cos \theta_t $ and 
$\cos \theta_t \ge \sqrt{(2+\sqrt{3})/4} \approx 0.9659$, for all $t \ge 0$.
\end{lemma}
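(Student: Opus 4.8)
The plan is to prove the two inequalities of Lemma~\ref{lem:tradeoff} by a joint induction on the edge index $t$, running backwards along the chain from $e_0 = \overline{Y_i X_i}$ (whose length is at least $V$ since $\bX$ and $\bY$ are separated at the end of the doomed engagement) toward the initial configuration. The base case $t=0$ is immediate: $|e_0| = |X_i Y_i| \ge V \ge V\cos\theta_0$, and $\cos\theta_0 \ge \sqrt{(2+\sqrt3)/4}$ will follow from the general inductive argument (or, for $t=0$, can be read off the final-angle constraint recorded just before the lemma, namely that $\theta_0\ne 0$ together with the safety constraint forcing $Y_i$ into $S^{V_{\bY}/8}_{Y_{i-1}}(X_*)$).

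For the inductive step, suppose $|e_t| > V\cos\theta_t$ with $\cos\theta_t$ above the stated threshold; I want to deduce the same for $e_{t+1}$. Fix attention on the common endpoint $P$ of $e_t$ and $e_{t+1}$ — say $P = Y_{i-t/2}$ when $t$ is even (the odd case is symmetric, swapping the roles of $\bX$ and $\bY$). The key geometric input is the safety constraint at $P$: the \emph{other} endpoint of $e_{t+1}$ — here $Y_{i-1-t/2}$ — together with the fact that $P$ lies in the basic safe region $S^{V_\bY/8}_{Y_{i-1-t/2}}(X_*)$ for some $X_*$ on $\overline{X_{i-1-t/2}\,X_{i-t/2}}$, while (by the minimality/"not shortenable" constraint established just above the lemma) $P$ is \emph{not} in $S^{V_\bY/8}_{Y_{i-1-t/2}}(X_{i-t/2})$. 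Since $V_\bY \le V$, the safe-region disk has radius at most $V/8$ and is tangent to $Y_{i-1-t/2}$; the condition "$P$ in one such disk but not the extreme one" pins down a relationship between the angle $\theta_t$ (which measures how $e_t$ turns relative to $e_{t+1}$ at $P$) and $|e_{t+1}| = |P\,Y_{i-1-t/2}|$. Concretely: a point at distance $\ell$ from the disk's apex, in a direction making angle $\phi$ with the apex-to-center direction, lies in the radius-$r$ disk iff $\ell \le 2r\cos\phi$; applying this with the fact that $P$ lies on the boundary of the reachable locus (so $\ell$ is as large as the constraint allows) and that the turn at $P$ forces $\phi$ to be related to $\theta_t$ and $\theta_{t+1}$, one extracts $|e_{t+1}| > V\cos\theta_{t+1}$ and a lower bound on $\cos\theta_{t+1}$. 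The numerical constant $\sqrt{(2+\sqrt3)/4}$ should emerge as the fixed point of the recurrence relating consecutive $\cos\theta_t$ when each $|e_t|$ is pushed to its minimal feasible value $V\cos\theta_t$: writing $c = \cos\theta$, the worst-case chain forces $c$ to satisfy something like $c \ge \sqrt{(2+\sqrt3)/4}$ because $2\cos(\pi/12) = \sqrt{2+\sqrt3}$ and $\pi/12$ is the extreme turn angle compatible with the $V/8$ safe-region radius relative to edge length $V/2$.

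The main obstacle I anticipate is correctly bookkeeping the geometry of the turn at $P$: $\theta_t$ is defined as the clockwise angle from $e_{t+1}$ to $e_t$, and one needs to relate this to the angle $\phi$ between the edge $e_{t+1}$ (viewed from $P$) and the direction from $P$ toward the relevant $X_*$, which governs membership in the safe-region disk. Getting the trigonometric identity exactly right — including using the already-established fact (from the discussion preceding the lemma) that all $\theta_t > 0$ for $0 < t < 2i$, so there is no sign ambiguity and the turns are "consistent" — is where the real work lies; once the membership inequality $|e_{t+1}| \le (V/4)\cos\phi$-type bound is set up against the non-membership in the extreme disk, the propagation of both inequalities and the extraction of the constant is a routine (if delicate) computation. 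A secondary subtlety is handling the even/odd parity uniformly: since the safe-region radii $V_\bX/8$ and $V_\bY/8$ may differ, one should phrase the step in terms of the common bound $V/8$ so that the same estimate applies at every vertex of the chain regardless of which robot it belongs to.
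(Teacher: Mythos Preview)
Your proposal has a genuine structural gap. First, the bookkeeping is off: for $t$ even, $e_t=\overline{Y_{i-t/2}X_{i-t/2}}$ and $e_{t+1}=\overline{X_{i-t/2}Y_{i-1-t/2}}$, so their common vertex is $X_{i-t/2}$, not $Y_{i-t/2}$. More importantly, the safety constraint you want to exploit is not located at that common vertex at all. The operative constraint is that $S:=Y_{i-t/2}$ (the \emph{far} end of $e_t$) lies in the safe region $S^{V_{\bY}/8}_{Q}(X_*)$ with apex $Q:=Y_{i-1-t/2}$ (the \emph{far} end of $e_{t+1}$). So the induction step intrinsically involves four consecutive chain vertices $S,R,Q,P$ and three edges $e_t,e_{t+1},e_{t+2}$, with the safe region at $Q$ pointing toward $e_{t+2}$; it does not reduce to a local picture at one hinge.

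Second, your mechanism bounds the wrong quantity. The characterization ``$\ell\le 2r\cos\phi$'' with apex $Q$ gives an upper bound on $|QS|\le V/4$, which is just the one-step move length; it says nothing directly about $|e_{t+1}|=|RQ|$. The paper's argument obtains $|RQ|>V\cos\varphi$ by a different device: place a circle $c_{V/2}$ of radius $V/2$ centred at the point $M$ on $\overline{RQ}$ with $|RM|=V/2$. The inductive hypothesis $|SR|>V\cos\beta$ forces $S$ outside $c_{V/2}$; combined with $|IS|\le V/8$ (where $I$ is the centre of $S^{V/8}_Q(P)$) this gives $|MI|>3V/8$, and the cosine law in $\triangle QIM$ then yields $|MQ|>(V/2)\cos\varphi$, hence $|RQ|>V\cos\varphi$. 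The angle bound $\cos\varphi\ge\sqrt{(2+\sqrt3)/4}$ is not a fixed point of a recurrence but comes from a separate, simple observation: $|PR|\le V/4$ bounds the perpendicular distance from $R$ to line $QP$, so $|RQ|\sin\varphi\le V/4$, whence $\sin\varphi\cos\varphi\le 1/4$. Neither of these two ideas appears in your sketch, and the ``not in the extreme disk'' minimality fact you emphasise is used only earlier, to force all $\theta_t>0$, not in the length estimate itself.
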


Note that since the chain ends with $\theta_{2i} = 0$ (because $\bY$ is stationary in its $0$-th active interval), it follows that the edge $e_{2i} = \overline{X_0 Y_{0}}$ has length greater than $V$, a contradiction that completes the proof of Theorem~\ref{thm:VisibilityPreservationAsync} in the case $k=1$.

\begin{proof}

The proof of Lemma~\ref{lem:tradeoff} proceeds by induction on $t$.
 

    Since $|e_0|$, the length of edge $\overline{X_{i} Y_{i}}$, is at least $V$, it follows immediately that $|e_0| \ge V \cos \theta_t$. 
    To complete the basis of our induction, observe that 
    in order to satisfy the safety constraint at $Y_{i-1}$ it must be the case that $Y_{i-1}$ is no further than $V/4$ from $Y_i$, and hence no further than $V/4$ from $e_0$. 
    It follows that $|e_0| \sin \theta_0 \le V/4$, and thus
    $\sin \theta_0 \le 1/4$, which holds exactly when
    $\cos \theta_0 \ge \sqrt{15/16} > \sqrt{(2+\sqrt{3})/4}.$

    Suppose now that 
    $|e_t| > V \cos \theta_t $ and 
$\cos \theta_t \ge \sqrt{(2+\sqrt{3})/4}$.
    We will show that 
    $|e_{t+1}| > V \cos \theta_{t+1} $ and 
$\cos \theta_{t+1} \ge \sqrt{(2+\sqrt{3})/4}$.
For this induction step, consider four successive points on the chain 
(renamed $S$, $R$, $Q$, and $P$, for simplicity and to reflect the commonality of the argument for even and odd $t$),
their associated edges $\overline{SR}$, $\overline{RQ}$ and $\overline{QP}$, and their intervening angles (renamed $\beta$ and 
$\varphi$) 
(see Figure~\ref{newfig:anlge-divergence}).
We show that assuming $|SR| > V \cos \beta$ and
$\cos \beta \ge \sqrt{(2+\sqrt{3})/4}$, then
$|RQ| > V \cos \varphi$ and $\cos \varphi \ge \sqrt{(2+\sqrt{3})/4}$.

\begin{figure}[htbp]
	\centering
	\includegraphics[page=3]{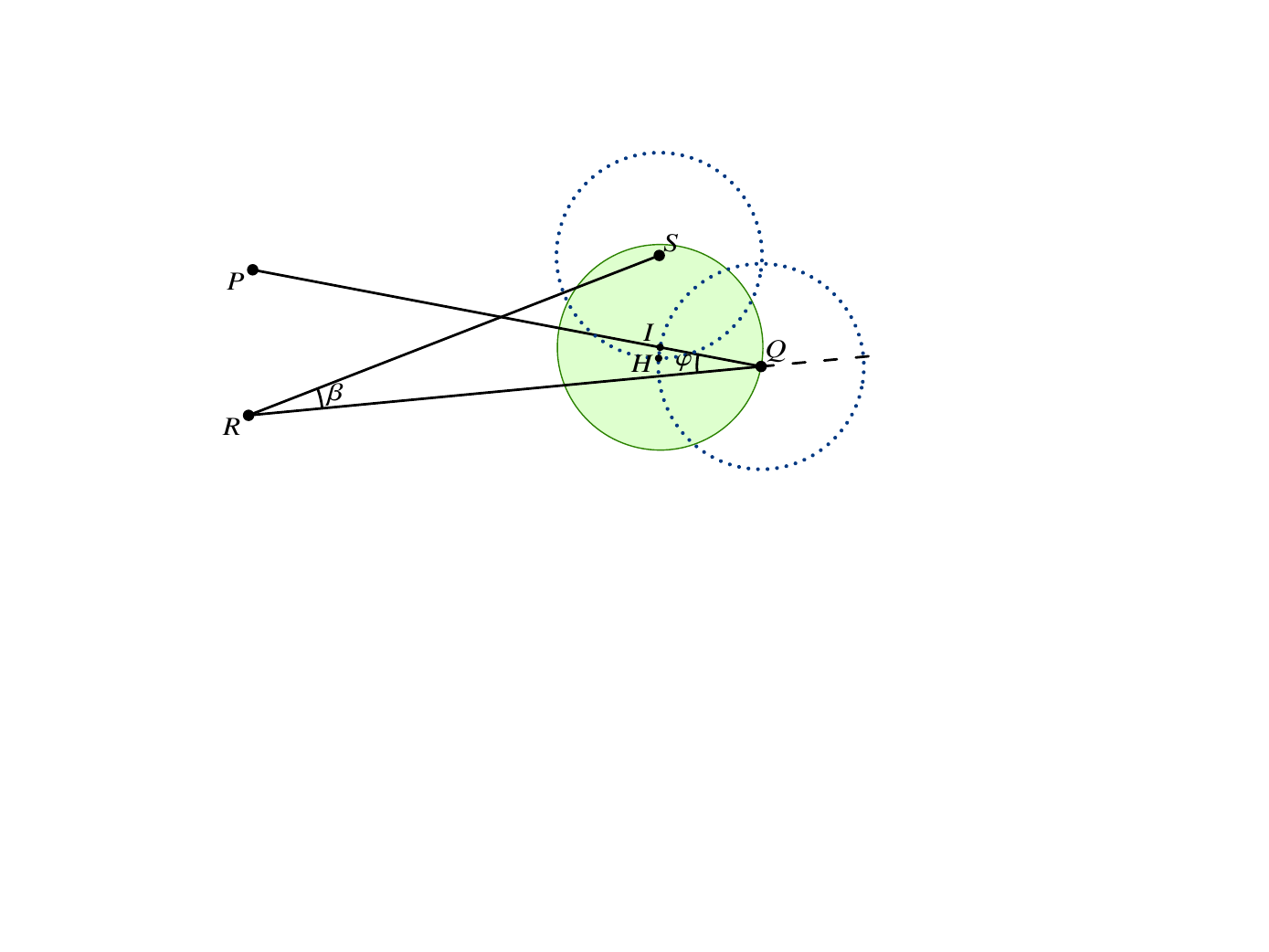}
	\caption{
The point $M$ is at distance $V/2$ from $R$, and is the center of the circle $c_{V/2}$ (in red). The safe region $S^{V/8}_Q(P)$ is shown in green. The two circles must intersect for the conditions on point $S$ to hold.}
	\label{newfig:anlge-divergence}
\end{figure}

First observe that motion from $P$ to $R$ and from $Q$ to $S$ are the result of single move steps of the corresponding robots. 
As we shall see in Section~\ref{sec:IncrementalConvergence}, where we specify a robot's destination as a function of the locations of its visible neighbours, this means that $|PQ|$ and $|QS|$ are both at most $V/8$.
It follows then that both $|PQ|\ge |RS| - V/4 > V/2$ and 
$|RQ|\ge |RS| - V/8 > V/2$; since $|RS|\ge\sqrt{(2+\sqrt{3})/4}$. In particular, the robot at location $Q$ sees the robot at position $P$ as a distant neighbour when planning its motion to location $S$.

Consider a circle $c_{V/2}$ of radius $V/2$ centered at a point $M$ lying on the segment $\overline{RQ}$ at distance $V/2$ from $R$
(see Figure~\ref{newfig:anlge-divergence}).
Point $S$ must be outside of $c_{V/2}$, otherwise $|SR| \le V\cos \beta$ (due to the angle $\angle RSQ' \ge \pi/2$ where $Q'$ is an intersection of the extension of $\overline{RQ}$ with $c_{V/2}$); hence $|MS| > V/2$.
Let $I$ be the center of $S^{V/8}_Q(P)$, i.e., $I$ lies on $\overline{QP}$ at distance $V/8$ from $Q$. 
Since point $S$ lies in $S^{V/8}_Q(P)$, $|IS| \le V/8$ and hence $|MI| \ge |MS|-|IS| > 3V/8$. 

Consider the triangle $\triangle QIM$.
From the cosine theorem it follows that 
\[
|MI|^2= |MQ|^2 + |IQ|^2 - 2|MQ| |IQ| \cos\varphi\,.
\] 
Then, after dividing by $V^2$ and replacing $|MQ|/V$ by $x$,  the equation above becomes
\[
x^2 - 2x \frac{|IQ|}{V} \cos\varphi + \left(\frac{|IQ|}{V}\right)^2 - \left(\frac{|MI|}{V}\right)^2 = 0\,.
\]
Since $x>0$, it follows that $x = \frac{|IQ|}{V} \cos\varphi + \sqrt{ \left(\frac{|IQ|}{V}\right)^2 \cos^2 \varphi - \left(\frac{|IQ|}{V}\right)^2 + \left(\frac{|MI|}{V}\right)^2}$. 
But since $|IQ|=V/8$, $|MI| > 3V/8$, and $\cos\varphi \le 1$, we get
\[
x >  \frac{\cos\varphi}{8} + \sqrt{ \frac{\cos^2 \varphi}{64} - \frac{1}{64} + \frac{9}{64}}
   \ge  \frac{\cos\varphi}{8} + \sqrt{ \frac{\cos^2 \varphi}{64} - \frac{\cos^2 \varphi}{8}}
    =    \frac{\cos\varphi}{8} + \frac{3\cos\varphi}{8} = \frac{\cos\varphi}{2}\,.
\]
%
Thus,  $\frac{|RQ|}{V} = \frac{1}{2} + \frac{|MQ|}{V} > \frac{1}{2} + \frac{\cos\varphi}{2} \ge \cos\varphi$.

Now, as $|PR|\le V/4$ and the length of the perpendicular dropped from $R$ onto the line supporting $\overline{PQ}$ is at most $|PR|$, we obtain that
\[
\sin\varphi \le \frac{V}{4|QR|} \le \frac{1}{4\cos\varphi}\,,
\]
or $\sin^2\varphi \cos^2\varphi \le 1/16$ from which it follows that $\cos^2\varphi  \ge \frac{2+\sqrt{3}}{4}$ or $\cos\varphi\ge\sqrt{\frac{2+\sqrt{3}}{4}}$.
\end{proof}

\noindent {\bf $k$-\async Case, when  ${k > 1}$.}\\
\indent As we saw in the $k$-\nesta model, we take advantage of the fact that up to $k$ motions within safe regions that have been scaled by a factor of $1/k$, are largely equivalent to single unscaled motions.
Accordingly, the proof of Theorem~\ref{thm:VisibilityPreservationAsync} when $k > 1$ is essentially a reduction to the case $k=1$: we argue that if there is a separating sequence when $k >1$, there must also be one when $k=1$ (which we have just shown is impossible). The proof proceeds as follows:

\begin{enumerate}
    \item 
    As in the case where $k=1$, we first suppose that a separating sequence for two robots $\bX$ and $\bY$ exists, and define a minimal separating sequence to be one that minimizes the total number of activations of $\bX$ and $\bY$.  As before, it follows from minimality that the activations of $\bX$ and $\bY$ interleave, with up to $k$ activations of one, an \emph{activation cluster}, between activations of the other. 
    \item 
    We denote by $X_{j,1}, \ldots, X_{j,s_j}$, where $s_j \le k$, the positions of robot $\bX$ at the start of the first activation in its $j$-th activation cluster, which occurs as robot $\bY$ is executing the last activation cycle of its $(j-1)$-st activation cluster.
    (Similarly, we denote by $Y_{j,1}, \ldots, Y_{j,t_j}$, where $t_j \le k$, the positions of robot $\bY$ at the start of the first activation in its $j$-th activation cluster, which occurs as robot $\bX$ is executing the last activation cycle of its $j$-th activation cluster
    (see Figure~\ref{newfig:kInterleavedActivations}).
    
    \begin{figure}[htbp]
\centering
\includegraphics[page=2]{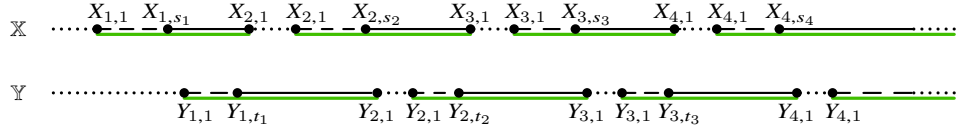}
\caption{Without loss of generality, active intervals for $\bX$ and $\bY$ interleave, up to disconnection, with at most $k$ activations of one between activations of the other. Activation clusters are shown in green.}
\label{newfig:kInterleavedActivations}
\end{figure}

    \item We refer to the locations at the start of activation clusters as \emph{activation checkpoints}, and observe that there is no loss of generality in assuming that the first point at which a separation occurs is at the separation checkpoint pair $(X_{i+1, 1}, Y_{i+1, 1})$. 
    
    \item 
    The same \emph{safety constraints} that applied in the case where $k=1$ are expressed more generally as follows.
    For all $j \le i$,
    \begin{enumerate}[(i)]
        \item for $1 \le t < t_j$, the point $Y_{j, t+1}$ lies in the $1/k$-fraction safe region associated with $\bY$ at location $Y_{j,t}$, viewing $\bX$ at some location between $X_{j, s_j}$ and $X_{j+1, 1}$, and
        \item the point $Y_{j+1, 1}$ lies in the $1/k$-fraction safe region associated with $\bY$ at location $Y_{j,t_j}$, viewing $\bX$ at some location between $X_{j, s_j}$ and $X_{j+1, 1}$.
    \end{enumerate}
    Similarly, 
    \begin{enumerate}[resume*]
        \item for $1 \le s < s_j$, the point $X_{j, s+1}$ lies in the $1/k$-fraction safe region associated with $\bX$ at location $X_{j,s}$, viewing $\bY$ at some location between $Y_{j-1, s_{j-1}}$ and $Y_{j, 1}$, and
        \item the point $X_{j+1, 1}$ lies in the $1/k$-fraction safe region associated with $\bX$ at location $X_{j,s_j}$, viewing $\bY$ at some location between $Y_{j-1, s_{j-1}}$ and $Y_{j, 1}$.
    \end{enumerate}
    \item 
    If the sequence of points $X_{j, 2}, \ldots, X_{j, s_j}$ all lie on the segment $X_{j, 1} X_{j+1, 1}$ we say that the $j$-th activation cluster of $\bX$ is \emph{linear}. 
    Similarly, if the sequence of points $Y_{j, 2}, \ldots, Y_{j, t_j}$ all lie on the segment $Y_{j, 1} Y_{j+1, 1}$ we say that the $j$-th activation cluster of $\bY$ is \emph{linear}. 
    
\tab We can assume that in our hypothesized minimum separating sequence all activation clusters are linear. This follows by induction on $j$ (the index of the activation cluster) using the facts that (i) during the initial activation cluster of $\bX$, robot $\bY$ is fixed at location $Y_{1,1}$, and (ii) assuming the other robot is following a linear trajectory between activation checkpoints, there is no loss of generality in assuming that the next activation cluster is linear.  The latter exploits our characterization, from our analysis in the $k$-\nesta setting, of the locus of points reachable by up to $k$ successive $1/k$-fraction moves of one robot seeing the other at successive points on some linear trajectory, in particular the fact that (by convexity of the regions involved) all reachable points are reachable by a straight path.
    
    \item 
    Given that all activation clusters in our separating sequence are linear, it becomes evident that the sequence of transitions between successive activation checkpoints 
    corresponds closely to a valid separating sequence in the $1$-\async model.
    We then can extend the proof of Lemma~\ref{lem:tradeoff} to the case when $k>1$.
    In it, our characterization of the full safe region (recall Lemma~\ref{lem:base-region-extension}) is slightly more complicated (and slightly larger) than assumed for the case $k=1$.
    In fact, our analysis in Lemma~\ref{lem:tradeoff} (in the case $k=1$) treats the situation where point $S$ lies in the core of the safe region for the robot at location $Q$.
    
    \tab We show that $|RQ| > V \cos \varphi$ in the second case as well, when $S$ lies in the bulge (see Figure~\ref{newfig:anlge-divergence-k}). 
    Observe that still $|MI|>3V/8$ and that, consequently, our earlier proof continues to hold, despite the fact that it is no longer the case that $|IS| \le V/8$.
    Indeed, let $H$ be the intersection point of a ray emanating from $I$ along $\overrightarrow{RI}$ and circle $S^{V/8}_Q(P)$.
    Note that $|IH| = V/8$ and $|MH|>V/2$ (otherwise point $H$ and, thus, point $S$ do not lie outside of circle $c_{V/2}$, contradicting our assumption that $|SR| > V\cos \beta$).
    Thus $|MI|\ge |MH|-|IH| > 3V/8$, and Lemma~\ref{lem:tradeoff} holds for $k>1$.

\end{enumerate}


\begin{figure}[htbp]
	\centering
	\includegraphics[page=4]{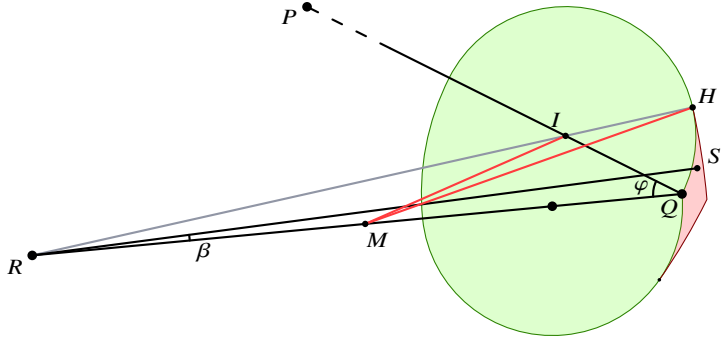}
	\caption{
		Region $R^{V/8}_Q(R,P)$ intersects $c_{V/2}$ only if $S^{V/8}_Q(P)$ intersects $c_{V/2}$.
	}
	\label{newfig:anlge-divergence-k}
\end{figure}

\subsubsection{Preservation of acquired visibilities}

Here it suffices to observe that if at some point $\bY$ is seen by $\bX$ at some distance less than $V/2$, then $\bX$ and $\bY$ must remain mutually visible thereafter. This follows because at the next checkpoint robots $\bX$ and $\bY$ each could not have moved more than distance $V/8$, and hence their separation could not exceed $3V/4$ at that point. Hence, by our inductive assertions established in the preceding subsubsection that demonstrate that any configuration of $\bX$ and $\bY$ leading to a separation of at least $V$ must be preceded by configurations of $\bX$ and $\bY$ with separation at least $0.96 V$, we conclude that $\bX$ and $\bY$
must remain mutually visible.

\section{Incremental congregation in the \emph{k}-\nesta and \emph{k}-\async models
}
\label{sec:IncrementalConvergence}


We have, to this point, not specified the details of the destination function of an activated robot $\,\bZ$, other than that it must be a point in the intersection of the safe regions with respect to all of its distant neighbours (those neighbours whose distance is in the range $(V_{\bZ}/2, V_{\bZ}]$, where $V_{\bZ}$ is the distance to the furthest visible neighbour of $\,\bZ$). We choose the center point of this intersection as the target destination.

In the event that the distant neighbours are not properly contained in any halfspace with $\,\bZ$ on its boundary, the intersection of the corresponding safe regions is just the current location of $\,\bZ$, so the target destination is just this current location (i.e. $\,\bZ$ does not move).
Other events consist of two cases:
\begin{enumerate}[(i)]
\item $\bZ$ has only one distant neighbour. In this case the target destination is just the center point of the sole constraining safe region.

\item $\bZ$ has two or more distant neighbours. In this case the target destination is the center point of the intersection of the safe regions associated with the two distant neighbours that define the largest sector containing all of the distant neighbours of $\,\bZ$ 
(see Figure~\ref{newfig:MotionSpecification}).
Note that the center point is the middle point of the segment connecting the centers of the safe regions corresponding to the two distant neighbours.
\end{enumerate}
\begin{figure}[tbp]
\centering
\includegraphics[page=1]{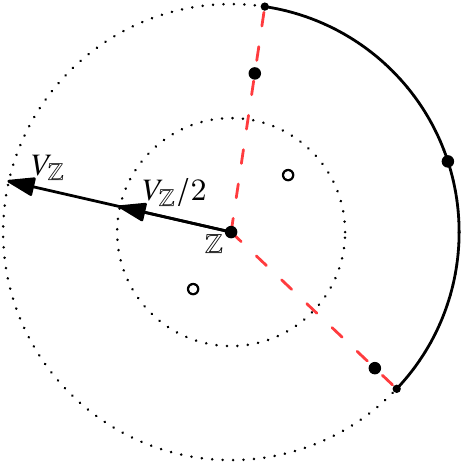}
\hfill
\includegraphics[page=2]{newfig/motion.pdf}
\hfill
\includegraphics[page=3]{newfig/motion.pdf}
\caption{Visualization of the target destination $x$ for $\bZ$ when it admits two distant neighbours.}
\label{newfig:MotionSpecification}
\end{figure}%
Note that, in all events, the distance to the target destination of an activated robot is at most $V/8$.

As we have demonstrated, our algorithm has the property that it preserves all initial visibilities 
as well as all visibilities that at some point fall below the $V/2$ threshold. 
We say that a pair of robots are {\em strong neighbours} at some point in time if one of these conditions holds. 
It follows that the graph of the strong neighbour relation is both connected, because it includes all initially visible pairs, and monotonic. In particular, 
at any moment in time there is a path in this graph joining any pair of robots.
Furthermore there is a point in time after which the graph no longer changes:
thereafter for any pair of robots, (i)~their separation is continuously less than or equal to $V$ (i.e., they are mutually visible), or (ii)~their separation is continuously greater than $V/2$. This, combined with the fact that our strategy is hull-diminishing, suffices to show that the robots always converge to a point, from any initial configuration.

We begin with an informal overview of the argument.
\begin{itemize}
\item
As with all hull-diminishing strategies, progress towards convergence to a point can be measured in terms of the shrinkage of the convex hull of the full set of robot locations. Let $CH_t$ denote the convex hull of the robot locations, including their planned but as yet unrealized trajectories, at time $t$.  Then incremental progress towards convergence is captured by the following observation (made previously in~\cite{K11}): 

\begin{center}
\smallskip
\begin{minipage}{0.8\columnwidth}
$CH_{t^+} \subseteq CH_t$ for all $t^+ > t$.
This follows from the facts that (i) when a motion is planned it is towards a point inside the current convex hull, and (ii) when the motion completes, and the planned trajectory (or some prefix) has been realized, replacing the trajectory by its endpoint can only shrink the convex hull.
\end{minipage}
\smallskip
\end{center}


\item
Since convex hulls form a nested
sequence, they either converge to a single point (which is what we are trying to establish), or not. Suppose that the convex-hull sequence does not converge in this way for some initial configuration of $n$ robots. Then both the convex hull perimeter and the hull radius (the radius of the smallest ball enclosing the convex hull) must form monotonically decreasing sequences that converge to some non-zero values, $\Lambda$ and $\rho$ respectively. 
Thus, for any $\varepsilon > 0$ there is a time $t_{\varepsilon}$ after which the hull perimeter lies in the range $[\Lambda, \Lambda + \varepsilon)$ 
and the hull radius lies in the range $[\rho, \rho + \varepsilon)$.

\item
Let $t$ be any time after which the hull radius lies in the range $[\rho, 3\rho/2)$  and consider the convex hull $H = CH_t$, with perimeter $\Lambda_t$, and its smallest bounding circle $\,\Xi$, with center $O_H$ and radius $r_H$, at that time (see Figure~\ref{fig:LimitingHull}). 
We will argue that  there is a $\lambda > 0$, dependent on $\rho$, such that at some time $t' > t$ the perimeter of $CH_{t'}$ becomes less than $\Lambda_t-\lambda$. This leads to a contradiction, if we choose $\varepsilon < \lambda$ and $t \ge t_{\varepsilon}$.

\end{itemize}

\begin{figure}[tbp]
\centering
\includegraphics[]{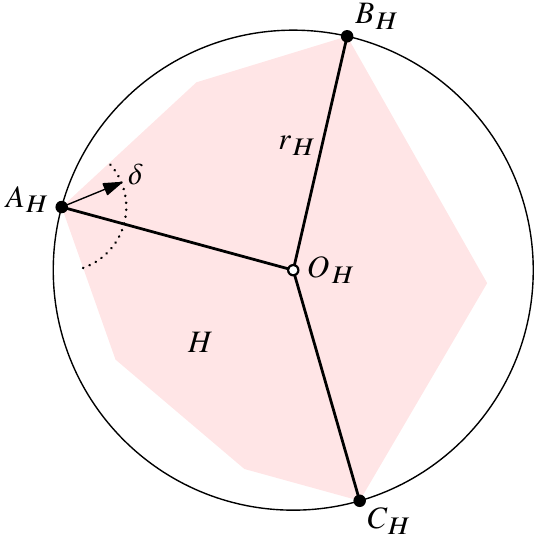}
\caption{Smallest bounding circle $\,\Xi$.}
\label{fig:LimitingHull}
\end{figure}

We know that $\,\Xi$ is determined by three critical points (robot positions) $A_H$, $B_H$ and $C_H$ (two of which might coincide), all of which are at distance exactly $r_H$ from $O_H$, and in pairs bound sectors of $\,\Xi$ with span no more than $\pi$. 
We will show that  there is a $\delta > 0$, dependent on $\rho$, such that 
for at least one of these critical points, say $A_H$, the $\delta$-neighbourhood of $A_H$ (the set of robots at distance at most $\delta$ from $A_H$) is eventually empty. 


The critical observation is captured in a lemma that states that that if some robot $\,\bZ$ has 
neighbours at distance that is some sufficiently large fraction of $r_H$,  
then its motion cannot take it too close to $A_H$. 
From this it follows immediately that if $\,\bZ$ continues to have at least one such distant neighbour then $\,\bZ$ must become and stay well removed from $A_H$.

\begin{lemma}
\label{lem:separation}
If $\,V_{\bZ} \ge \zeta\cdot r_H$
then any $\xi$-rigid motion of $\,\bZ$ takes it to a point at distance at least\, 
$\left(\frac{\zeta}{80 (1+ 1/ \xi)^{1/2}}\right)^4 r_H$
from $A_H$.
\end{lemma}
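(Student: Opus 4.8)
The plan is to bound how close robot $\bZ$ can move to the critical point $A_H$ in a single activation, given that $\bZ$ has a reasonably distant neighbour (namely $V_{\bZ} \ge \zeta r_H$). The strategy is to combine three facts: (a) the target destination of $\bZ$ lies in the safe region(s) associated with its distant neighbours, and hence is constrained to move essentially \emph{towards} a distant neighbour, not away from one; (b) the distant neighbours, together with $A_H$, all lie inside the convex hull $H$, which is contained in the bounding circle $\Xi$ of radius $r_H$; and (c) geometry of circles: a chord of a circle of radius $r_H$ that stays away from $A_H$ forces $\bZ$'s direction of motion to have a nontrivial component away from $A_H$. The rigidity factor $\xi$ enters because a $\xi$-rigid motion realizes at least a $\xi$-fraction of the planned displacement, so even the worst-case truncated move has a guaranteed component of separation from $A_H$.

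First I would set up coordinates with $A_H$ at the origin and consider the worst case in which $\bZ$ is currently close to $A_H$, say at distance $\eta r_H$ for small $\eta$ (if $\bZ$ is already far from $A_H$ there is nothing to prove, since the claimed bound is at most a small constant times $r_H$). Since $\bZ$ plans a motion towards the center of the intersection of the safe regions of its distant neighbours, and each such safe region for a distant neighbour $\bX$ is a disk of radius $\alpha V_{\bZ}/8$ tangent to $\bZ$ in the direction of $\bX$, the planned displacement vector of $\bZ$ points into the convex cone spanned by the directions to its distant neighbours. The key sub-claim is that, because all robots (in particular $A_H$ and all distant neighbours of $\bZ$) lie within the circle $\Xi$ of radius $r_H$, and $\bZ$ has a neighbour at distance $\ge \zeta r_H$ which is therefore "deep" inside $\Xi$, any direction in this cone makes an angle bounded away from $\pi$ with the outward direction $\overrightarrow{A_H \bZ}$; quantitatively, the component of the planned displacement in the direction \emph{away} from $A_H$ is at least some explicit function of $\zeta$ and the current distance of $\bZ$ to $A_H$. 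I would extract this using the chord/inscribed-angle geometry of $\Xi$: a point at distance $\ge \zeta r_H$ from $\bZ$ but inside a circle of radius $r_H$ through (or near) a fixed point $A_H$ cannot lie in a thin wedge pointing back at $A_H$.

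Next I would propagate this through the $\xi$-rigid truncation: the realized move is at least a $\xi$-fraction of the planned move (in the same direction), so the realized displacement away from $A_H$ is at least $\xi$ times the planned one. Then one checks that starting from distance $d_0$ from $A_H$ and moving with a guaranteed outward component proportional to (a power of) $\zeta$ and to a power of $d_0/r_H$, the new distance $d_1$ from $A_H$ satisfies $d_1 \ge c(\zeta,\xi)\, r_H$ for the explicit constant $c(\zeta,\xi) = \left(\frac{\zeta}{80(1+1/\xi)^{1/2}}\right)^4$; the quartic and the $(1+1/\xi)^{1/2}$ arise from chaining the circle-geometry estimate (which is quadratic: $1-\sqrt{1-x^2/4}\approx x^2/8$) with itself and absorbing the rigidity loss. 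The constant $80$ is just slack to make the inequalities clean. The main obstacle, and the part requiring care rather than cleverness, is the geometric sub-claim bounding the angle of the planned-motion cone away from the direction towards $A_H$: one must handle the case where $\bZ$ has two distant neighbours defining the largest sector (so the target is the midpoint of the segment joining the two safe-region centers), and verify that even then the target cannot drift back towards $A_H$ faster than the bound allows — here the facts that the two neighbours subtend a sector of span $> \pi$ containing all distant neighbours, and that $A_H$ together with $\bZ$ and both neighbours all sit inside $\Xi$, are exactly what rule out a "pincer" that would pull $\bZ$ back towards $A_H$. Once that wedge estimate is in hand, the rest is bookkeeping with the cosine rule inside $\Xi$.
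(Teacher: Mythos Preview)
Your proposal has the right geometric intuition but contains a genuine gap, and the paper takes a cleaner contrapositive route that avoids it.

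The gap is your dismissal of the case where $\bZ$ starts far from $A_H$: you write ``if $\bZ$ is already far from $A_H$ there is nothing to prove.'' But the lemma is about the \emph{endpoint} of the motion, and a single move can have length on the order of $V_{\bZ}/8$, which may be comparable to $r_H$ and is in any case enormously larger than the claimed bound $d \sim \zeta^4 r_H$. So $\bZ$ can perfectly well start far from $A_H$ and end close to it; this case must be argued. Relatedly, your framing in terms of a guaranteed ``outward component away from $A_H$'' is not correct as stated: if $\bZ$ sits somewhere inside $\Xi$ and its distant neighbours happen to lie in the cap near $A_H$, the planned motion points \emph{towards} $A_H$. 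What actually prevents the endpoint from landing too close to $A_H$ in that scenario is not an outward drift but the fact that the relevant points (a safe-region centre, and hence either $\bZ$ or a distant neighbour $\bW$) would then be trapped in a small cap of $\Xi$, forcing $|ZW|$ and hence $V_{\bZ}$ to be small.

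The paper argues exactly this, contrapositively. Set $d$ equal to the claimed bound, suppose the realized endpoint lies within $d$ of $A_H$, and derive $V_{\bZ} < \zeta r_H$. The key object is the cap $R_d$ of $\Xi$ near $A_H$ cut off by the chord tangent to the disc of radius $d$ about $A_H$; its width satisfies $y_d < 2\sqrt{2 r_H d}$. If the planned target $T$ lies within $d$ of $A_H$, then (being the midpoint of two safe-region centres) some safe-region centre $C_W$ lies in $R_d$; since $C_W$ sits on $\overline{ZW}$ at distance $V_{\bZ}/8$ from $Z$, either $Z\in R_d$ or $W\in R_d$, and either way $V_{\bZ} = O(y_d) = O(\sqrt{r_H d})$. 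For $\xi$-rigid motion one iterates once more: if the actual endpoint $T'$ is within $d$ of $A_H$ and $Z\in R_d$, then $|ZT'|\le y_d$, hence $|ZT|\le y_d/\xi$ and $|A_H T|\le y_d(1+1/\xi)$, and one reruns the rigid bound at this larger radius. Chaining the two square-root estimates yields the quartic dependence and the $(1+1/\xi)^{1/2}$ factor; the constant $80$ is slack. No angular or ``outward component'' reasoning is needed.
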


\begin{proof}
Refer to Figure~\ref{fig:HullShrinkage}.left.
Let $d = \left(\frac{\zeta}{80 (1+ 1/ \xi)^{1/2}}\right)^4 r_H$, consider a circle $\Gamma_d(A_H)$ of radius $d$ centered at $A_H$.
Let $y_d$ be the length of the chord $PP'$ (shown in blue) tangent to $\Gamma_d(A_H)$ and normal to $A_H O_H$, and let $R_d$ denote the portion of $\,\Xi$, containing point $A_H$, between the arc $PP'$ and the chord $PP'$.  Note that $y_d = 2 \sqrt{r_H^2 - (r_H - d)^2} <  2 \sqrt{2 r_H d}$.

First suppose that the motion of $\,\bZ$ is $1$-rigid. 
If the target destination $T$ satisfies $|A_H T| \le d$
(red sector) then, since $T$ is the midpoint of safe region centers of two (or one) distant neighbours of $\,\bZ$, the safe region center $C_W$ associated with at least one extreme distant neighbour $\bW$ of $\,\bZ$ must lie in $R_d$.

As $C_W$ lies in $R_d$ and $C_W$ lies on $\overline{ZW}$, there can be two cases: either (i) $Z$ lies in $R_d$, or (ii) $W$ lies in $R_d$.
In the first case, we have $|Z C_W| < y_d$, $|ZW|\le 8|Z C_W|$, and thus $V_{\bZ} < 2 |Z W| < 16 y_d < 32 \sqrt{2 r_H d}$.
%
In the second case, if $W$ lies in $R_d$, then $|C_W W| < y_d$, and $|ZW|=V_{\bZ}/8+|C_W W|<V_{\bZ}/8+y_d$.
Thus, as $\bW$ is a distant neighbour of $\bZ$, $V_{\bZ} < 2 |ZW| < 2(V_{\bZ}/8+y_d)$, and hence $V_{\bZ}<8/3 y_d<16/3\sqrt{2r_H d}$.

If the motion of $\,\bZ$ is $\xi$-rigid its actual destination $T'$ satisfies $|Z T'| \ge \xi |Z T|$. Let $|A_H T'| \le d$.
Consider again the two cases: $Z$ lies in $R_d$, and $W$ lies in $R_d$.
%
In the second case, when $Z$ lies outside $R_d$, then $T$ must lie in $R_d$ (as $T'$ lies on $\overline{ZT}$). 
Hence $|A_H T| \le y_d$ and, by our earlier analysis (for $\xi=1$), $V_{\bZ} < 
32 \sqrt{2 r_H (2 \sqrt{2 r_H d})}$. 
On the other hand, if $Z$ lies in $R_d$ then $|Z T'| \le y_d$ and hence $|Z T| \le y_d/ \xi$.  
Thus, 
$|A_H T| \le |A_H Z| + |ZT| < y_d + y_d/ \xi = y_d(1 + 1/ \xi)$
and, by our earlier analysis,
\begin{align*}
V_{\bZ} & < 32 \sqrt{2 r_H y_d (1 + 1/ \xi)}
< 32 \sqrt{4 r_H \sqrt{2 r_H d} (1 + 1/ \xi)}\\
&<80 (1+ 1/ \xi)^{1/2} r_H^{3/4} d^{1/4}=80 (1+ 1/ \xi)^{1/2} (d/r_H)^{1/4}r_H = \zeta r_H \,. \qedhere
\end{align*}
\end{proof}
\noindent It follows directly from Lemma~\ref{lem:separation} that permanent separation from $A_H$ is contagious:

\begin{lemma}
\label{lem:separation2}
If robot $\,\bZ$ has a neighbour $\bX$ whose distance from $A_H$ remains at least $\mu r_H$ then 
$\,\bZ$ must itself become and remain at distance at least 
$\left(\frac{\mu}{240 (1+ 1/ \xi)^{1/2}}\right)^4 r_H$
from $A_H$.
\end{lemma}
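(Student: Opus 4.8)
The plan is to obtain this as a short consequence of Lemma~\ref{lem:separation}, using the permanent neighbour $\bX$ to keep $V_{\bZ}$ a fixed fraction of $r_H$ whenever $\bZ$ comes near $A_H$. Since $\bX$ is a strong neighbour of $\bZ$, after the strong‑neighbour graph has stabilized $\bX$ and $\bZ$ are permanently mutually visible, so each time $\bZ$ performs a \look we have $V_{\bZ} \ge |ZX|$; combining this with the triangle inequality and the hypothesis $|A_H X| \ge \mu r_H$ gives $V_{\bZ} \ge |A_H X| - |A_H Z| \ge \mu r_H - |A_H Z|$. In particular, whenever $\bZ$ observes itself within distance $\frac{2\mu}{3} r_H$ of $A_H$, it has $V_{\bZ} \ge \frac{\mu}{3} r_H$.

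With this in hand I would set $\zeta = \mu/3$ and $d = \bigl(\frac{\zeta}{80(1+1/\xi)^{1/2}}\bigr)^4 r_H = \bigl(\frac{\mu}{240(1+1/\xi)^{1/2}}\bigr)^4 r_H$, the quantity in the statement. Since $A_H$ and $\bX$ both lie inside $\,\Xi$ (of radius $r_H$), we have $\mu \le 2$, and as $(1+1/\xi)^{1/2}\ge\sqrt 2$ this yields the crude estimate $d < \frac{\mu}{240} r_H < \frac{2\mu}{3} r_H$. Hence whenever $\bZ$ is activated while lying within distance $d$ of $A_H$ — indeed, anywhere within distance $\frac{2\mu}{3} r_H$ — the hypothesis $V_{\bZ} \ge \zeta r_H$ of Lemma~\ref{lem:separation} is met, so the ensuing $\xi$‑rigid motion carries $\bZ$ to a point at distance at least $d$ from $A_H$. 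By activation fairness $\bZ$ is eventually so activated (unless it already lies permanently beyond distance $d$), which gives the \emph{become} part; here the routine work is only the check that $\zeta=\mu/3$ makes the constant in Lemma~\ref{lem:separation} equal to the $240$ in the statement.

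For the \emph{remain} part one argues that $\bZ$ never re‑enters the open ball $B$ of radius $d$ about $A_H$. A resting robot does not move, and any motion either begins with $V_{\bZ} \ge \zeta r_H$, in which case the reasoning above applies and places the endpoint (and, as noted below, the whole realized trajectory) outside $B$; or it begins with $\bZ$ at distance more than $\frac{2\mu}{3}r_H$ from $A_H$, in which case, its length being at most $V_{\bZ}/8 \le r_H/4$ (all robots stay inside $\,\Xi$, so $V_{\bZ}\le 2r_H$) and — as in the analysis of Lemma~\ref{lem:separation} — its direction being into the sector spanned by $\bZ$'s distant neighbours, hence away from the extreme point $A_H$, it cannot reach $B$ either. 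Thus once an activation has carried $\bZ$ to distance at least $d$, it stays there forever, and permanent separation from $A_H$ propagates from $\bX$ to $\bZ$.

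The step I expect to demand the most care is exactly this last one: excluding a transient excursion of $\bZ$ toward $A_H$ during the early (pre‑$\xi$‑fraction) part of a motion that starts close to $A_H$. Ruling this out requires slightly more than the bare endpoint claim of Lemma~\ref{lem:separation}; one must use the geometry underlying its proof — that a robot near the extreme point $A_H$ whose neighbourhood forces $V_{\bZ}$ to a non‑trivial fraction of $r_H$ has its target $T$ itself at distance more than $d$ from $A_H$ (the safe‑region centres defining $T$ cannot all huddle near $A_H$ without contradicting the lower bound on $V_{\bZ}$), so that distance from $A_H$ behaves essentially monotonically along the motion. Everything outside of this observation is either a triangle‑inequality estimate or the constant bookkeeping already indicated.
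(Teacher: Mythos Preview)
Your overall plan is the paper's: set $\zeta=\mu/3$, invoke Lemma~\ref{lem:separation} when $V_{\bZ}\ge(\mu/3)r_H$, and handle the complementary situation via the persistent neighbour $\bX$. The first case and the constant bookkeeping are fine. The gap is in your second case. You bound the motion length by $V_{\bZ}/8\le r_H/4$ using only $V_{\bZ}\le 2r_H$, and then lean on a directional claim that the motion is ``into the sector spanned by $\bZ$'s distant neighbours, hence away from the extreme point $A_H$''. That claim is not justified and is false in general: nothing prevents $\bZ$'s distant neighbours from lying between $\bZ$ and $A_H$, in which case the motion is \emph{toward} $A_H$. And without the directional claim, the bound $r_H/4$ is useless when $\mu$ is small, since $\tfrac{2\mu}{3}r_H - r_H/4$ can be negative.

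The repair is simpler than anything directional, and it is what the paper does. You already showed that the two cases $V_{\bZ}\ge(\mu/3)r_H$ and $|A_HZ|>\tfrac{2\mu}{3}r_H$ are jointly exhaustive; so in the second case you may assume the first fails, i.e.\ $V_{\bZ}<(\mu/3)r_H$. Then every neighbour of $\bZ$, hence the convex hull of its neighbours, hence its target, lies within $(\mu/3)r_H$ of its current position, so any motion leaves $\bZ$ at distance at least $\tfrac{2\mu}{3}r_H-\tfrac{\mu}{3}r_H=\tfrac{\mu}{3}r_H\gg d$ from $A_H$. No appeal to direction is needed. Your final paragraph's concern about pre-$\xi$-fraction excursions is a legitimate but secondary issue that the paper itself glosses over; the actual hole in your write-up is the second-case argument above.
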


\begin{proof}
By Lemma~\ref{lem:separation}, while $V_{\bZ} \ge (\mu/3) r_H$, $\,\bZ$ must remain at distance at least 
$\left(\frac{\mu}{240 (1+ 1/ \xi)^{1/2}}\right)^4 r_H$
from $A_H$.  On the other hand, if $V_{\bZ} < (\mu/3) r_H$ then, since $\bX$ is a neighbour of $\,\bZ$, $\,\bZ$ has distance at least $(2\mu/3) r_H$ from $A_H$. But any movement of $\,\bZ$ (taking it to a position within the convex hull of its neighbours) must leave it no further than $(\mu/3) r_H$ from its current position, and hence at least $(\mu/3) r_H$ from $A_H$. 
\end{proof}

\begin{figure}[tbp]
\centering
\includegraphics[page=1]{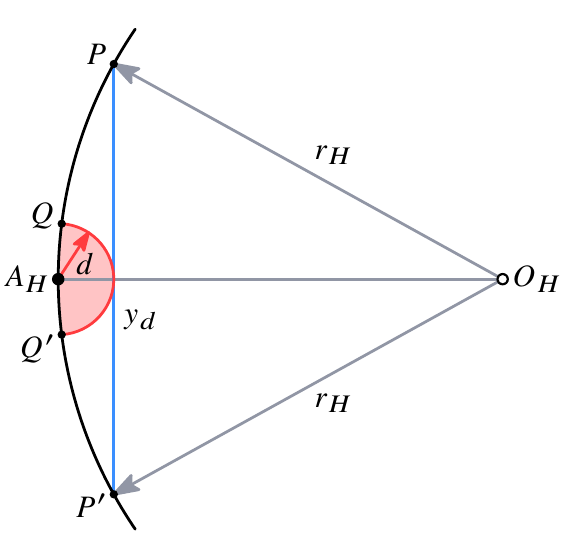}
\hfill
\includegraphics[page=2]{newfig/congregation-lemma.pdf}
\caption{Left: the radius of the red circle is $d$, and the length of the chord $\overline{PP'}$ (in blue) is $y_d$. Right: The length of the chord $\overline{QQ'}$ (in green) is $2z_d$, and the distance from $A_H$ to $\overline{QQ'}$ is $x_d$.
}
\label{fig:HullShrinkage}
\end{figure}

Suppose now that we have chosen a time $t$ after which
the strong neighbour graph no longer changes and the hull radius lies in the range $[\rho, 3\rho/2)$. Consider robots in $\Gamma_\delta(A_H)$, the $\delta$-neighbourhood of vertex $A_H$.
Consider some time $t'>t$ after which all robots that will eventually lie and remain outside of $\Gamma_\delta(A_H)$ have done so. Further, imagine that at time $t'$ one or more robots, including $\,\bZ$, lie inside $\Gamma_\delta(A_H)$. It will suffice to argue that, assuming motion is $\xi$-rigid,  $\,\bZ$ must move and remain outside $\Gamma_\delta(A_H)$. 


There are two cases to consider depending on the composition of 
the set of neighbours of $\,\bZ$.
\begin{enumerate}[(i)]
    \item All other robots are neighbours of $\,\bZ$.
    In this case, either (i)~at some subsequent point in time, either one of $\Gamma_\delta(A_H)$, $\Gamma_\delta(B_H)$ or $\Gamma_\delta(C_H)$ contains no robots (in which case nothing remains to be proved), or (ii)~%
$\,\bZ$ must continue to have a neighbour at distance at least 
$r_H - \delta$, that is $V_{\bZ} \ge r_H - \delta
\ge r_H/2$, provided that $\delta \le r_H/2$. Hence, by Lemma~\ref{lem:separation}, $\,\bZ$ too must move to, and remain at, a position at least 
$\left(\frac{1}{160 (1+ 1/ \xi)^{1/2}}\right)^4 r_H$
from $A_H$.
    
    \item $\,\bZ$ has at least one other robot that is not a strong neighbour. In this case, since the strong neighbour graph is connected, there must be a strong neighbour $\bX$ of $\,\bZ$ that has a strong neighbour $\bY$ that is not a strong neighbour of $\,\bZ$. 
Since $\bY$ and $\,\bZ$ are not strong neighbours their separation must continue to exceed $V/4$, and hence the distance from at least one of them to $\bX$ must continue to be at least $V/8$ which, by the connectivity of the strong neighbour graph, is at least $r_H/(8n)$.
Thus, 
by Lemma~\ref{lem:separation}, $\bX$ itself 
must move to, and remain at, a position at least 
$\left(\frac{1}{640n (1+ 1/ \xi)^{1/2}}\right)^4 r_H$
from $A_H$.
It follows, by  Lemma~\ref{lem:separation2}, that $\,\bZ$ too
must move to, and remain at, a position at least 
$\left(\frac{1}{640n (1+ 1/ \xi)^{1/2}}\right)^{16} \left(\frac{1}{240 (1+ 1/ \xi)^{1/2}}\right)^4 r_H$
from $A_H$.
\end{enumerate}

Hence, if $\delta < \left(\frac{1}{640n (1+ 1/ \xi)^{1/2}}\right)^{16} \left(\frac{1}{240 (1+ 1/ \xi)^{1/2}}\right)^4 r_H$, $\,\bZ$ must eventually, and permanently, vacate $\Gamma_{\delta}(A_H)$.
This contradicts our assumption that at time $t'$ all robots that will permanently vacate $\Gamma_{\delta}(A_H)$ have done so. Thus, we can assume that at time $t'$, and thereafter, $\Gamma_{\delta}(A_H)$ contains no robots.

This means, of course, that $CH_{t'}$, the convex hull of the robot positions at time $t'$, has no vertex closer than $\delta$ to $A_H$. It follows from this that the perimeter of $CH_{t'}$  is less than $\Lambda_t$, the perimeter of $CH_{t}$, by an amount $\lambda$ that is independent of $\varepsilon$.

\begin{lemma}
If the robot positions at time $t'$ are all outside of $\,\Gamma_d(A_H)$ then the perimeter of $\,CH_{t'}$  is at most  
$\Lambda_t - \frac{d^3}{4 r_H^2}$. 

\end{lemma}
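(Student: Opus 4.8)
\textit{Proof proposal.} The plan is to combine the hull-diminishing property $CH_{t'} \subseteq CH_t$ with the observation that, because every robot at time $t'$ avoids $\Gamma_d(A_H)$ \emph{and} lies in $CH_t \subseteq \overline{\Xi}$, the hull $CH_{t'}$ in fact avoids an entire circular cap of $\,\Xi$ adjacent to $A_H$; the ``lost'' cap then forces a quantifiable drop in perimeter via the triangle inequality. Concretely, I would place $O_H$ at the origin and $A_H$ at $(0, r_H)$, so $\,\Xi$ is $x^2+y^2=r_H^2$, and let $Q,Q'$ be the two intersection points of $\,\Xi$ with the circle $\partial\Gamma_d(A_H)$. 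A one-line computation puts the chord $\overline{QQ'}$ at height $y_Q = r_H - x_d$, with $x_d = d^2/(2r_H)$ (the distance from $A_H$ to $\overline{QQ'}$) and half-length $z_d = d\sqrt{1 - d^2/(4r_H^2)}$ (so $x_d^2 + z_d^2 = d^2$); these are the quantities labelled in Figure~\ref{fig:HullShrinkage}, right. The first key step is to note that the cap of $\overline{\Xi}$ lying above the line $\ell : y = y_Q$ is contained in $\Gamma_d(A_H)$: for any $(x,y)$ with $x^2+y^2\le r_H^2$ and $y\ge y_Q$ we have $x^2 + (y-r_H)^2 \le (r_H^2-y^2)+(y-r_H)^2 = 2r_H(r_H-y) \le 2r_H(r_H-y_Q) = d^2$. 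Hence every robot position at time $t'$ lies in $CH_t\cap\{y\le y_Q\} =: CH_t^{-}$, and therefore so does $CH_{t'}$.

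The second step converts this containment into a perimeter estimate. Since $CH_t$ contains $A_H$ strictly above $\ell$ and (because there are robots at time $t'$, which lie in $CH_{t'}\subseteq CH_t$) also contains points on or below $\ell$, the line $\ell$ crosses the interior of $CH_t$ and meets $\partial CH_t$ in exactly two points $s,s'$. Splitting $\partial CH_t$ at $s,s'$ into the sub-arc $\gamma_2$ above $\ell$ and $\gamma_1$ below, we have $\mathrm{perim}(CH_t) = |\gamma_1|+|\gamma_2|$ and $\mathrm{perim}(CH_t^{-}) = |\gamma_1|+|\overline{ss'}|$, so by monotonicity of perimeter under inclusion of convex bodies, $\mathrm{perim}(CH_{t'}) \le \mathrm{perim}(CH_t^{-}) = \Lambda_t - \bigl(|\gamma_2| - |\overline{ss'}|\bigr)$. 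It remains to show $|\gamma_2| - |\overline{ss'}| \ge d^3/(4r_H^2)$.

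For this I would use that $A_H$, being a robot position lying on $\,\Xi$, lies on $\partial CH_t$ (a point of the convex body $CH_t \subseteq \overline{\Xi}$ that lies on the bounding circle cannot be interior to $CH_t$), and lies above $\ell$, hence on $\gamma_2$; thus $|\gamma_2| \ge |\overline{sA_H}| + |\overline{A_H s'}|$. Writing $s=(a,y_Q)$, $s'=(b,y_Q)$ with $|a|,|b|\le z_d$ (since $s,s'\in\overline{\Xi}\cap\ell$), we get $|\overline{sA_H}| = \sqrt{a^2+x_d^2} \le \sqrt{z_d^2+x_d^2} = d$, so $|\overline{sA_H}| - |a| = x_d^2/(|\overline{sA_H}|+|a|) \ge x_d^2/(2d) = d^3/(8r_H^2)$, and symmetrically $|\overline{A_Hs'}|-|b| \ge d^3/(8r_H^2)$. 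Combining with $|\overline{ss'}| = |a-b| \le |a|+|b|$ yields $|\gamma_2|-|\overline{ss'}| \ge (|\overline{sA_H}|-|a|) + (|\overline{A_Hs'}|-|b|) \ge d^3/(4r_H^2)$, which gives the claimed bound $\mathrm{perim}(CH_{t'}) \le \Lambda_t - d^3/(4r_H^2)$.

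The only point needing a little care — the ``hard part'', such as it is — is establishing cleanly that $\ell$ genuinely cuts $CH_t$ into two pieces with $A_H$ in the upper one (handled above by the straddling argument using $A_H\in CH_t$ above $\ell$ and $\emptyset \ne CH_{t'}\subseteq CH_t$ below $\ell$) and that $A_H\in\partial CH_t$; everything else is elementary chord geometry of $\,\Xi$ together with the triangle inequality. Throughout one may assume $d\le r_H$ (say): otherwise $\Gamma_d(A_H)\supseteq\overline{\Xi}$ would contain all robot positions, contrary to the hypothesis.
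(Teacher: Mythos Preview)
Your proof is correct and follows essentially the same approach as the paper: both identify the chord $\overline{QQ'}$ with $x_d = d^2/(2r_H)$ and $x_d^2 + z_d^2 = d^2$, use that $A_H$ lies on $\partial CH_t$, and bound the perimeter drop from below via the triangle inequality through $A_H$, arriving at $x_d^2/(2d) = d^3/(8r_H^2)$ per side. Your write-up is in fact more careful than the paper's --- you make explicit the cap-containment step and handle the general positions of $s,s'$ on $\ell$ rather than silently assuming the worst case $s=Q$, $s'=Q'$ --- but the underlying argument is the same.
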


\begin{proof}
Since $A_H$ lies on the boundary of $CH_t$, and all robot positions at time $t'$ lie outside of $\Gamma_d(A_H)$, it follows that the perimeter of $CH_{t'}$ is at least $2(d-z_d)$ shorter than the perimeter of $CH_t$, where $2 z_d$ is the length of the chord $QQ'$ (see Figure~\ref{fig:HullShrinkage}.right).
Let $x_d$ be the distance from $A_H$ to $\overline{QQ'}$.
It is straightforward to confirm that
$d-z_d = x_d^2 / (d + z_d) > x_d^2 / (2d)$ and (by similar triangles) $x_d = d^2 / (2 r_H)$, and hence
$2(d-z_d) > d^3 /(4 r_H^2)$.
\end{proof}

As previously noted, this leads to a contradiction (of our assumption of non-convergence) if we choose $\varepsilon < \frac{d^3}{4 r_H^2}$ and $t > t_{\varepsilon}$.
Therefore the sequence of nested convex hulls converge to a point, and our algorithm solves {\sc Point Convergence} problem.



\section{Extensions/Generalizations}\label{sec:Extensions}


\subsection{Error tolerance}\label{subsec:error}
\begin{itemize}
    \item It is clear that any absolute error in distance measurements makes convergence to a point impossible, even in a fully synchronous scheduling environment. However our algorithm 
    can be modified to tolerate a modest amount of relative error in distance measurements
   The only concern with error in distance measurements 
   is that it might lead $V_{\bZ}$, the perceived distance to the closest neighbour of $\,\bZ$, to become an overestimate of $V$. If the error is bounded by some small fraction $\delta$ of the true distance, then this can be avoided by simply scaling the perceived distance by a factor of $1/(1 + \delta)$.
    
    \item In the absence of any global information on the initial configuration, even an arbitrarily small amount of absolute error in angle measurements makes convergence infeasible for any algorithm, even in fully synchronous scheduling environment. This follows by observing that the presence of such error would make it impossible to distinguish some non co-linear triple of points, separated by distance $V$, with a co-linear such triple. Given this, any algorithm would be forced to refrain from moving the midpoint of such a co-linear triple (at the risk of a catastrophic separation). This in turn would lead to completely frozen activity on a suitably collection of robots configured as a regular polygon with vertex separation $V$.
    
  
    \tab However, our algorithm can be modified to tolerate a modest amount of  a kind of relative error in angle measurements (that preserves sidedness with respect to lines through neighbouring points).

  \tab To make this concrete, consider the error in angle measurements that would arise from a small perturbation of a robot's local coordinate system: a \emph{symmetric distortion} of some reference coordinate system is a continuous bijection $\mu: [0, 2\pi]) \rightarrow [0, 2\pi)$ with $\mu(\theta + \pi) = \mu(\theta) + \pi$, for all $\theta \in [0, \pi)$.  
    Such a distortion has \emph{skew} bounded by $\lambda < 1$ if 
    $(1- \lambda)\xi
    \le \mu(\theta + \xi) - \mu(\theta)
    \le (1+ \lambda)\xi$, for all $\xi \in (0, \pi)$.
    
    \tab Observe that a symmetric distortion with skew $0$ is just a rotation or reflection of the local coordinate system.
    That a permissible distortion is a continuous bijection seems natural; that it should also be symmetric is also realistic in our context, since otherwise it is possible to construct (as above) a configuration with robots located at the vertices of a regular polygon with $V$ length edges, that fails to converge.
    
    \tab Our algorithm can be modified to tolerate a symmetric distortion of its local coordinate system with small skew. The intuition here is to reduce the safe region of robot $\bY$ with respect to a distant neighbour $\bX$ by making it instead the intersection of the safe regions that would be associated with all possible true locations of $\bX$. 
    
    \item 
   It should be clear that absolute error, measured as the deviation of the actual destination from the trajectory to the intended destination, cannot be tolerated by any convergence algorithm, even in the fully synchronous model. 
    In fact, even linear relative error (error that grows linearly with the distance travelled) cannot be tolerated (see Figure~\ref{newfig:MotionError3}).
    
    \begin{figure}[tbp]
\centering
\includegraphics{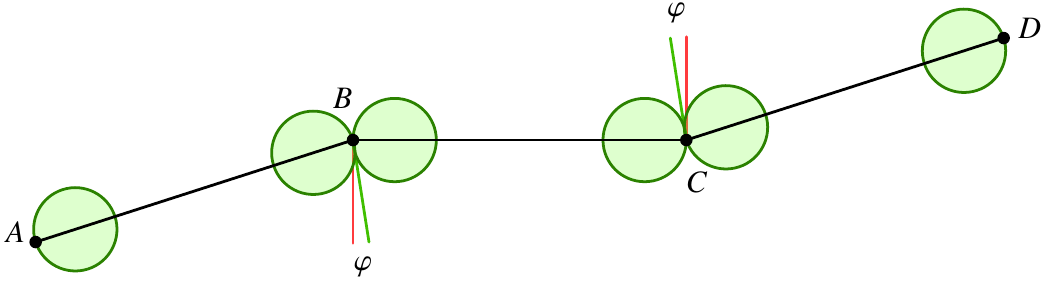}
\caption{If the relative motion error is greater than $\tan \varphi$ then $B$ can move to the left of the normal to $BC$ (red) and $C$ can move to the right, resulting in separation if $|BC|=V$.}
\label{newfig:MotionError3}
\end{figure}
    
   \tab However, our algorithm can be modified to tolerate relative motion error that grows quadratically (specifically error is $O(d^2/V)$, where $d$ is the distance travelled). To see this, recall that our algorithm chooses as its destination point the midpoint of the region formed by the intersection of the safe regions of a robot's extreme distant neighbours. If the angle formed by these neighbours is bounded away from $\pi$ then there is fixed faction of the intended trajectory that, even when followed with quadratic error, will remain within the intersection of the safe regions. 
   
   \item Our algorithm can clearly tolerate premature stop failures, provided these do not accumulate in a way that violates the activation fairness condition. In fact,  a single crash (or ``fail-stop'') fault, where a robot crashes and remains stationary thereafter, can also be tolerated. In this case the remaining robots converge to the location of the crashed robot.

\end{itemize}

\subsection{Visibility}
Our algorithm works correctly even if the visibility region is open (i.e. to be visible a neighbouring robot must be at distance strictly less than $V$). To see this, it suffices to observe that in this case, $V_{\bZ}$, the perceived distance to the furthest neighbour, is always strictly smaller than $V$. 

Furthermore, can be modified without difficulty to work correctly even if the visibility radius $V$ may differ for different robots, provided that (i) the initial mutual visibility graph is connected, and (ii) individual (unknown) visibility radii could differ by at most some small (known) constant factor.

In the event that the visibility radius $V$ exceeds the diameter of the initial configuration, it follows from the hull-diminishing property that all robots will remain mutually visible throughout the computation. Since our congregation argument continues to hold even under an \async scheduler, we conclude that our $1$-\async algorithm will guarantee convergence in this case, {\em without any need for multiplicity detection}.


\subsection{Further relaxations}

\subsubsection{General initial configuration.}
It is worth noting that if the initial configuration is not connected our algorithm guarantees that every connected subset of robots will converge to a single point. Furthermore, noting that robots in any component embedded in an interior face of the visibility graph associated with the initial configuration must ultimately converge to the same point as the robots defining that face, it is clear that connectivity of the initial configuration is not a necessary condition for our algorithm to solve {\sc Point Convergence}.

\subsubsection{Three (and higher) dimensions.}
Our algorithm admits a very natural generalization for robots configured in three (and higher) dimensions (as in~\cite{BCF20,YUKY17}). Specifically, safe regions are generalized to higher dimensional disks, with the same center and radius. While it seems clear that both our visibility preservation and incremental congregation arguments can be extended, the geometric details are more intricate. We leave the full resolution of these details to future work.
%
%
%

\section{Impossibility of visibility preservation in \async model, assuming modest error in perception, but no error in motion}\label{sec:ImpossibilityA}

In this section, we will show how to associate with any modestly error-tolerant control algorithm an initial connected configuration of robots, together with an adversarial scheduler in the \async model 
(in fact, the \nesta model) that produces a robot configuration with two or more linearly separable connected components in its associated visibility graph.
Specifically, we consider algorithms, like the one developed in preceding sections, and like those that we might reasonably expect to be used in practice, for robots whose perceptual apparatus is accurate up to 
(i) small relative error in 
distance measurements (the perceived distance $\tilde{d}$ to any neighbour of a robot is accurate only to within a constant $\delta>0$ times the true distance $d$), and
(ii) small relative errors in angle measurements
of the kind that arise from symmetric distortions of local coordinate systems, with bounded skew 
(the perceived angle $\tilde{\theta}$
between any two neighbours is accurate only to within some constant $0< \lambda <1$ times $\pi - \theta$, where $\theta$ is the true angle). 
The control algorithm is free to assume a fixed visibility threshold (assumed to be 1), and that motion is rigid and free of error. 

The first assumption allows us to specify 
an initial configuration of robots so that the true distance of neighbouring robots, in both this initial configuration and all subsequent configurations, is sufficiently close to the visibility threshold that neighbours could be perceived as always being exactly at that threshold. 
The second assumption allows us to conclude that, while the perceived sidedness of the observing robot with respect to the line joining any pair of its neighbours agrees with reality, there is sufficient uncertainty to guarantee that robots that are not yet co-linear with their neighbours must move.
(Otherwise, as we will show, it would be possible to construct a local configuration that likewise has no associated forced move, but which could be replicated into a global configuration that, for this reason, fails to converge).

It follows that the {\sc Cohesive Convergence} problem is not solvable in the \async 
or \nesta 
model. This in turn implies that the  {\sc Point Convergence} problem is not solvable in the \async model, using any hull-diminishing algorithm.
In fact, even if an algorithm satisfies the weaker condition of being \emph{coherent}, in the sense that it never prescribes a motion for a robot that will immediately disconnect the visibility graph induced on its neighbours, then it must fail to converge from the disconnected configuration forced by the adversary that we describe.

We conclude that the {\sc Point Convergence} problem illustrates a separation in the power of bounded vs. unbounded asynchrony,
for 
natural (hull-diminishing or coherent), 
modestly error-tolerant algorithms.
Note that, it seems unavoidable to insist on some measure of error tolerance in framing this result. In fact, a proof that something is impossible for error-free algorithms with rigid motion seems well out of reach.

\subsection{The initial configuration}

We describe the initial configuration in terms of a parameter 
$\psi >0$
that we will subsequently set to be sufficiently small. We take the visibility threshold $V$ to be $1$.

The initial robot configuration consists of three robots $\bX_A$, $\bX_C$ and $\bX_B$, located at positions $A= (0,0)$, $C=(-\frac{1}{\sqrt{2}},-\frac{1}{\sqrt{2}})$ 
and $B=(1,0)$ respectively, and $n-3$ additional robots, 
$\bX_1, \bX_2, \ldots \bX_{n-3}$, positioned at points $P_1, P_2, \ldots, P_{n-3}$ spaced at distance $1$, along a discrete spiral tail starting with the edge $AB$. 
The turn angle between the chord $\overline{A P_{i-1}}$ and the segment $\overline{P_{i-1} P_i}$, taking $P_0 = B$, is fixed at $\psi$, for all $i \ge 1$. 
(Figure~\ref{fig:Example}.left illustrates such a  configuration%
). 

The total number of robots, $n$, is chosen to be sufficiently large that the angle between the chords $\overline{A P_0}$ and $\overline{A P_{n-3}}$ is about $3 \pi /8$ (so the line through $A$ and $P_n$ bisects the angle $\angle(CAB)$). Of course, we need to argue that such an $n$ exists: for this we observe that even though the angle $\gamma_i$ between successive chords $\overline{A P_{i-1}}$ and $\overline{A P_i}$ decreases with $i$, the sum $\sum_i \gamma_i$ diverges.
To see this, let $d_i$ denote the length of the chord $\overline{A P_i}$. We know, from the cosine law, that 
$d_i^2 = d_{i-1}^2 +1 -2d_{i-1} \cos(\pi - \psi)$. 
Hence, when $\psi >0$,
\[
\begin{aligned}
    d_i - d_{i-1}  
        & = \frac{1 +2 d_{i-1} -2 d_{i-1}(1- \cos \psi)}
                {d_i + d_{i-1}}
        > \frac{1 +2 d_{i-1} -2 d_{i-1}(1- \cos \psi)}
                {1 +2 d_{i-1}}\\[1em]
        & = 1 - \frac{2 d_{i-1}(1- \cos \psi)}
                {1 +2 d_{i-1}}
        \ge 1 - \frac{2 d_{i-1}(\psi^2/2)}
                {1 +2 d_{i-1}}
        > 1 - \psi^2/2.
\end{aligned}
\]

Since $d_0 =1$, and $d_i - d_{i-1} \le 1$ (by the triangle inequality), it follows by induction on $i$ that
$i(1 - \psi^2/2) < d_i < i$, for $1 \le i \le n-3$.

It is easy to confirm that the perpendicular distance from $P_i$ to the line through $A$ and $P_{i-1}$ equals both $\sin \psi$ and $d_i \sin \gamma_i$. 
Hence $\gamma_i \ge \sin \gamma_i = \frac{\sin \psi}{d_i} \ge \frac{\sin \psi}{i}$. 
It follows that $\sum_{i=1}^{n-3} \gamma_i \ge \sin \psi \ln (n-3)$. 
Thus it will suffice to choose $n$ large enough that 
$\sin \psi \ln (n-3) \ge 3 \pi /8$, i.e.,  
$n \ge 3+ e^{\frac{3 \pi}{8 \sin \psi}}$.

\begin{figure}[t]
\centering
\includegraphics[page=1]{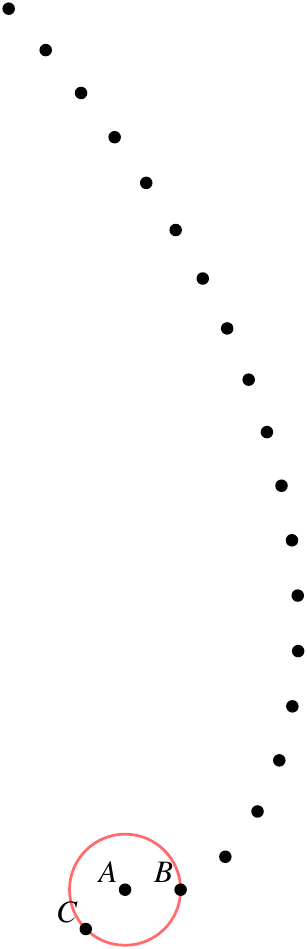}
\hfill
\includegraphics[page=2]{newfig/ex.pdf}
\hfill
\includegraphics[page=3]{newfig/ex.pdf}
\caption{Left: initial robot configuration; Center: approximate trajectories to final configuration; Right: typical intermediate configuration}
\label{fig:Example}
\end{figure}

\subsection{Adversarial scheduler strategy}

The strategy of the adversarial scheduler is to first activate the robot $\bX_A$ 
forcing it (as we will argue) to plan a move to a point $A'$ some non-zero distance $\zeta$ away from $A$ into the sector $CAB$ (by the symmetry of the local configuration, we can assume that $|A'C| \le |A'B|$; for ease of description, we will assume that these distances are equal).

Next, and before the robot $\bX_A$
actually begins its move, the scheduler begins a long sequence of activations of the robots $\bX_0, \ldots, \bX_{n-3}$ (where $\bX_0 = \bX_B$), with the goal of moving, for each $i$, $1 \le i \le n-3$, in succession, all of the robots $\bX_0, \bX_1, \ldots, \bX_{i-1}$ onto the chord $AP_i$, without changing their initial distance from $A$ by more than an amount that is $\Theta(\psi^2)$. If successful in this task then, by choosing $\psi$ to be sufficiently small,
all robots will follow trajectories that approximate circular arcs, centered at $A$, from their initial position to their final position on the chord $\overline{AP_n}$ (see Figure~\ref{fig:Example}.center and Figure~\ref{fig:Example}.right). 
Furthermore, if $\psi$ is chosen to be sufficiently small
relative to $\zeta$, the scheduler will have succeeded in breaking the visibility between the robots $\bX_A$ and $\bX_B$. 

The argument that such an activation sequence exists involves two things: (i) showing that specified robots must move, when activated, and (ii) showing that the motion must keep their distance from $A$ the same, up to an additive factor that accumulates to something that is $O(\psi^2)$. 

The space between successive chords defines what we call a sliver (see Figure~\ref{fig:sliver}). 
The transitions of robots from one chord to the next, flattening the associated sliver, involves the (essential) collapse of a succession of thin triangles, each of which involves the relocation of a robot to become essentially co-linear with its neighbours at almost distance one. 
By ``essentially co-linear'' we mean that the angle formed by the neighbours is in $(\pi - \psi/2^n, \pi]$. This slight slack permits termination of the collapse process and ensures that when the entire relocation is complete the sum of all turn angles on the chain is at most $n \psi/ 2^n$ and hence all of the robots lie arbitrarily close to the chord $A P_{n-3}$.

\begin{figure}[htbp]
\centering
\includegraphics{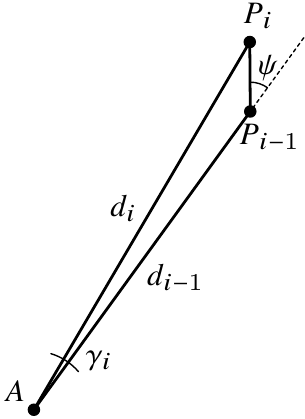}
\caption{Initial sliver.}
\label{fig:sliver}
\end{figure}

\subsubsection{Forced motion}

Imagine that a triple of robots is in a local configuration like that illustrated in Figure~\ref{newfig:TriangleCollapse}. Under what conditions can we assume that if activated the robot at position $Q$ must move? Our assumption about error in distance perception is such that if the distances between $Q$ and its two neighbours are both in the range $(1-\delta, 1]$, then they could both be perceived as being $1$.
Clearly, this would preclude the algorithm from choosing not to move the robot at $Q$ (when activated) if in addition the angle $\phi$ is of the form $2\pi / k$, for some integer $k$; otherwise the algorithm would fail to converge if started from a configuration with robots located at the vertices of a regular $k$-gon with unit length sides. Thus, if the error in angle perception is such that the angle $\phi$ could be perceived to have this special form, some motion of the robot at $Q$ is forced.

Our assumption about error in angle perception is such that the true turn angle $\phi$ could be perceived as anything in the range $[\phi(1-\lambda), \phi]$, for some positive constant $\lambda <1$. Since this range might not contain a special angle when $\phi$ is very small, we need a more versatile argument to see that motion can be forced. 
Fortunately, 
it suffices to observe that, 
if integer $M$ is greater than $\frac{4\pi}{\lambda \phi}$, then there must exist angles $\frac{2\pi i}{M}$ and $\frac{2\pi (i+1)}{M}$ both in the range $[\phi(1-\lambda), \phi]$. 
It must be the case that if the algorithm chooses not to move the robot  at point $Q$ (when activated) when the turn angle is perceived to be $\frac{2\pi i}{M}$ then it must move the robot at point $Q$ (when activated) when the turn angle is perceived to be $\frac{2\pi (i+1)}{M}$. 
Otherwise, we could construct another polygonal placement with $2M$ edges that alternate a 
$\frac{2\pi (i+1)}{M}$ counter-clockwise turn with a 
$\frac{2\pi i}{M}$ clockwise turn, from which the algorithm would fail to move, and hence fail to converge.

\subsubsection{Collapsing thin triangles}

\begin{figure}[htbp]
\centering
\includegraphics{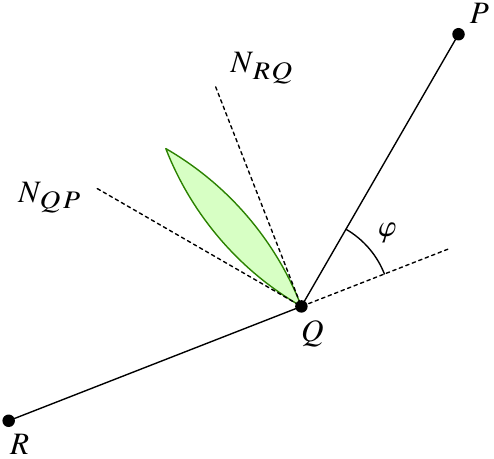}
\caption{}
\label{newfig:TriangleCollapse}
\end{figure}

It is not enough to know that motion can be forced by an adversary. We also need to demonstrate that (i) motion is {\em productive}, in the sense that it eventually leads to essential co-linearity
\footnote{To simplify the argument, we treat ignore the fact that ``essential co-linearity'' is not exactly the same  as co-linearity. This is justified by the fact that, as we will demonstrate, robots can be moved to become arbitrarily close to co-linear, and the extra work involved in carrying insignificant error terms would not make the argument stronger or more illuminating in any measurable way.}
with neighbours, and (ii) motion preserves distance to point $A$, up to some small additive error, proportional to angle $\phi$ and the distance moved.

Here again, it helps to consider a triple of robots is in a configuration like that illustrated in Figure~\ref{newfig:TriangleCollapse}, where the distances $|RQ|$ and $|QP|$ are both $1$, as perceived by the robot at $Q$. 
We know that the robot at location $Q$ (when activated) is forced to move, and its intended destination must lie within the green lens; otherwise, such a move would separate this robot from one of its neighbours, following which convergence would be impossible if this were the full initial configuration.


Any motion only leads to a configuration with smaller turn angle $\phi$. (This is true even if the motion takes $Q$ beyond the midpoint of the lens). 
Since the new turn angle $\phi'$ could continue to be perceived as being in the range
$[\phi(1-\lambda), \phi]$ as long as 
$\phi' > \phi/(1 + \lambda)$,
it follows that repeated activation of the same robot will keep it inside the lens and eventually bring it to a position $Q'$ of essential co-linearity with its neighbours, provided that the distances to the neighbouring robots continues to fall in the range $(1-\delta, 1]$. 

Any point $Q'$ of  co-linearity within the lens has distance $\ell$ at most $2 \sin (\phi/4)$, which is at most $\phi/2$, from 
$Q$. Furthermore, the distance from $Q'$ to both of the normals $N_{RQ}$ and $N_{QP}$ is at most $\ell \sin \phi$.
From this it follows that the distance from $Q'$ to any point at or beyond $R$ on the ray from $Q$ through $R$ (or any  point at or beyond $P$ on the ray from $Q$ through $P$) does not change by more than $\ell \sin \phi$,
which is at most $\phi^2/2$.

Thus, it suffices to assume that the distances to the neighbouring robots starts in the range
$(1-\delta + \phi^2/2, 1]$. Furthermore, once the robot has reached a position of essential co-linearity with its neighbours, its distance from point $A$, the base point of our global configuration, has not changed by more than $\phi^2/2$. 

\subsubsection{Flattening slivers}
A sliver (see Figure~\ref{fig:Collapse}.top\_left, for a generic example) is iteratively flattened by
first (i) moving the endpoint $Q$ of the first chord until it is (essentially) co-linear with its neighbours $R$ and $P$, collapsing the associated triangle (see Figure~\ref{fig:Collapse}.top\_right), as described in the preceding subsubsection,  then (ii) recursively flattening the newly created sliver $ARQ'$ (see Figure~\ref{fig:Collapse}.bottom\_left), and
finally (iii) continuing to flatten the semi-flattened sliver $AQ'P$ (see Figure~\ref{fig:Collapse}.bottom\_right). 

\begin{figure}[htbp]
\centering
\includegraphics[page=1]{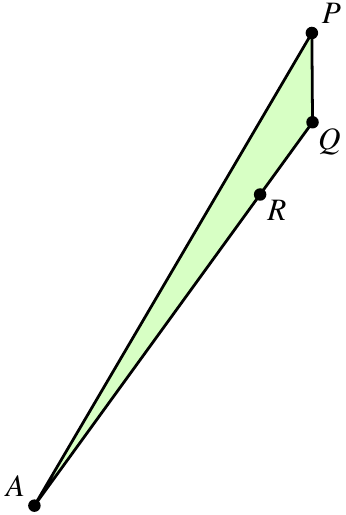}
\hfill
\includegraphics[page=3]{newfig/collapse.pdf}
\hfill
\includegraphics[page=4]{newfig/collapse.pdf}
\hfill
\includegraphics[page=2]{newfig/collapse.pdf}
\caption{Recursive collapse of a sliver.}
\label{fig:Collapse}
\end{figure}

It remains to argue that the change in the distance between the location of robot $\bX_j$ and $A$, even though it may accumulate through a long sequence of triangle collapses, does not sum to more than $4 \psi^2$. To see this we observe that:
\begin{enumerate}[(i)]
    \item all slivers formed by successive chords in the initial placement have turn angle $\psi$;
    
    \item the successively narrower slivers that arise in flattening any initial sliver have progressively smaller turn angles;
    
    \item the sliver that arises after the first triangle collapse (the blue sliver in Figure~\ref{fig:Collapse}.bottom\_left) has a turn angle that is exactly one half of that its parent sliver; and more generally the slivers that arise in the $t$-th level of recursive sliver flattening have a turn angle that is   a fraction $1/2^t$ of that of its parent sliver.
\end{enumerate}
It follows that the change in distance from $A$ associated with the motion of robot $\bX_j$ in the course of moving it from its location on chord $A P_{i-1}$ to its location on chord $A P_i$ is at most $\gamma_i d_j \frac{\psi}{2^{i-j-1}}$.  
Hence the total deviation from its initial distance $d_j$ is at most 
$\sum_{i>j} \gamma_i d_j \frac{\psi}{2^{i-j-1}} 
< \gamma_j d_j \psi \sum_{i>j} \frac{i}{2^{i-j-1}} 
< 2 \gamma_j d_j \psi 
< 4 \sin \gamma_j d_j \psi
= 4 \frac{\sin \psi}{j} d_j \psi
< 4 \psi^2$.

\section{Summary and Conclusion}\label{sec:Conclusion}

We have studied the well-known {\sc Point Convergence} problem for autonomous mobile robots with bounded visibility. 
Connectivity, realized through visibility, plays a fundamental role in all known algorithms for {\sc Point Convergence} in this setting:
it is typically assumed that the visibility graph of the initial robot configuration is connected, and 
a central algorithm design constraint is the preservation of visibility between pairs of robots.
Indeed all previous algorithms solve the apparently more restricted problem, which we call {\sc Cohesive Convergence}, that requires the visibility between all initially mutually visible robot pairs to be preserved indefinitely.

We propose a new algorithm for 
{\sc Cohesive Convergence} 
designed for robots operating within the $k$-\async scheduling model, for any constant $k$, a significantly more inclusive scheduling environment than that assumed by its predecessors. In addition, our algorithm makes comparatively weak assumptions, including non-rigid movements and no knowledge of the visibility radius $V$, about robot capabilities. 
Another significant distinguishing feature is our algorithm’s tolerance for 
limited imprecision in the distances and angles 
perceived by the robots, and limited relative error in the realization of intended motions.
Although formulated for robots moving in two dimensions, our algorithm has a very natural extension to three (and even higher) dimensions, with somewhat more involved correctness proofs.
As it happens, our algorithm, without modification, also solves {\sc Point Convergence} in the fully asynchronous (\async) scheduling model, when visibility is unbounded, without depending on the ability to detect multiplicities in robot locations.

We also establish a complementary result that {\sc Cohesive Convergence} cannot be solved in the fully asynchronous (\async) scheduling environment, assuming that robots are subject to very limited imprecision in measurements, even assuming exact rigid motion and knowledge of $V$. 
Hence,  {\sc Cohesive Convergence}  demonstrates a separation between the power of autonomous robots operating in a scheduling models with arbitrarily high but bounded asynchrony ($k$-\async), and those operating in the presence of unbounded asynchrony (\async).
As a special case, this also provides a positive answer to a long-standing open question: whether robots operating in a semi-synchronous (\ssync) scheduling environment  are strictly more powerful than those operating in a fully asynchronous (\async) scheduling environment.

Our results leave open several natural questions. Among these we note:
\begin{enumerate}
\item
Are there problems that serve to distinguish the power of robots operating in the $k$-\async and  \ssync scheduling environments?
\item
Is ({\sc Cohesive}) {\sc Point Convergence} solvable in the fully asynchronous (\async) scheduling environment if there is no error in perception or motion, or if error is restricted to some arbitrarily small relative quantity arising from motion alone?
\item
From our investigation and from the literature, it turns out to be natural to consider {\sc Cohesive Convergence} rather than simply {\sc Point Convergence}. 
However, it is not clear whether hull-diminishing or coherent algorithms, as natural as they may be, can actually be avoided. 
In general, it remains open to understand whether the resolution of {\sc Point Convergence} necessarily solves  {\sc Cohesive Convergence} as well.
\end{enumerate}

As future work, it would be interesting to extend our results to consider robots with non-zero extent, as in the \emph{fat} robots studied in~\cite{AGM13,CM15,DCGP09}. Other natural constraints worth investigating include occlusion, as in~\cite{BhGM18,BoKAS21}, and collision avoidance, as in
\cite{PPV15}.\ \\

\noindent {\bf Acknowledgments.} 
We would like to acknowledge the stimulating research  environment 
provided by the Bertinoro Workshop on Distributed Geometric Algorithms, held in August 2019, where
the  seeds of this paper were planted. In particular, we  gratefully acknowledge 
the  discussions with Pierre Fraigniaud at the beginning of this work.
This research project has been supported in part by the Natural Sciences and Engineering Research
Council (Canada) under  Discovery Grants A3583 and A2415, and by the Italian National Group for Scientific Computation GNCS-INdAM.

\bibliographystyle{plainurl}
\bibliography{refs.bib}
\end{document}